\theoremstyle{plain}
\newtheorem*{thmA}{Theorem A}
\newtheorem*{thmB}{Theorem B}
\newcommand{\N}{\mathbb{N}}
\newcommand{\R}{\mathbb{R}}
\newcommand{\Q}{\mathbb{Q}}
\newcommand{\calA}{\mathcal{A}}
\newcommand{\calG}{\mathcal{G}}
\newcommand{\calH}{\mathcal{H}}
\newcommand{\Diam}{\operatorname{Diam}}
\begin{document}

\author[A.~Block Gorman]{Alexi Block Gorman}
\address{Department of Mathematics, University of Illinois at Urbana-Champaign, Urbana}
\email{\{atb2,phierony,eakapla2,ruoyum2,erikw,zwang199,zxiong8,hyang87\}@illinois.edu}
\urladdr{http://faculty.math.illinois.edu/\textasciitilde{\normalfont\{}phierony,eakapla2,erikw{\normalfont\}}\texorpdfstring{\vspace*{-15mm}}{}}

\author[P.~Hieronymi]{Philipp Hieronymi}
\author[E.~Kaplan]{Elliot Kaplan}
\author[R.~Meng]{Ruoyu Meng}
\author[E.~Walsberg]{Erik Walsberg}
\author[Z.~Wang]{Zihe Wang}
\author[Z.~Xiong]{Ziqin Xiong}
\author[H.~Yang]{Hongru Yang}

\renewcommand{\shortauthors}{A.~Block Gorman et al.}

\keywords{Regular functions, B\"uchi automata, regular real analysis, differentiabilty, fractals, GDFIS, PSPACE}
\title{Continuous Regular Functions}

\begin{abstract}
Following Chaudhuri, Sankaranarayanan, and Vardi, we say that a function $f:[0,1] \to [0,1]$ is $r$-regular if there is a B\"{u}chi automaton that accepts precisely the set of base $r \in \mathbb{N}$ representations of elements of the graph of $f$.
We show that a continuous $r$-regular function $f$ is locally affine away from a nowhere dense, Lebesgue null, subset of $[0,1]$. As a corollary we establish that every differentiable $r$-regular function is affine. It follows that checking whether an $r$-regular function is differentiable is in $\operatorname{PSPACE}$. Our proofs rely crucially on connections between automata theory and metric geometry developed by Charlier, Leroy, and Rigo.
\end{abstract}

\maketitle

\section{Introduction}\label{s:intro}
The study of regular real analysis was proposed by  Chaudhuri, Sankaranarayanan, and Vardi in~\cite{CSV} as the analysis of real functions whose graphs are encoded by automata on infinite words. The motivation for such an investigation is to produce automata-theoretic decision procedures deciding continuity and other properties of regular functions. As part of that study in~\cite{CSV}, the authors  establish novel topological-geometric results about regular functions, in particular Lipschitzness and H\"older-continuity. This investigation is part of a larger enterprise of studying subsets of $\R^n$ described in terms of such automata, and has seen various applications in the verification of systems with unbounded mixed variables taking integer or real values (see Boigelot et al.~\cite{BBR,BJW,BRW}). Recently, there has been a new focus on the study of topological and metric-geometrical properties of such sets (see Adamczewski and Bell~\cite{AdamB} and Charlier, Leroy, and Rigo~\cite{CLR}). We continue the investigation by  giving answers to two questions raised in~\cite{CSV} in the case of regular functions with equal input and output base:
\begin{enumerate}
\item Is there a simple characterization of continuous or differentiable regular functions?
\item Is it decidable to check whether a regular function is differentiable?
\end{enumerate}

\noindent Let $r,s \geq 1$ be positive integers, set $\Sigma_r=\{0,\dots, r-1\}$ and $\Sigma_s = \{0,\dots,s-1\}$. A function $f: [0,1]\to [0,1]$ is \textbf{$(r,s)$-regular} if there is a non-deterministic B\"uchi automaton in the language $\Sigma_r \times \Sigma_s$ accepting precisely all words $(w_{1,1},w_{2,1})(w_{1,2},w_{2,2})\dots \in {(\Sigma_r \times \Sigma_s)}^{\omega}$ for which there are $a,b \in [0,1]$ such that $0.w_{1,1}w_{1,2}\dots$ is a base $r$ expansion of $a$, $0.w_{2,1}w_{2,2}\dots$ is a base $s$ expansion of $b$, and $f(a)=b$. If $r=s$, we simply say $f$ is $r$-regular.

\medskip
An instructional example of a continuous $3$-regular function is the function that maps $x\in [0,1]$ to the distance between $x$ and the classical middle-thirds Cantor set; see Figure~\ref{fig:modcantor}. Observe that this function is locally affine away from a nowhere dense set. The main technical result of this paper shows that this holds for all continuous $r$-regular functions. We say that a function $g : I \to \mathbb{R}$ on an interval $I$ is $\mathbb{Q}$-\textbf{affine} if it is of the form $g(t) = \alpha t + \beta$ for some $\alpha, \beta \in \mathbb{Q}$.

\begin{thmA}\label{thm:main}
Let $f : [0,1] \to [0,1]$ be a continuous, $r$-regular function. Then there is a nowhere dense Lebesgue null set $C \subseteq [0,1]$ such that $f$ is locally $\mathbb{Q}$-affine away from $C$.
\end{thmA}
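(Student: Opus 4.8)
The plan is to combine the automaton--to--metric-geometry dictionary of~\cite{CLR} with a blow-up argument at a point of differentiability, working throughout with the ``block'' structure obtained by zooming into $r$-adic subintervals. For an $r$-adic interval $I=[c,c+r^{-n}]$ write $g_I\colon[0,1]\to\R$ for the rescaled restriction $g_I(t)=r^n f(r^{-n}t+c)-m_I$, where $m_I\in\Z$ is chosen so that $\inf g_I\in[0,1)$. Since $f$ is continuous and $r$-regular it is Lipschitz (among the geometric results of~\cite{CSV}), so every $g_I$ has range inside a fixed bounded interval; together with the fact that the graph $\Gamma(f)$ is $r$-recognizable and the underlying automaton is finite, this shows — as in~\cite{CLR} — that the set $\{g_I : I \text{ an } r\text{-adic interval}\}$ is \emph{finite}. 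Call its members \textbf{blocks}. A closed-graph argument (the graph of a block is a rescaled, hence closed, subset of $\Gamma(f)$, and it is a genuine function graph because the automaton state reached after reading the common $x$-prefix is co-reachable) shows each block is continuous, and the $k$-th $r$-adic child of a block is again a block, determined by the block and $k\in\Sigma_r$; so the blocks form a finite deterministic automaton $\mathcal{B}$ over $\Sigma_r$ with start state $g_{[0,1]}=f$. Finally, a child of a $\Q$-affine block is $\Q$-affine (rescaling by $r\in\N$ and shifting $y$ by an integer preserves $\Q$-affineness), so the set $A$ of $\Q$-affine blocks is absorbing in $\mathcal{B}$.

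The crux is: \emph{every block has a $\Q$-affine descendant in $\mathcal{B}$} (equivalently, $f$ is $\Q$-affine on some $r$-adic subinterval, and likewise for each block). Let $h$ be a block; it is continuous, $r$-regular, hence Lipschitz, hence differentiable at some $t_0\in(0,1)$, with $d:=h'(t_0)$. For each $n$ let $I_n\ni t_0$ be the $r$-adic interval of length $r^{-n}$, let $h_n:=g_{I_n}$ (a descendant of $h$ in $\mathcal{B}$), $a_n=\inf I_n$, $m_n\in\Z$ the normalizing integer, and $p_n:=r^n(t_0-a_n)\in[0,1)$, so that $h(r^{-n}u+a_n)=r^{-n}(h_n(u)+m_n)$ for all $u\in[0,1]$ and $t_0+r^{-n}(u-p_n)=r^{-n}u+a_n$. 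Subtracting this identity at $u$ and at $p_n$, multiplying by $r^n$, and using differentiability at $t_0$ together with $|u-p_n|\le 1$ gives
\[
h_n(u)-h_n(p_n)\;=\;r^n\bigl(h(t_0+r^{-n}(u-p_n))-h(t_0)\bigr)\;=\;d(u-p_n)+o(1),
\]
with the $o(1)$ uniform in $u\in[0,1]$ as $n\to\infty$. Since the $h_n$ lie in the finite set of blocks, pass to a subsequence along which $h_n$ equals a single block $H$ and $p_n\to p\in[0,1]$; letting $n\to\infty$ and using continuity of $H$ yields $H(u)=H(p)+d(u-p)$ for all $u\in[0,1]$, so $H$ is a globally affine block with slope $d$. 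It remains to see $H$ is $\Q$-affine: writing $H(t)=dt+c$, the block of $H$ over $[\ell r^{-n},(\ell+1)r^{-n}]$ is $u\mapsto du+\{d\ell+r^n c\}$, and since $H$ has only finitely many blocks the numbers $\{d\ell+r^n c\}$ ($n\in\N$, $0\le\ell<r^n$) form a finite set; taking $\ell=0$ forces the base-$r$ expansion of $c$ to be eventually periodic (so $c\in\Q$), and then finiteness forces $\{d\ell:\ell\in\N\}$ to be finite modulo $1$ (so $d\in\Q$). Hence $H\in A$ is a descendant of $h$, proving the claim.

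Now set $C:=\{x\in[0,1] : f\text{ is not locally }\Q\text{-affine at }x\}$. Then $f$ is locally $\Q$-affine away from $C$ by definition, and $C$ is closed since local $\Q$-affineness at a point persists on a neighbourhood. For nowhere density, suppose $(a,b)\subseteq C$ and pick an $r$-adic interval $I\subseteq(a,b)$; since the affine rescaling relating $f|_I$ to $g_I$ has rational coefficients, $g_I$ is not locally $\Q$-affine at any interior point of $[0,1]$, contradicting the crux applied to the continuous $r$-regular function $g_I$ (which gives an $r$-adic subinterval on which $g_I$ is $\Q$-affine). For Lebesgue nullity, by the crux and finiteness of $\mathcal{B}$ there is $k$ such that every block reaches $A$ within $k$ steps; as $A$ is absorbing, at least an $r^{-k}$-fraction of the depth-$k$ descendants of any block lie in $A$. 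If $x$ is not $r$-adic rational then $x\notin C$ iff $g_{I_n(x)}\in A$ for some $n$ (then $x$ is interior to $I_n$ and $f|_{I_n}$ is $\Q$-affine), so $C$ minus the countable set of $r$-adic rationals is contained in $\{x : g_{I_n(x)}\notin A\text{ for all }n\}$, which has measure at most $\lim_{m\to\infty}(1-r^{-k})^{\lfloor m/k\rfloor}=0$. Hence $C$ is nowhere dense and Lebesgue null.

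The main obstacle is the crux of the second paragraph: ruling out an ``$r$-adically nowhere-affine'' continuous regular function. Steps 1 and 3 are essentially bookkeeping with the block automaton furnished by~\cite{CLR} and elementary category/measure estimates; the only substantive external input is that continuous $r$-regular functions are differentiable somewhere, for which we invoke their Lipschitzness from~\cite{CSV}. The delicate points to get right in a full write-up are the precise verification that the blocks form a finite automaton and that each block is a continuous function (handling $r$-adic representations and co-reachability carefully), and the uniformity of the error term in the blow-up identity.
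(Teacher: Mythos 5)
Your argument is correct in outline, but it takes a genuinely different route from the one in the paper. The paper first normalizes the automaton (trim, closed, weak, deterministic, ``full''), views $\Gamma(f)$ as part of the attractor of the GDIFS $\calG_{\calA}$ via Fact~\ref{fact:thmclr}, and proves the strongly connected case (Theorem~\ref{prop:affine}) by a Hutchinson-style computation showing $\calH^1(K_p)=\Diam(K_p)$, with Lemma~\ref{disjoint} controlling overlaps and Lemma~\ref{enoughpaths} supplying the counting estimate; the general case is then handled by analyzing sinks of a full automaton (Lemma~\ref{cor:connectedcomp}, Lemma~\ref{lem:scaffine}, Proposition~\ref{prop:localaffine}), and nullity of the exceptional set comes from porosity of nowhere dense recognizable sets (Lemma~\ref{lem:porous}, Proposition~\ref{prop:porous}). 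You instead use only two external inputs --- Lipschitzness from Fact~\ref{fact:csv} and finiteness of the $r$-kernel from Fact~\ref{fact:similar} --- to build the finite ``block automaton'', and you replace the Hausdorff-measure argument by a blow-up at a point of differentiability (Rademacher/Lebesgue for Lipschitz functions) together with pigeonholing on the finitely many blocks; rationality of the limiting affine block is re-derived from eventual periodicity (this step could simply cite Proposition~\ref{prop:qaffine}), and nullity comes from an absorbing-state estimate of Borel--Cantelli type rather than porosity. What each approach buys: yours is more elementary on the metric-geometry side (no GDIFS attractors, no $\calH^1$ computation, no porosity) and avoids the weak/deterministic/full automaton normalization entirely, which makes it a clean self-contained proof of Theorem A as stated; the paper's route is heavier but yields the stronger Theorem~\ref{nowhere} --- finitely many pairwise disjoint \emph{$r$-recognizable open} sets $U_i$ with rational slopes $\alpha_i$ --- and the strongly connected Theorem~\ref{prop:affine}, both of which are what feed into Theorem B and the $\operatorname{PSPACE}$ bound (your finitely many block slopes give the finite slope set, but not directly the recognizable decomposition). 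In a full write-up the points you flag are indeed the ones needing care: finiteness of the block set (each block graph is a bounded vertical stack of $r$-kernel elements, using the Lipschitz constant), choosing the differentiability point $t_0$ off the $r$-adic rationals so that the intervals $I_n(t_0)$ are unique, and the uniformity of the $o(1)$ term, all of which go through.
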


\noindent We know that Theorem A fails for $(r,s)$-regular functions when $r> s$.

\begin{thmB}
A differentiable $r$-regular function is $\Q$-affine. Given a B\"uchi automaton that recognizes an $r$-regular function $f:[0,1]\to [0,1]$, the problem of checking whether $f$ is differentiable, is in $\operatorname{PSPACE}$.
\end{thmB}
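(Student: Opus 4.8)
The plan is to derive both assertions of Theorem~B from Theorem~A.

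\emph{Differentiable regular functions are $\Q$-affine.} Suppose $f:[0,1]\to[0,1]$ is $r$-regular and differentiable; then $f$ is continuous, so Theorem~A gives a nowhere dense, Lebesgue null set $C$ with $f$ locally $\Q$-affine on $U:=[0,1]\setminus C$. Let $g:=f'$, defined on all of $[0,1]$. On each connected component of the dense open set $U$ the function $f$ is $\Q$-affine, so $g$ is locally constant there with rational values, and $g$, being a derivative, has the intermediate value property. It suffices to show $g$ is constant, for then $f(t)=g\,t+f(0)$ and $g$ is the rational slope of $f$ on any component of $U$. To prove $g$ constant I would use the graph-directed self-similar structure of the graph of $f$ and of $C$ that underlies Theorem~A (the metric-geometric description of $\omega$-regular subsets of $[0,1]^2$ of Charlier--Leroy--Rigo). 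From that structure one should extract two facts: the affine pieces of $f$ on the components of $U$ have slopes lying in a single \emph{finite} set $S\subseteq\Q$; and $C$ is uniformly porous, i.e.\ there is $c>0$ such that every $x_0\in C$ and every $\delta>0$ admit a component $(a,b)$ of $U$ with $(a,b)\subseteq(x_0-\delta,x_0+\delta)$ and $b-a\ge c\delta$. Granting these, fix $x_0\in C$ and choose such components at scales $\delta\downarrow 0$: on each, $f$ is affine with some slope $\alpha\in S$, and since the two secant slopes of $f$ joining $(x_0,f(x_0))$ to the endpoints of the component both tend to $f'(x_0)$, a short estimate gives $|\alpha-f'(x_0)|\to0$; finiteness of $S$ then forces $f'(x_0)\in S$. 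Hence $g=f'$ has finite range, and a function on an interval with the intermediate value property and finite range is constant.

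I expect the main obstacle to be exactly the extraction of those two structural facts from the automaton. Real analysis alone does not suffice: there are everywhere differentiable, continuous functions that are locally $\Q$-affine off a nowhere dense null set yet not affine --- for instance $x\mapsto\int_0^x c(t)\,dt$, with $c$ the Cantor function --- so Theorem~A's conclusion cannot by itself force $\Q$-affineness, and one must genuinely use that such a function is not $r$-regular, i.e.\ invoke the automata--metric-geometry correspondence driving Theorem~A's proof.

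\emph{Complexity.} For an $r$-regular $f$, the conditions ``$f$ differentiable'', ``$f$ $\Q$-affine'', ``$f$ affine'', and ``$f(x)+f(z)=2f(y)$ whenever $x+z=2y$'' are equivalent: the first gives the second by the preceding part; $\Q$-affine $\Rightarrow$ affine $\Rightarrow$ differentiable is trivial; affine $\Rightarrow$ $\Q$-affine by Theorem~A (an affine $f$ is continuous, hence locally $\Q$-affine somewhere, so has rational slope); and affine $\Leftrightarrow$ the midpoint identity because $f$ is bounded, whence by the classical theory of Jensen's equation a bounded midpoint-affine function is affine. Given a B\"uchi automaton $\mathcal{A}$ for $f$, I would build a B\"uchi automaton $\mathcal{B}$ over $(\Sigma_r\times\Sigma_r)^{3}$ that reads base-$r$ expansions of $x,f(x)$, $y,f(y)$, $z,f(z)$ and, using three copies of $\mathcal{A}$ together with the fixed finite-state automaton for base-$r$ addition of reals in $[0,1]$ (and its complement), accepts precisely when all three pairs lie on the graph of $f$, $x+z=2y$, and $f(x)+f(z)\neq 2f(y)$; base-$r$ addition is $\omega$-regular despite non-uniqueness of expansions, the carry sequence being guessed nondeterministically. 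Then $f$ is differentiable iff $L(\mathcal{B})=\emptyset$. Since $\mathcal{B}$ has size polynomial in $|\mathcal{A}|$ and emptiness of a B\"uchi automaton is decidable in (nondeterministic logarithmic, hence polynomial) space, the problem is in $\operatorname{PSPACE}$; alternatively one may read the candidate slope $\alpha=f(1)-f(0)$ and intercept $\beta=f(0)$ (rationals with denominators $r^{O(|\mathcal{A}|)}$) off $\mathcal{A}$, build an automaton for the graph of $t\mapsto\alpha t+\beta$, and test B\"uchi-equivalence with $\mathcal{A}$, which is in $\operatorname{PSPACE}$. The real work is the first part; the complexity statement is then comparatively routine automata theory.
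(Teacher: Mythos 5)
Your proposal is essentially correct, but it proves Theorem B by a genuinely different route than the paper, and the two ``structural facts'' you flag as the main obstacle are exactly what the paper supplies. The paper does not derive Theorem B from Theorem A as literally stated (your Cantor-function-integral example correctly shows that would be impossible); it proves a strengthening (Theorem~\ref{nowhere}): there are finitely many disjoint open $r$-recognizable sets $U_1,\dots,U_m$ with Lebesgue null complement on whose components $f$ is affine with slopes from a \emph{finite} rational set $S$. Since $[0,1]\setminus\bigcup_i U_i$ is closed, null and $r$-recognizable, it is nowhere dense and hence porous by Lemma~\ref{lem:porous} --- precisely your uniform porosity. Granted these two inputs, your argument (at every point find, at every scale $\delta$, an affine piece of length $\geq c\delta$ with slope in $S$; the secant estimate forces $f'(x)\in S$ everywhere; Darboux plus finite range gives $f'$ constant) is valid and is in fact more modular than the paper's proof of Theorem~\ref{diff}, which instead fixes $x$ with $f'(x)\notin S$, follows the acceptance run of $(x,f(x))$, and carries out an explicit limit computation with the quotients $H(\beta_i,y)$ along scales $r^{-\beta_i}$, using not just the statement but the proof of Theorem~\ref{nowhere}. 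On the complexity side you also diverge: the paper reduces the midpoint identity to \emph{universality} of a product automaton (PSPACE-complete by Sistla--Vardi--Wolper), whereas you build a fixed-size automaton for the negated addition relation and reduce to \emph{emptiness} of a polynomial-size ``violation'' automaton, which is sound (complementation is only applied to a constant-size automaton independent of $\calA$) and in fact yields a polynomial-time bound, a fortiori $\operatorname{PSPACE}$; your appeal to boundedness and Jensen's equation also handles possibly discontinuous $f$ more carefully than the paper's Lemma~\ref{lem:average}, which assumes continuity. The one thing to make explicit is the rationality of the intercept (e.g.\ $f(0)\in\Q$ by Lemma~\ref{lem:affine0}), so that affine indeed upgrades to $\Q$-affine.
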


\noindent We do not know whether Theorem B also fails in general, nor whether Question (2) has a positive answer in general. Furthermore, Theorem A gives an affirmative answer to a special case of an important question in mathematical logic about continuous definable functions in first-order expansions of the ordered additive group of real numbers. We discuss this question in Section~\ref{section:application}.

\noindent Theorem B improves several results in the literature. Let $f: [0,1]\to [0,1]$ be $r$-recognizable. In~\cite{HW-continuous} a model-theoretic argument was used to show that $f$ is affine whenever $f$ is continuously differentiable. Similar results had been proven before for $C^2$ functions, or for more restricted classes of automata, by Anashin~\cite{Anashin-auto}, Kone\v{c}n\'y~\cite{Konecny-auto}, and Muller~\cite{Muller-auto}. In unpublished work, Bhaskar, Moss, and Sprunger~\cite{BMS} also showed that any differentiable function computable by a letter-to-letter transducer is affine.

\medskip
Our proof of Theorem A depends on three ingredients. The first is the theorem, proven in~\cite{CSV}, that every continuous $r$-regular function $[0,1] \to \mathbb{R}$ is Lipschitz.
The second ingredient is the main result from~\cite{CLR} which implies that the graph of an $r$-regular function is the attractor of a graph-directed iterative function system. This observation allows us to use the full power of metric geometry to study regular functions. In particular, we adjust to our setting the proof of the famous result of Hutchinson~\cite[Remark 3.4]{Hutch} that a Lipschitz curve in $\R^2$ which is the attractor of a linear iterated function system is a line segment.

\begin{figure}[t]\label{fig:modcantor}
\includegraphics[width=0.5\textwidth]{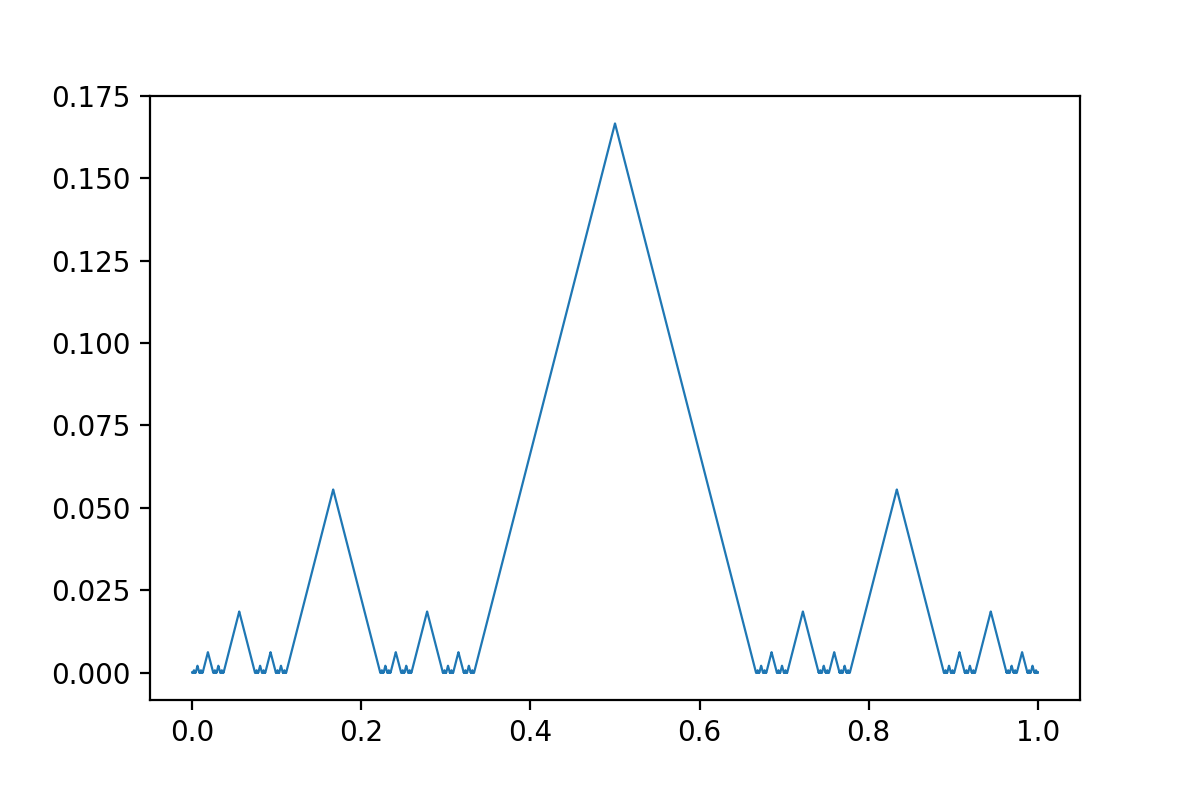}
\caption{The distance from the middle-thirds Cantor set function is $3$-regular.}
\end{figure}

\subsection*{Acknowledgments} This work was done in the research project ``Automata and Differentiable Functions'' at the Illinois Geometry Lab in Spring 2018. R.M., Z.W., Z.X. and H.Y. participated as undergraduate scholars, A.B.G. and E.K. served as graduate student team leaders, and P.H. and E.W. as faculty mentors. A.B.G. was supported by an NSF Graduate Research Fellowship. P.H. was partially supported by NSF grant DMS-1654725.

\section{Preliminaries}

Throughout, $i,j,k,\ell,m,n$ are used for natural numbers and $\delta,\varepsilon$ are used for positive real numbers. We denote the coordinate projection from $\R^2$ to $\R$ onto the first coordinate by $\pi$. Given a function $f$, we denote its graph by $\Gamma(f)$.  Given a (possibly infinite word) $w$ over an alphabet $\Sigma$, we write $w_i$ for the $i$-th letter of $w$, and $w|_n$ for $w_1\dots w_n$. We denote the set of infinite words over $\Sigma$ by $\Sigma^{\omega}$.

\subsection{B\"uchi Automata}
A \textbf{B\"uchi automaton (over an alphabet $\Sigma$)} is a quintuple $\calA = (Q,\Sigma,\Delta,I,F)$ where $Q$ is a finite set of states, $\Sigma$ is a finite alphabet, $\Delta\subseteq Q \times \Sigma \times Q$ is a transition relation, $I\subseteq Q$ is a set of initial states, and $F \subseteq Q$ is a set of accept states.
We fix a B\"uchi automaton $\calA = (Q,\Sigma,\Delta,I,F)$ for the remainder of this section.
We say that $\calA$ is \textbf{deterministic} if $I$ is a singleton and for all $q\in Q$ and all $\sigma\in \Sigma$, there is at most one $p \in Q$ such that $(q,\sigma,p) \in \Delta$.

\medskip
Throughout a \textbf{graph} is a directed multigraph; that is a set of vertices $V$ and a set of edges $E$, together with source and target maps $s,t : E \to V$. Vertices $u,v$ are connected by an edge $e$ if $s(e) = u$ and $t(e) = v$.
The graph underlying $\calA$ is $(Q, \Delta, s, t)$ where $s(p,\sigma,q) = p$ and $t(p,\sigma, q) = q$ for all $(p, \sigma, q) \in \Delta$.

\medskip
Let $p$ and $q$ be states in $Q$. A \textbf{path} of length $n$ from $p$ to $q$ is a sequence
\[
\left((p,\sigma_1,s_1),(s_1,\sigma_2,s_2),\ldots,(s_{n-1},\sigma_{n},q)\right) \in \Delta^{<\omega}.
\]
We call the word $\sigma_1\dots \sigma_{n}$ the label of the path. We denote by $P_{p,q}(n)\subseteq \Sigma^n$ the set of all labels of paths of length $n$ from $p$ to $q$. Note that  $P_{p,q}(0)$ is the set containing just the empty string if $p= q$, and $P_{p,q}$ is the empty set otherwise. 

\medskip
Let $\sigma$ be a finite or infinte word over alphabet $\Sigma$. A \textbf{run of $\sigma$ from $p$} is a finite or infinite sequence $s$ of states in $Q$ such that $s_0 = p$, $(s_n,\sigma_n,s_{n+1})\in \Delta$ for all $n<|\sigma|$
and, $|s|=|\sigma|+1$ if $|\sigma|$ is finite, and infinite otherwise. If $p \in I$, we say $s$ is a \textbf{run of $\sigma$}. Then $\sigma$ is \textbf{accepted by $\calA$} if there is a run $s_0s_1\ldots$ of $\sigma$ such that $\{n:s_n \in F\}$ is infinite.
We let $L(\calA)$ be the set of words accepted by $\calA$. 

\medskip Let $q \in Q$ be a state of $\calA$. We say that $q$ is \textbf{accessible} if there is a path from $I$ to $q$, and that $q$ is \textbf{co-accessible} if there is a path from $q$ to an accept state.
We say that $\calA$ is \textbf{trim} if all states are accessible and co-accessible. By removing all states that are not both accessible and co-accessible from $\calA$, we obtain a trim B\"uchi Automaton $\calA'$ that accepts exactly the same words as $\calA$.

\medskip
We say that $\calA$ is \textbf{closed} if
\[
\{q \in Q:\text{there is a nonempty path from }q\text{ to }q\}\subseteq F.
\]
The \textbf{closure} of $\calA$ is the B\"uchi automaton $\bar{\calA} = (Q,\Sigma,\Delta,I,Q)$; that is the automaton in which all states are changed to accept states. Note that the closure of $\calA$ is closed.

\medskip
Two elements of $p,q$ of $Q$ \textbf{lie in the same strongly connected component} if there is a path from $p$ to $q$ and vice versa. Being in the same strongly connected component is an equivalence relation on $Q$. We call each equivalence class a \textbf{strongly connected component} of $\calA$. We say that $\calA$ is \textbf{strongly connected} if it only has one strongly connected component.
The \textbf{condensation} of $\calA$ is the graph $H$ whose vertices are the strongly connected components of the underlying graph of $\calA$ such that two strongly connected components $C_1,C_2$ are connected by an edge if
\begin{itemize}
\item $C_1 \neq C_2$,
\item there are vertices $q_1 \in C_1, q_2 \in C_2$ and a path from $q_1$ to $q_2$.
\end{itemize}
A \textbf{sink} of $\calA$ is a strongly connected compontent of $\calA$ that has no outgoing edges in the condensation of $\calA$.

\medskip
We say that $\calA$ is \textbf{weak} if
 whenever $p,q \in Q$ belong to the same strongly connected component,
$p\in F$ if and only if $q\in F$. Observe that every closed automaton is weak. We will use the fact that whenever $\calA$ is weak, there is a weak deterministic automata $\calA'$ such that $L(\calA)=L(\calA')$ (see Boigelot, Jodogne, and Wolper~\cite[Section 5.1]{BJW}).

\subsection{Recognizable subsets of \texorpdfstring{${[0,1]}^n$}{[0,1]\textasciicircum{}n}}
Given an infinite word $w$ over $\Sigma_r$, we let
\[
v_r(w) := \sum_{i=1}^{\infty} \frac{w_{i-1}}{r^{i}} \in [0,1].
\]
Let $n \geq 1$ and $\vec{r}=(r_1,\dots,r_n)\in \N_{>1}^n$. Set $\Sigma_{\vec{r}}:=\Sigma_{r_1}\times \cdots \times \Sigma_{r_n}$.

\medskip
Given $(w_1,\ldots,w_n) \in \Sigma_{r_1}^{\omega}\times \cdots \times \Sigma_{r_n}^{\omega}$, we denote by $w_1\times \cdots \times w_n$ the word $u=u_1u_2\dots \in \Sigma_{\vec{r}}^{\omega}$ such that for each $i\in \N$
\[
u_i = (w_{1,i},\dots,w_{n,i}) \in \Sigma_{\vec{r}}.
\]
If $u=w_1\times \cdots \times w_n$, we set
\[
v_{\vec{r}}(u) := (v_{r_1}(w_1),\ldots,v_{r_n}(w_n)) \in {[0,1]}^n.
\]
A set $X \subseteq {[0,1]}^n$ is \textbf{$\vec{r}$-recognized} by a  B\"uchi automaton $\calA$ over $\Sigma_{\vec{r}}$ if
\[
L(\calA) = \{ u \in \Sigma_{\vec{r}}^{\omega} \ : \ v_{\vec{r}}(u) \in X\}.
\]
We say $X$ is \textbf{$\vec{r}$-recognizable} if $X$ is $\vec{r}$-recognized by some B\"uchi automaton over $\Sigma_{\vec{r}}$.
If $r = r_1 = \cdots = r_{n}$, then we use $r$-recognized and $r$-recognizable.

\medskip
Recognizable sets are introduced in~\cite{BBR} and their connection to first-order logic is studied in~\cite{BRW}.  By~\cite[Theorem 6]{BBR} $r$-recognizable sets are closed under boolean combinations and coordinate projections. In this paper we only consider bounded recognizable sets. However, the notion of recognizable sets can be extended to unbounded subsets of $\R^n$ as is done in~\cite{BBR} and~\cite{CSV}\footnote{Recognizable sets are called regular sets in~\cite{CSV}.}. The way how recognizability is defined on unbounded sets slightly differs between~\cite{BBR} and~\cite{CSV}, but fortunately the two definitions coincide for bounded sets.

\medskip Let $X$ be a closed subset of ${[0,1]}^n$.
The collection
\[
\left\{ r^k \left( \prod_{ i = 1 }^{ n } \left[ \frac{ m_i }{ r^k } , \frac{ m_i + 1 }{ r^k }\right] \cap X \right) - (m_1 , \ldots , m_n)   : k \in \mathbb{N}, 0 \leq m_1,\ldots, m_n \leq r^k \right\}
\]
of subsets of ${[0,1]}^n$ is known as the \textbf{$r$-kernel} of $X$.
We say that $X$ is \textbf{$r$-self similar} if the $r$-kernel of $X$ contains only finitely many distinct sets. This notion was introduced in~\cite{AdamB}.
Fact~\ref{fact:similar} gives a purely geometric characterization of closed $r$-recognizable sets.
While not crucial to the main argument of the paper, we will use this fact to show that the nowhere dense $C$ in the statement of Theorem A is Lebesgue null.

\begin{factC}[{\cite[Theorems 57 \& 62]{CLR}}]\label{fact:similar} Let $X$ be a closed subset of ${[0,1]}^n$. Then $X$ is $r$-recognizable if and only if $X$ is $r$-self similar.
\end{factC}

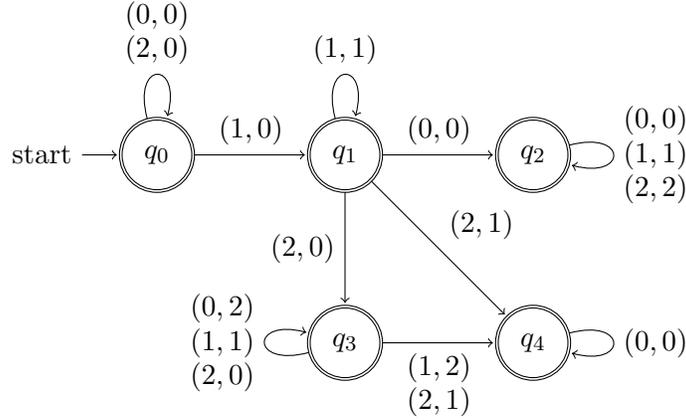
\begin{figure}[t]
\begin{tikzpicture}[node distance=2.5cm,on grid,auto, shorten > = 1pt]
	\node[state,initial,accepting](q0){$q_0$};
    \node[state,accepting](q1)[right=of q0]{$q_1$};
	\node[state,accepting](q2)[right=of q1]{$q_2$};
    \node[state,accepting](q3)[below=of q1]{$q_3$};
   \node[state,accepting](q4)[right=of q3]{$q_4$};
    \path[->]
    (q0) edge [loop above] node [align=center] {$(0,0)$\\$(2,0)$}()
   	edge node {$(1,0)$} (q1)
    (q1) edge [loop above] node [align=center] {$(1,1)$}()
    	edge node {$(0,0)$} (q2)
   	edge node {$(2,1)$} (q4)
    	edge node [swap] {$(2,0)$} (q3)
    (q2) edge [loop right] node [align=center] {$(0,0)$\\$(1,1)$\\$(2,2)$}()
   (q3) edge [loop left] node [align=center] {$(0,2)$\\$(1,1)$\\$(2,0)$}()
    	edge node [align=center,swap] {$(1,2)$\\$(2,1)$} (q4)
   (q4) edge [loop right] node [align=center] {$(0,0)$}();
\end{tikzpicture}
\caption{A B\"uchi automaton that $3$-recognizes the graph of $d$, the distance function from the middle-thirds Cantor set.}%
\label{figure:distancecantor}
\end{figure}

\noindent
We say that $X \subseteq {[0,1]}^n$ is \textbf{$\vec{r}$-accepted} by a B\"uchi automaton $\calA$ if
\[
X =  v_{\vec{r}}(L(\calA)), \text{ that is } X = \{ v_{\vec{r}}(u) \ : u \in \Sigma_{\vec{r}}^{\omega} \text{ is accepted by } \calA\}.
\]
Again, if $r=r_1 = \cdots = r_n$, we write $r$-accepted. It is clear that an $\vec{r}$-recognizable set is $\vec{r}$-acceptable. The converse is also true, but we will not apply this fact. We distinguish between $r$-recognizability and $r$-acceptance because we can prove stronger theorems about automata that $r$-accept a given set.

\medskip
Given a  B\"uchi automaton $\calA$ over $\Sigma_{\vec{r}}$, we set
\[
V_{\vec{r}}(\calA) := v_{\vec{r}}(L(\calA)).
\]
When $r=r_1=\cdots=r_n$, we will simply write $V_r(\calA)$.

\medskip
By~\cite[Lemma 58]{CLR} if $X\subseteq \R^n$ is $\vec{r}$-accepted by a closed and trim B\"uchi automaton, then $X$ is closed in the usual topology on $\R^n$.

\begin{factC}[{\cite[Remark 59]{CLR}}]\label{closed}
Let $X\subseteq \R^n$ be $\vec{r}$-accepted by a trim B\"uchi automaton $\calA$. Then the topological closure $\overline{X}$ of $X$ is $\vec{r}$-accepted by the closure $\bar{\calA}$ of $\calA$.
\end{factC}

\subsection{Regular functions} A function $f:{[0,1]}^m \to {[0,1]}^{n-m}$ is said to be \textbf{$\vec{r}$-regular} if its graph $\Gamma(f) \subseteq {[0,1]}^n$ is $\vec{r}$-recognizable. If $r=r_1=\cdots=r_n$, we simply write $r$-regular.

\medskip
It is well-known that every $\Q$-affine function $f: [0,1] \to [0,1]$ is $r$-regular for every $r\geq 2$ (see for example~\cite[Theorem 4 \& 5]{CSV} or~\cite[Theorem 5 \& 6]{BBR}). We now give an example of a non-affine continuous $3$-recognizable function.

\begin{exa} Let $C \subseteq [0,1]$ be the classical middle-thirds Cantor set. Consider the function $d:[0,1] \rightarrow [0,1]$ given by
\[
 d(x) = \min \{ |x - y| : y \in C \} \quad \text{for all} \quad 0 \leq x \leq 1;
\]
i.e. $d$ maps $x$ to the distance between $x$ and $C$ (see Figure~\ref{fig:modcantor} for a plot). It is not hard to see that $d$ is $3$-regular. An automaton $\calA$ that $3$-recognizes the graph is displayed in Figure~\ref{figure:distancecantor}.

\noindent The graph of $d$ is $3$-accepted by the slightly simpler automaton $\calA_2$ in Figure~\ref{figure:distancecantor2}. Observe that in this automaton the sets $\{q_2\}$ and $\{q_3\}$ are sinks of $\calA_2$. Furthermore, whenever there is $x\in [0,1]$ such that $d$ is locally affine on an interval around $x$ with slope $1$, then there is
$w \in L(\calA_2)$ such that
\begin{enumerate}
\item $v_{(3,3)}(w) = (x,d(x))$ and
\item the acceptance run of $w$ is eventually in $q_2$.
\end{enumerate}
Similarly, if $d$ is locally affine on an interval around $x$ with slope $-1$, the acceptance run of such a $w$ satisfying (1) is eventually in $q_3$.
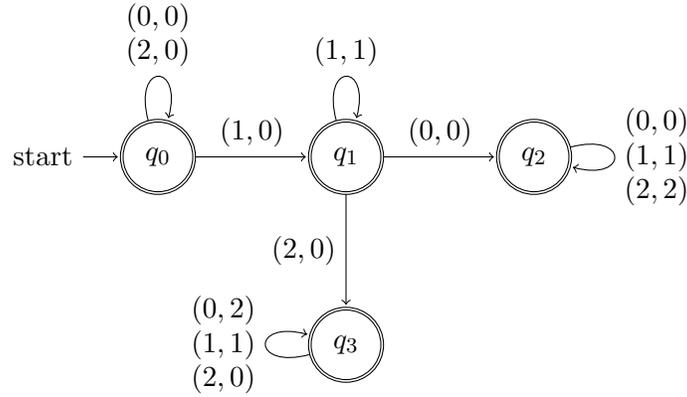
\begin{figure}[t]
\begin{tikzpicture}[node distance=2.5cm,on grid,auto, shorten > = 1pt]
	\node[state,initial,accepting](q0){$q_0$};
    \node[state,accepting](q1)[right=of q0]{$q_1$};
	\node[state,accepting](q2)[right=of q1]{$q_2$};
    \node[state,accepting](q3)[below=of q1]{$q_3$};
    \path[->]
    (q0) edge [loop above] node [align=center] {$(0,0)$\\$(2,0)$}()
   	edge node {$(1,0)$} (q1)
    (q1) edge [loop above] node [align=center] {$(1,1)$}()
    	edge node {$(0,0)$} (q2)
    	edge node [swap] {$(2,0)$} (q3)
    (q2) edge [loop right] node [align=center] {$(0,0)$\\$(1,1)$\\$(2,2)$}()
   (q3) edge [loop left] node [align=center] {$(0,2)$\\$(1,1)$\\$(2,0)$}();
\end{tikzpicture}
\caption{A B\"uchi automaton that $3$-accepts the graph of $d$.}%
\label{figure:distancecantor2}
\end{figure}
\end{exa}
\noindent A function $f : [0,1] \to \mathbb{R}$ is $\lambda$-\textbf{Lipschitz} if
\[
 | f(x) - f(y) | \leq \lambda | x - y | \quad \text{for all } 0 \leq x,y \leq 1,
 \]
and $f$ is Lipschitz if it is $\lambda$-Lipschitz for some $\lambda > 0$.
We say that $f$ is \textbf{H\"older continuous} with exponent $\gamma > 0$ if there is a $\lambda > 0$ such that
\[
 | f(x) - f(y) | \leq \lambda | x - y |^\gamma \quad \text{for all } 0 \leq x,y \leq 1.
\]
Note that Lipschitz functions are precisely functions which are H\"older continuous with exponent one.
It is also easy to see that any function which is H\"older continuous with exponent $\gamma > 1$ is constant.

\begin{factC}[{\cite[Theorem 10]{CSV}}]\label{fact:csv} Let $s \in \N_{>2}$, and suppose $f: [0,1]\to [0,1]$ is $(r,s)$-regular and continuous. Then $f$ is H\"older continuous with exponent $\log_{r}(s)$.
In particular:
\begin{enumerate}
\item If $r < s$, then $f$ is constant, and
\item if $r=s$, then $f$ is Lipschitz.
\end{enumerate}
\end{factC}

\noindent Fact~\ref{fact:csv} is proved in~\cite{CSV}, see~\cite[Theorem 10]{CSV}.
Indeed, it is claimed that $f$ is Lipschitz even when $r\neq s$, but the proof only treats the case $r=s=2$. An inspection of the proof however shows that when $r<s$, only the weaker condition of H\"older continuity is obtained. We now give an example of a $(4,2)$-regular function that is continuous, but not Lipschitz.\footnote{We thank Erin Caulfield for pointing out that this is a continuous regular function that is not Lipschitz.}


\begin{exa}
Let $h=(h_1,h_2) : [0,1]\to {[0,1]}^2$ be the Hilbert curve~\cite{Hilbert}. It can be shown that $h$ is $(4,2,2)$-regular and its graph is $(4,2,2)$-recognized by the automaton in Figure~\ref{fig:hilbert}. Since the graph of $h_1$ and the graph of $h_2$ are the projections of the graph of $h$ onto the first and third coordinates and onto the second and third coordinates, we have that $h_1$ and $h_2$ are $(4,2)$-regular. Observe that $h$ is not Lipschitz, as the image of a Lipschitz function cannot have higher Hausdorff dimension than the domain. Thus $h_1$ and $h_2$ cannot possibly be both Lipschitz, as this would imply Lipschitzness of $h$.
\end{exa}

\begin{figure}[t]
\begin{tikzpicture}[shorten >=1pt,node distance=3cm,on grid,auto,line width=1pt]
   \node[state,initial,accepting,initial where=right,initial distance=2ex] (q_0)   {$q_0$};
   \node[state,accepting] (q_1) [above right=of q_0] {$q_1$};
   \node[state,accepting] (q_2) [below right=of q_0] {$q_2$};
   \node[state,accepting](q_3) [below right=of q_1] {$q_3$};
    \path[->]
	(q_0) edge  node {(3,1,0)} (q_1)
		  edge  node [swap] {(0,0,0)} (q_2)
		  edge [loop left] node [align=center] {(1,0,1)\\(2,1,1)}()
	(q_1) edge  node {} (q_0)
		  edge  node  {(0,1,1)} (q_3)
		  edge [loop above] node {(2,0,0),(1,0,1)}()
	(q_3) edge  node {} (q_1)
		  edge  node {(3,0,1)} (q_2)
		  edge [loop right] node [align=center]{(2,0,0)\\(1,1,0)}()
	(q_2) edge  node {} (q_0)
		  edge  node {} (q_3)
		  edge [loop below] node {(2,1,1),(1,1,0)}();
\end{tikzpicture}
\caption{An automaton $(4,2,2)$-recognizing the Hilbert curve.}%
\label{fig:hilbert}
\end{figure}
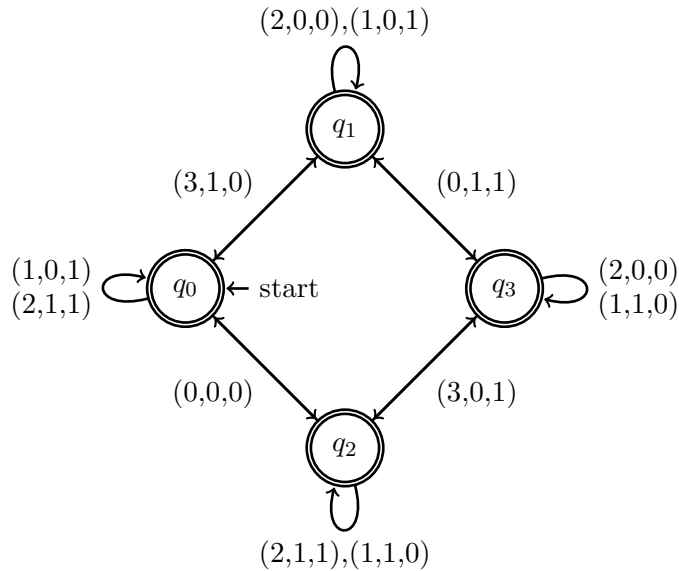

\subsection{Affine regular functions}\label{section:affine}
In this subsection we show that every affine $r$-regular function $J \to [0,1]$ defined on an interval $J \subseteq [0,1]$ is $\mathbb{Q}$-affine.
We require the following well-known fact.

\begin{fact}\label{fact:periodic}
Every B\"uchi automaton that accepts an infinite word, accepts an infinite periodic word.
\end{fact}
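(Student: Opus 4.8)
The plan is to run the standard ``lasso extraction'' argument on an accepting run. First I would unwind the definitions: writing the given automaton as $\calA = (Q,\Sigma,\Delta,I,F)$ and letting $\sigma$ be an infinite word accepted by $\calA$, I fix a run $s = s_0 s_1 s_2 \cdots$ of $\sigma$ with $s_0 \in I$ for which $\{n : s_n \in F\}$ is infinite. Since $Q$ is finite, the pigeonhole principle gives a single state $q \in F$ with $s_n = q$ for infinitely many $n$; I then choose indices $i < j$ with $s_i = s_j = q$.

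Next I would cut $s$ and $\sigma$ at those indices. Let $u$ be the label of the initial segment of the run from $s_0$ to $s_i = q$ (so $u \in \Sigma^{i}$, the empty word if $i=0$), and let $v$ be the label of the segment from $s_i = q$ to $s_j = q$, so $v \in P_{q,q}(j-i)$ and $v$ is \emph{nonempty} because $i < j$. Form the word $\tau := u v v v \cdots = u v^{\omega}$. I would then exhibit an explicit accepting run for $\tau$, namely $s_0 s_1 \cdots s_{i-1}(s_i s_{i+1} \cdots s_{j-1})^{\omega}$: this starts in $I$, each of its steps is a transition in $\Delta$ (the step closing each loop iteration is the original transition between $s_{j-1}$ and $s_j = s_i$), and it passes through $q \in F$ at the start of every loop iteration, hence infinitely often. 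Therefore $\tau \in L(\calA)$, and $\tau = u v^{\omega}$ with $u, v$ finite and $v$ nonempty is (ultimately) periodic, which is the assertion.

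I do not expect any genuine obstacle here; the argument is routine. The only points that need a little care are (i) ensuring the extracted loop label $v$ is nonempty, so that $\tau$ is genuinely infinite and the looping run really does revisit $F$ infinitely often, and (ii) recording that ``periodic'' is to be read in the sense of ``ultimately periodic,'' i.e.\ of the form $u v^{\omega}$: one cannot in general demand a purely periodic word $w^{\omega}$, as a two-state automaton whose only accepted word is $a b^{\omega}$ already shows. Alternatively one could simply cite the classical B\"uchi non-emptiness argument, but the self-contained lasso extraction above is short enough to spell out in full.
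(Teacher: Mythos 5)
Your lasso-extraction argument is correct and is precisely the standard proof; the paper itself states Fact~\ref{fact:periodic} as well known and gives no proof, so there is nothing to compare beyond noting that your argument is the expected one. Your caveat that ``periodic'' must be read as ultimately periodic (of the form $uv^{\omega}$ with $v$ nonempty) is also the right reading, and it suffices for the only use the paper makes of the fact, namely producing points with rational coordinates in Lemma~\ref{lem:affine0}.
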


\noindent Lemma~\ref{lem:affine0} below is a corollary to Fact~\ref{fact:periodic}.
One can use Lemma~\ref{lem:affine0} to show that many familiar non-affine functions $[0,1] \to [0,1]$ are not $r$-regular.
For example, it is relatively easy to show that a polynomial of degree $\geq 2$ cannot be $r$-regular using Lemma~\ref{lem:affine0}.

\begin{lem}\label{lem:affine0}
Let $J$ be a nonempty subinterval of $[0,1]$ and let $f : J \to [0,1]$ be $r$-regular. Then $f(q) \in \mathbb{Q}$ for all $q \in \mathbb{Q} \cap J$.
\end{lem}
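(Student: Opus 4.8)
The plan is to reduce the claim to Fact~\ref{fact:periodic} by exploiting the simple fact that a rational number in $[0,1]$ has an \emph{eventually periodic} base $r$ expansion, and conversely. Fix a B\"uchi automaton $\calA$ over $\Sigma_{(r,r)}$ that $(r,r)$-recognizes $\Gamma(f)$ (after restricting to $J$; since $J$ is an interval with, say, rational endpoints we may as well assume $J = [0,1]$, or handle the general case by intersecting with a recognizable set for $J$). Let $q \in \mathbb{Q} \cap J$ and write $b = f(q)$; we must show $b \in \mathbb{Q}$. Suppose toward a contradiction that $b$ is irrational, so $b$ has a unique base $r$ expansion $w_2 = b_1 b_2 \cdots$, and this expansion is \emph{not} eventually periodic.

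First I would produce a single infinite word in $L(\calA)$ encoding $(q,b)$: pick any base $r$ expansion $w_1$ of $q$ (we may take the eventually periodic one), form $u = w_1 \times w_2 \in \Sigma_{(r,r)}^\omega$, and note $v_{(r,r)}(u) = (q,b) \in \Gamma(f)$, hence $u \in L(\calA)$. Now consider the sub-automaton of $\calA$ obtained by restricting, at each step $i$, the first-coordinate letter to $w_{1,i}$ — more precisely, since $w_1$ is eventually periodic, the set of words in $L(\calA)$ whose first-coordinate projection equals $w_1$ is itself accepted by a B\"uchi automaton $\calB$ (a product of $\calA$ with a deterministic automaton tracking the eventually periodic sequence $w_1$ and rejecting any first-coordinate deviation). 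Then $L(\calB) \ne \emptyset$ since it contains $u$. The key point is that $u$ is (up to the fixed first coordinate) essentially the \emph{only} word in $L(\calB)$ whose value has first coordinate $q$: because $q$ is rational with eventually periodic expansion $w_1$, and $b$ is irrational so has a unique expansion, any accepted word projecting in the first coordinate to a base $r$ expansion of $q$ must project in the second coordinate to the expansion of $f(q) = b$, which is $w_2$. So in fact $L(\calB)$ consists of words $w_1' \times w_2$ where $w_1'$ ranges over the (finitely or countably many, but all eventually periodic) base $r$ expansions of $q$, and the second coordinate is forced to be $w_2$.

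By Fact~\ref{fact:periodic}, $\calB$ accepts an infinite \emph{periodic} word $u' = w_1' \times w_2'$. Its first coordinate $w_1'$ is periodic, hence is a base $r$ expansion of $q$ (that is what $\calB$ enforces), and its second coordinate $w_2'$ is periodic, hence $v_r(w_2') \in \mathbb{Q}$. But $u' \in L(\calA)$, so $(v_r(w_1'), v_r(w_2')) = (q, v_r(w_2')) \in \Gamma(f)$, forcing $v_r(w_2') = f(q) = b$. Thus $b = v_r(w_2')$ is rational, contradicting our assumption. Hence $f(q) \in \mathbb{Q}$.

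The main obstacle is the bookkeeping in the second step: one must be careful that ``restrict the first coordinate to a base $r$ expansion of $q$'' is implemented by a genuine B\"uchi automaton and that the resulting language is nonempty yet only produces the value $b$ in the first coordinate. This is where rationality of $q$ is used twice — once to get an eventually periodic $w_1$ so that the restriction automaton $\calB$ is finite-state, and once (together with the assumed irrationality of $b$, hence uniqueness of $w_2$) to guarantee that any periodic word accepted by $\calB$ still encodes the point $(q, b)$ rather than some other point of the graph. Everything else is a routine application of Fact~\ref{fact:periodic} and the standard correspondence between eventually periodic words and rationals.
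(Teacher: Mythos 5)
Your proof is correct and is essentially the paper's argument: both restrict attention to the fiber of $\Gamma(f)$ over the rational point $q$ and then invoke Fact~\ref{fact:periodic} to extract an (ultimately) periodic accepted word, whose second coordinate is then an eventually periodic expansion and hence has rational value, which must equal $f(q)$. The paper packages the fiber restriction as ``$\{(q,f(q))\} = (\{q\}\times[0,1]) \cap \Gamma(f)$ is $r$-recognizable'' via closure under intersection, whereas you build the product automaton $\calB$ by hand; this is the same idea, and your detour through an assumed irrationality of $f(q)$ (and uniqueness of its expansion) is unnecessary, since the computation with the periodic word already yields rationality directly.
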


\noindent The proof below shows that Lemma~\ref{lem:affine0} holds more generally for $(r,s)$-regular functions.

\begin{proof}
By Fact~\ref{fact:periodic} every $r$-recognizable subset of ${[0,1]}^n$ contains a point with rational coordinates.
In particular, every $r$-recognizable singleton has rational coordinates.
Fix a rational $0 \leq q \leq 1$.
It is easy to see that $\{ (q,y) : 0 \leq y \leq 1 \}$ is $r$-recognizable.
As $\Gamma(f)$ is $r$-recognizable, and the intersection of two $r$-recognizable sets is $r$-recognizable, it follows that
\[
 \{(q,f(q))\} = \{ (q,y) : 0 \leq y \leq 1 \} \cap \Gamma(f)
 \]
is $r$-recognizable. Thus $f(q)$ is rational.
\end{proof}

\noindent Lemma~\ref{lem:affine0} and the fact that an affine function that takes rational values on rational points is necessarily $\mathbb{Q}$-affine, together imply the following:

\begin{prop}\label{prop:qaffine}
Let $J \subseteq [0,1]$ be a nonempty open interval and let $f : J \to [0,1]$ be $r$-regular and affine. Then $f$ is $\mathbb{Q}$-affine.
\end{prop}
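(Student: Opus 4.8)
The plan is to reduce Proposition~\ref{prop:qaffine} to Lemma~\ref{lem:affine0} by the standard observation that an affine function is completely determined by its values at two points. Write $f(t) = \alpha t + \beta$ for some real numbers $\alpha, \beta$. Since $J$ is a nonempty \emph{open} interval, it contains at least two distinct rational numbers $q_1 < q_2$. By Lemma~\ref{lem:affine0}, both $f(q_1)$ and $f(q_2)$ are rational.

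Next I would solve for $\alpha$ and $\beta$ from the two equations $f(q_1) = \alpha q_1 + \beta$ and $f(q_2) = \alpha q_2 + \beta$. Subtracting gives $\alpha = \bigl(f(q_2) - f(q_1)\bigr)/(q_2 - q_1)$, which is a quotient of rationals with nonzero denominator, hence rational; then $\beta = f(q_1) - \alpha q_1$ is also rational. Therefore $f(t) = \alpha t + \beta$ with $\alpha, \beta \in \mathbb{Q}$, i.e.\ $f$ is $\mathbb{Q}$-affine.

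There is essentially no obstacle here: the only point requiring the hypothesis that $J$ is open (or at least nondegenerate) is the existence of two distinct rationals in $J$, and the only nontrivial input is Lemma~\ref{lem:affine0} itself, which has already been established. In fact this is exactly the remark made in the text immediately preceding the proposition, so the proof amounts to spelling out that remark.
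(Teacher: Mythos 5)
Your proof is correct and follows exactly the paper's route: the paper derives the proposition from Lemma~\ref{lem:affine0} together with the remark that an affine function taking rational values at rational points is $\mathbb{Q}$-affine, and your argument simply spells out that remark using two distinct rationals in the open interval $J$.
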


\noindent Proposition~\ref{prop:qaffine} is a special case of a result on first-order expansions $(\mathbb{R},<,+,0)$, see~\cite[Theorem 3.9]{FHW-Compact}.
This more general result shows that Proposition~\ref{prop:qaffine} still holds when the graph of $f$ is $r$-recognized by an B\"uchi automaton with advice, where the automaton that recognizes the graph can access a fixed infinite advice string.

\subsection{Graph Directed Iterated Function Systems and Self-similar sets} Let $(X, d)$ and $(X', d')$ be metric spaces.
A \textbf{similarity of ratio $r$} between $(X,d)$ and $(X',d')$ is a function $f:X \to X'$ such that
\[
d'(f(x), f(y)) = \frac{1}{r}d(x,y) \quad \text{for all} \quad x,y \in X.
\]
A \textbf{graph-directed iterated function system (of ratio $r$)}, or a GDIFS for short, is a quadruple $\calG = (V,E,s,t,X,S)$ such that
\begin{enumerate}
\item $(V,E,s,t)$ is a graph,
\item $X = {( X_v,d_v )}_{v \in V }$ is a family of metric spaces, and
\item $S = {(S_e : X_{t(e)} \to X_{s(e)})}_{e \in E }$ is a family of functions such that $S_e$ is a similarity of ratio $r$ for all $e \in E$.
\end{enumerate}
We typically drop the source and range maps.
We let $E_{u,v}$ denote the set of edges $e \in E$ from $u$ to $v$.
An \textbf{attractor} for $\calG$ is a collection $K = {(K_u)}_{u \in V}$ such that each $K_u$ is a non-empty compact subset of $X_u$ and
\[
K_u = \bigcup_{v \in V} \bigcup_{e \in E_{u,v}} S_e(K_v) \quad \text{for all } u \in V.
\]

\noindent Let $\calA=(Q,\Sigma_r^n,\Delta,I,F)$ be a B\"uchi automaton. We associate  a GDIFS $\calG_{\calA}$ of ratio $r$ to $\calA$ as follows.
Let $V := Q$.
Let $(X_v,d_v)$ be ${[0,1]}^n$ with the usual euclidean metric for each $v \in V$.
Set $E := \Delta$, and for each $\delta=(u,\sigma,v) \in \Delta$ let
\[
S_\delta(x) = \frac{ x + \sigma }{ r } \quad \text{for all } x \in {[0,1]}^n.
\]

\noindent All GDFIS considered will be of the form $\calG_{\calA}$ for some B\"uchi automaton $\calA$ over $\Sigma_r^n$. While this is just a small subset of all GDFIS, we need the generality of GDFIS in contrast to just iterative function systems to account for the case when $\calA$ has more than a single state.

\medskip For $p,q \in Q$, let $W_{p,q}(n)$ be the set of paths of length $n$ from $p$ to $q$. Since ${(K_q)}_{q\in Q}$ is the attractor of $\calG_{\calA}$, we obtain the following fact by an iterative application of the attractor property.

\begin{fact}%
\label{fact:attr0}
Let $n\in N$ and $p \in Q$. Then
\[
 K_{p} = \bigcup_{q \in Q}\bigcup_{(\delta_1,\ldots,\delta_n) \in W_{p,q}(n)} (S_{\delta_1} \circ \ldots\circ S_{\delta_n})(K_q).
\]
\end{fact}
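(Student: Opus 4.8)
The plan is to prove the identity by induction on $n$, the engine being a single application of the defining attractor property of $K = {(K_q)}_{q\in Q}$ for $\calG_{\calA}$. For the base case $n = 0$: by the convention recorded earlier, $W_{p,q}(0)$ is empty unless $q = p$, in which case it consists of the single empty path, and the empty composition of similarities is the identity map on $X_p$. Hence the right-hand side collapses to $K_p$, as required. (One could equally well take $n=1$ as the base case, which is literally the attractor property once one identifies the paths of length $1$ from $p$ to $q$ with the edges in $E_{p,q} \subseteq \Delta$, and notes that $S_\delta$ is the similarity the construction attaches to $\delta$.)

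For the inductive step, assume the identity holds for some $n$. Apply the attractor property to each set $K_q$ occurring in the union, i.e. substitute $K_q = \bigcup_{q' \in Q}\bigcup_{\delta_{n+1} \in E_{q,q'}} S_{\delta_{n+1}}(K_{q'})$, and push this union through the map $S_{\delta_1}\circ\cdots\circ S_{\delta_n}$ (any function commutes with unions). This yields
\[
K_p = \bigcup_{q \in Q}\ \bigcup_{(\delta_1,\ldots,\delta_n) \in W_{p,q}(n)}\ \bigcup_{q' \in Q}\ \bigcup_{\delta_{n+1} \in E_{q,q'}} (S_{\delta_1}\circ\cdots\circ S_{\delta_n}\circ S_{\delta_{n+1}})(K_{q'}).
\]
The remaining point is that this indexing reassembles into a sum over $W_{p,q'}(n+1)$: a tuple $(\delta_1,\ldots,\delta_{n+1})$ is a path of length $n+1$ from $p$ to $q'$ precisely when there is a (necessarily unique) intermediate state $q$ with $(\delta_1,\ldots,\delta_n) \in W_{p,q}(n)$ and $\delta_{n+1} \in E_{q,q'}$, namely $q = t(\delta_n) = s(\delta_{n+1})$. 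So the three inner unions over $q$, over length-$n$ paths, and over $\delta_{n+1}$ exactly enumerate $W_{p,q'}(n+1)$, each path once; rewriting the display accordingly gives the identity for $n+1$.

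I do not anticipate a genuine obstacle: the content is the unfolding of the definitions of GDIFS, attractor, and path, together with a routine induction. The only item demanding a little care is the bookkeeping in the inductive step — matching the composition order $S_{\delta_1}\circ\cdots\circ S_{\delta_{n+1}}$ to the path orientation (recall $S_e : X_{t(e)} \to X_{s(e)}$ runs "backwards" along each edge, so that $S_{\delta_1}\circ\cdots\circ S_{\delta_n}$ is a well-defined map $X_q \to X_p$ along a path $p \to \cdots \to q$), and confirming that the decomposition of a length-$(n+1)$ path into a length-$n$ path followed by one more edge is a bijection, so that no terms are double-counted or dropped when the unions are merged.
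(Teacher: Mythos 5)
Your proof is correct and matches the paper's approach: the paper simply asserts Fact~\ref{fact:attr0} "by an iterative application of the attractor property," which is exactly the induction you carry out, including the correct handling of the composition order $S_{\delta_1}\circ\cdots\circ S_{\delta_n}:X_q\to X_p$ and the bijective decomposition of length-$(n+1)$ paths.
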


\noindent Throughout this paper will use the following result from~\cite{CLR} connecting attractors of GDFIS and $r$-recognizable sets.

\begin{factC}[{\cite[Theorem 57]{CLR}}]%
\label{fact:thmclr}
Suppose $\calA$ is trim and closed.  Then
\[
V_r(\calA)  = \bigcup_{q \in I} K_q,
\]
where ${(K_q)}_{q\in Q}$ is the unique attractor of $\calG_{\calA}$.
\end{factC}

\noindent We observe that the GDFIS $\calG_{\calA}$ only depends on $Q, n,r$ and $\Delta$, but not on $I$ or $F$. Therefore the unique attractor of $\calG_{\calA}$ also only depends on $Q, n,r$ and $\Delta$, but not on $I$ or $F$.
%

\subsection{Logic}
As shown in Boigelot, Rassart, and Wolper~\cite{BRW}, there is also a connection between $r$-recognizable sets and sets definable in a particular first-order expansion of the ordered real additive group $(\R,<,+,0)$.
 Let $V_r(x,u,k)$ be the ternary predicate on $\R$ that holds whenever $u = r^n$ for $n \in \mathbb{N}, n \geq 1$ and there is a base-$r$ expansion of $x$ with $n$-th digit $k$.
Let $\mathscr{T}_r$ be the first-order structure $(\mathbb{R},<,+,0,V_r)$.

\begin{factC}[{\cite[Theorem 5]{BRW}}]%
\label{BRW}
Let $X\subseteq {[0,1]}^n$. Then $X$ is $r$-recognizable if and only if $X$ is definable in $\mathscr{T}_r$ (without parameters).
\end{factC}

\noindent Fact~\ref{BRW} is a powerful tool for establishing that the collection of $r$-recognizable sets is closed under various operations.
For example it may be used to easily show that the interior and closure of an $r$-recognizable set are also $r$-recognizable.

\section{Tools from metric geometry}

\subsection{Hausdorff Measure}\label{section:hausdorff}
We recall the definition of the one-dimensional Hausdorff measure (generalized arc length) of a subset $A$ of $\mathbb{R}^n$.
We let $\Diam(B) \in \R \cup \{ \infty \}$ be the diameter of a subset $B$ of $\R^n$. Given $\delta > 0$
 we denote by $\mathcal{C}_{\delta}(X)$ the collection of all countable collections $\mathcal{B}$ of closed balls of diameter $\leq \delta$ covering $X$.
Given $\delta >0$ we declare
\[
\mathcal{H}^{1}_\delta(X) = \inf_{\mathcal{B}\in \mathcal{C}_{\delta}} \sum_{B \in \mathcal{B}} \Diam(B) \in \mathbb{R} \cup \{\infty\}.
\]
Note that $\mathcal{H}^{1}_\delta(X)$ decreases with $\delta$. Thus  either $\lim_{\delta \rightarrow 0} \mathcal{H}^{1}_\delta(X)$ or is infinite. The one-dimensional Hausdorff measure of $X$ is
\[
\mathcal{H}^{1}(X):= \sup_{\delta>0} H^1_{\delta}(X) = \lim_{\delta \rightarrow 0} \mathcal{H}^{1}_\delta(X).
\]
Recall that $\mathcal{H}^1$ agrees with the usual Lebesgue measure on $[0,1]$.
We now prove a general lemma about graphs of Lipschitz functions.

\begin{lem}\label{lem:hc} Let $f : [0,1]\to [0,1]$ be a Lipschitz function and let $X$ be a connected subset of $\Gamma(f)$.
Then
\[
\calH^1(X) \geq \Diam(X).
\]
\end{lem}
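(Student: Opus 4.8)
The plan is to show that the projection $\pi : X \to \R$ onto the first coordinate does not decrease one-dimensional Hausdorff measure, and then observe that $\pi(X)$ is an interval whose length is at least $\Diam(X)$. Concretely, first I would note that $\pi$ restricted to $\Gamma(f)$ is a bijection onto $[0,1]$, and that its inverse $x \mapsto (x, f(x))$ is Lipschitz (with constant $\sqrt{1+\lambda^2}$ if $f$ is $\lambda$-Lipschitz), since $\|(x,f(x)) - (y,f(y))\| = \sqrt{(x-y)^2 + (f(x)-f(y))^2} \le \sqrt{1+\lambda^2}\,|x-y|$. Conversely, $\pi$ itself is $1$-Lipschitz because $|\pi(u) - \pi(v)| \le \|u - v\|$ for all $u,v \in \R^2$.

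The key step is the standard fact that a $1$-Lipschitz map cannot increase $\mathcal{H}^1$: if $g$ is $1$-Lipschitz and $\mathcal{B}$ is a cover of $X$ by closed balls of diameter $\le \delta$, then $\{\,\overline{g(B)} : B \in \mathcal{B}\,\}$ covers $g(X)$ by sets of diameter $\le \delta$ (and one can enclose each in a ball of the same diameter), whence $\mathcal{H}^1_\delta(g(X)) \le \sum_B \Diam(B)$, and taking the infimum over $\mathcal{B}$ and then $\delta \to 0$ gives $\mathcal{H}^1(g(X)) \le \mathcal{H}^1(X)$. Applying this with $g = \pi$ yields $\mathcal{H}^1(\pi(X)) \le \mathcal{H}^1(X)$.

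It remains to bound $\mathcal{H}^1(\pi(X))$ from below by $\Diam(X)$. Since $X$ is connected and $\pi$ is continuous, $\pi(X)$ is a connected subset of $[0,1]$, i.e. an interval, say with endpoints $a \le b$; then $\mathcal{H}^1(\pi(X)) = b - a$ because $\mathcal{H}^1$ agrees with Lebesgue measure on $\R$. Finally, for any two points $u = (x, f(x))$ and $v = (y, f(y))$ in $X$ we have, using that $f$ is $\lambda$-Lipschitz,
\[
\|u - v\| = \sqrt{(x-y)^2 + (f(x)-f(y))^2} \le \sqrt{1+\lambda^2}\;|x - y| \le \sqrt{1+\lambda^2}\,(b-a),
\]
so $\Diam(X) \le \sqrt{1+\lambda^2}\,(b-a)$ — but this only gives the inequality up to the constant $\sqrt{1+\lambda^2}$, so I would instead argue directly that $\Diam(X) \le \Diam(\pi(X)) \cdot$ something is too weak; the correct move is: pick $u,v \in X$ (or a sequence approaching the diameter) and note $\Diam(X) = \sup \|u-v\|$; since the graph lies in $\R^2$ we only know $\|u - v\| \ge |x - y|$, which is the wrong direction. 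The clean fix is to use that $X \subseteq \Gamma(f)$ is connected together with the parametrization: the curve $t \mapsto (t, f(t))$ for $t \in \pi(X) = [a,b]$ has the two endpoints $(a,f(a))$ and $(b,f(b))$ in $\overline{X}$, and since $\mathcal{H}^1$ of a connected set containing two points $p,q$ is at least $\|p-q\|$ — no, that again needs the reverse comparison.

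Let me restate the decisive step correctly: the right inequality is $\mathcal{H}^1(X) \ge \mathcal{H}^1(\pi(X)) = \mathrm{length}(\pi(X)) = b-a$, and separately $\Diam(X) \le \sqrt{(b-a)^2 + (\sup_X f - \inf_X f)^2}$; this does not immediately give $\mathcal{H}^1(X) \ge \Diam(X)$. The actual argument must instead project onto a cleverly chosen direction: for the two points $p, q \in \overline{X}$ realizing (or approaching) the diameter, let $\ell$ be the orthogonal projection of $\R^2$ onto the line through $p$ and $q$; then $\ell$ is $1$-Lipschitz, $\ell(X)$ is connected (hence an interval) and contains $\ell(p), \ell(q)$ with $\|\ell(p) - \ell(q)\| = \|p - q\| = \Diam(X)$, so $\mathrm{length}(\ell(X)) \ge \Diam(X)$, and therefore $\mathcal{H}^1(X) \ge \mathcal{H}^1(\ell(X)) = \mathrm{length}(\ell(X)) \ge \Diam(X)$. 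The main obstacle — and the reason the naive projection onto the first coordinate fails — is precisely this: one must project onto the diameter direction rather than a fixed axis, and verify that the image of the connected set $X$ under that projection is an interval long enough to span the diameter. The Lipschitz hypothesis on $f$ enters only to guarantee that $X$ (being a subset of a graph) is genuinely one-dimensional so that these projections behave well, and that $\Gamma(f)$ is, say, closed enough for the supremum in the diameter to be handled by a limiting argument.
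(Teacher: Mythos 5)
Your final argument is correct, but it is genuinely different from the paper's. The paper sets $Y:=\pi(X)$, observes that connectedness of $X$ gives $X=\Gamma(f|_Y)$ with $Y$ an interval, and then invokes the rectifiability fact that for a Lipschitz function on an interval the one-dimensional Hausdorff measure of the graph equals its arc length, which is at least the diameter. You instead run the classical projection argument: take $p,q\in X$ with $\|p-q\|>\Diam(X)-\varepsilon$, project orthogonally onto the line through $p$ and $q$, use that a $1$-Lipschitz map does not increase $\calH^1$, and note that the image is a connected subset of a line containing $p$ and $q$, hence an interval of length at least $\|p-q\|$. Your route is more elementary (no arc-length/rectifiability input) and strictly more general: it proves $\calH^1(X)\geq\Diam(X)$ for \emph{any} connected $X\subseteq\R^n$, so the Lipschitz hypothesis is in fact never used, contrary to your closing remark that it is needed to make the projections ``behave well''; the paper's proof, by contrast, is shorter once the standard arc-length fact is granted and exploits the graph structure directly. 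Two small points to tighten: your parenthetical claim that a set of diameter $d$ can be enclosed in a ball of the same diameter is false in the plane in general (an equilateral triangle is a counterexample), but it is harmless here because the images of both projections you use lie on a line, where it does hold; and the passage from points $p,q$ to the supremum defining $\Diam(X)$ should be stated as the $\varepsilon$-argument just indicated rather than appealing to closedness of $\Gamma(f)$, since $X$ itself need not be closed.
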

\begin{proof}
Set $Y:=\pi(X)$. Since $X$ is connected, so is $Y$.
Since $f|_Y$ is a Lipschitz function and $Y$ is connected, $\calH^1(\Gamma(f|_Y))$ is equal to the arc length of $\Gamma(f|_Y)$.
Thus $\calH^1(\Gamma(f|_Y)) \geq \Diam(\Gamma(f|_Y))$.
Now observe that $X = \Gamma(f|_{Y})$.
 \end{proof}

\subsection{Nowhere dense recognizable sets are Lebesgue null}\label{section:null}
We first recall an important definition from metric geometry.
A subset $X$ of $\mathbb{R}^n$ is \textbf{$\varepsilon$-porous} if every open ball in $\mathbb{R}^n$ with radius $t$ contains an open ball of radius $\varepsilon t$ that is disjoint from $X$.
A subset of $\mathbb{R}^n$ is \textbf{porous} if it is $\varepsilon$-porous for some $\varepsilon > 0$.
Lemma~\ref{lem:porous} holds for recognizable subsets of ${[0,1]}^n$.
We only require the one-dimensional case, so to avoid very mild technicalities we only treat that case.

\begin{lem}\label{lem:porous}
Let $X \subseteq [0,1]$ be $r$-recognizable and nowhere dense.
Then $X$ is porous.
\end{lem}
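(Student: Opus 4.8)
The plan is to exploit the self-similarity of $X$ coming from Fact~\ref{fact:similar}. Since $X \subseteq [0,1]$ is $r$-recognizable, its $r$-kernel contains only finitely many distinct sets, call them $X_1, \dots, X_N$, each a closed subset of $[0,1]$ of the form $r^k\big(\big[\tfrac{m}{r^k},\tfrac{m+1}{r^k}\big]\cap X\big) - m$; note $X$ itself is in this list (take $k=0$). Because $X$ is nowhere dense, each $X_j$ is nowhere dense as well: each $X_j$ is, up to the affine rescaling, a piece of $X$, and affine maps preserve nowhere density. In particular, for each $j$ there is an open interval $U_j \subseteq [0,1]$ with $U_j \cap X_j = \varnothing$. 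Let $\varepsilon_j > 0$ be chosen so that $U_j$ contains an open subinterval of length $2\varepsilon_j$ whose midpoint is at distance $\ge \varepsilon_j$ from the endpoints $0,1$ — i.e. a genuine ``hole'' of relative size $\varepsilon_j$ sitting inside $[0,1]$ away from the boundary — and set $\varepsilon := \min_j \varepsilon_j > 0$, shrinking further so that $\varepsilon < \tfrac14$ say, to leave room for the boundary effects.

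First I would reduce an arbitrary open ball (interval) $B = (c - t, c + t)$ in $\mathbb{R}$ to one of the standard pieces. Pick $k$ with $r^{-k} < t/4$ (or some fixed fraction), and let $m$ be such that the dyadic-type interval $I = \big[\tfrac{m}{r^k}, \tfrac{m+1}{r^k}\big]$ is contained in $B \cap [0,1]$ — this is possible as long as $B$ meets $[0,1]$ in an interval of length at least, say, $t$, which holds after passing to a sub-ball of $B$ contained in $[0,1]$; if $B$ is disjoint from $[0,1]$ there is nothing to prove since then $B$ itself is disjoint from $X$. On $I$, the set $X$ looks like an affinely rescaled copy of some $X_j$: precisely, $r^k(I \cap X) - m = X_j$ for some $j$. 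Transporting the hole $U_j$ (or rather the interval of length $2\varepsilon_j$ inside it away from the boundary) back through the affine map $y \mapsto (y + m)/r^k$ produces an open subinterval of $I \subseteq B$ of length $2\varepsilon_j r^{-k} \ge 2\varepsilon \cdot r^{-k}$ that is disjoint from $X$. Finally I would bookkeep the constants: $r^{-k}$ compares to $t$ up to the fixed factor coming from the choice of $k$, so the hole has radius bounded below by $\varepsilon' t$ for a constant $\varepsilon'$ depending only on $\varepsilon$, $r$, and the chosen fraction. That $\varepsilon'$ witnesses porosity of $X$.

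The main obstacle is the careful handling of boundary effects and constant tracking: the $r$-kernel pieces live in $[0,1]$, so a hole in $X_j$ might a priori be pressed against $0$ or $1$, and one must ensure the transported hole lands genuinely in the interior of the given ball $B$ rather than straddling an endpoint of $I$ or of $[0,1]$. This is why I insist on extracting from each $U_j$ a sub-hole bounded away from $\{0,1\}$, and why I choose $r^{-k}$ to be a small fixed fraction of $t$ rather than merely $r^{-k} < t$ — this guarantees that even after the affine transport, the hole (of radius $\sim \varepsilon r^{-k}$) fits inside $B$ with room to spare. Once the uniform lower bound $\varepsilon = \min_j \varepsilon_j > 0$ is in hand, the argument is just an explicit chain of affine changes of coordinates, and the fact that $N$ is finite (the crux of $r$-self-similarity) is exactly what makes this minimum positive. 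I would also remark, as in the sentence preceding the lemma, that porous subsets of $\mathbb{R}^n$ are Lebesgue null, which is the intended application to Theorem A.
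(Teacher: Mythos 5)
Your proof is correct and follows essentially the same route as the paper: both arguments use Fact~\ref{fact:similar} to extract, from the finiteness of the $r$-kernel together with nowhere density, a hole of uniform relative size at scale $r^{-k}$, and then transport that hole into an $r$-adic subinterval of a given ball, tracking the constant to get a porosity parameter. The only cosmetic difference is that the paper fixes one $r^{-k}$-interval disjoint from every kernel element simultaneously, whereas you choose a hole per kernel element and take the minimum of the finitely many sizes.
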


\noindent In the following proof an $r^{-n}$-\textbf{interval} is an interval of the form $\left[ \frac{m}{r^n}, \frac{m+1}{r^n} \right]$.

\begin{proof}
Recall that the $r$-kernel of $X$ is the collection of sets of the form
\[
r^n\left( X \cap \left[ \frac{m}{r^n}, \frac{m+1}{r^n} \right] - m \right),
\]
where $n \geq 1$ and $0 \leq m \leq r^n - 1$.
By Fact~\ref{fact:similar}, the $r$-kernel of $X$ consists of only finitely many distinct subsets of $[0,1]$, each of which has empty interior.
Thus, for some $k \geq 1$, there is an $r^{-k}$-interval that is disjoint from every element of the $r$-kernel of $X$.
It follows that any $r^{-n}$-interval contains an $r^{-(n + k)}$-interval which is disjoint from $X$.
Note that any interval $(x - t, x + t) \subseteq [0,1]$ contains an $r^{-n}$-interval of length $\geq r^{-1} 2t$.
Thus any interval $(x - t, x + t) \subseteq [0,1]$ contains an $r^{-n}$-interval with length $\geq r^{-(k+1)} 2t$ that is disjoint from $X$.
Thus $X$ is $2r^{-(k+1)}$-porous.
\end{proof}

\noindent It is shown in Luukkainen~\cite[Theorem 5.2]{Luukk} that a subset $X$ of $\mathbb{R}^n$ is porous if and only if the Assouad dimension of $X$ is strictly less than $n$.
The Hausdorff dimension of a closed subset of $\mathbb{R}^n$ is bounded above by its Assouad dimension.
It follows that a nowhere dense $r$-recognizable subset of ${[0,1]}^n$ has Hausdorff dimension $< n$.
In particular a nowhere dense $r$-recognizable subset of $[0,1]$ is Lebesgue null.
For the sake of completeness we provide the following more elementary proof.

\begin{prop}\label{prop:porous}
Every porous subset of $[0,1]$ is Lebesgue null. In particular, every nowhere dense $r$-recognizable subset of $[0,1]$ is Lebesgue null.
\end{prop}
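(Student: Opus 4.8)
The plan is to prove the first sentence directly — every porous subset of $[0,1]$ is Lebesgue null — since the "in particular" clause then follows immediately from Lemma~\ref{lem:porous}. Let $X \subseteq [0,1]$ be $\varepsilon$-porous for some $\varepsilon \in (0,1)$; without loss of generality $\varepsilon < 1/2$. The idea is to show that on any interval, passing from scale $t$ to a much smaller scale deletes a fixed proportion of the measure, and then iterate.

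First I would make the porosity quantitative at the level of covering by intervals. Suppose $J \subseteq [0,1]$ is a closed interval of length $\ell$. Its midpoint $x$ satisfies $(x-\ell/2, x+\ell/2) \subseteq J \cup \partial J$, and porosity gives an open ball of radius $\varepsilon \ell/2$ inside this interval disjoint from $X$; hence $X \cap J$ is contained in a union of at most two closed subintervals of $J$ whose total length is at most $(1 - \varepsilon)\ell$ — actually we can be slightly careful and say the complement of an open interval of length $\varepsilon\ell/2$ relative to an interval of length $\ell$ has total length $\ell - \varepsilon\ell/2$. So set $\theta := 1 - \varepsilon/2 \in (0,1)$: every interval $J$ of length $\ell$ admits a finite cover of $X \cap J$ by at most two intervals of total length $\le \theta \ell$.

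Next I would iterate this. Starting from $J_0 = [0,1]$, apply the previous step to each interval produced at stage $k$ to obtain at stage $k+1$ a finite family of intervals covering $X$ with total length $\le \theta^{k+1}$. (The number of intervals at most doubles each stage, but that is irrelevant since we only track total length.) Since $\theta < 1$, the total length tends to $0$, so $X$ has Lebesgue outer measure $0$, i.e. $X$ is Lebesgue null. For the second assertion: if $X \subseteq [0,1]$ is $r$-recognizable and nowhere dense, then $X$ is porous by Lemma~\ref{lem:porous}, hence Lebesgue null by the first part.

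I do not expect a serious obstacle here; this is an elementary measure-theoretic iteration. The only point requiring a little care is the bookkeeping when an interval $J$ is not fully contained in $[0,1]$ at the boundary, and making sure that "an open ball of radius $\varepsilon t$ disjoint from $X$ inside a ball of radius $t$" translates cleanly into "remove a subinterval of relative length $\ge \varepsilon/2$"; I would phrase the porosity input in terms of open intervals $(x-t,x+t)$ to sidestep this. Everything else is a geometric series estimate.
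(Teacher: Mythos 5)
Your proof is correct, but it follows a genuinely different route from the paper's. The paper replaces $X$ by its closure (still $\varepsilon$-porous) so that $X$ may be assumed measurable, observes that porosity gives the uniform density bound $\calH^1((x-t,x+t)\cap X)\leq (1-\varepsilon)2t$ for every interval $(x-t,x+t)\subseteq[0,1]$, and then invokes the Lebesgue density theorem: a set of positive measure has density $1$ at almost all of its points, contradicting the bound $1-\varepsilon$. You instead run an elementary multiplicative covering iteration: every interval of length $\ell$ loses a fixed proportion of its length, so after $k$ stages $X$ is covered by finitely many intervals of total length at most $\theta^k$ with $\theta<1$, whence $X$ has outer measure $0$. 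Your approach buys elementarity — no density theorem, and no need to pass to the closure or discuss measurability, since outer measure zero already gives "Lebesgue null" — while the paper's argument is shorter on the page and transfers verbatim to porous subsets of $\R^n$, where your cube-subdivision bookkeeping would be heavier. Two small remarks: an open ball of radius $\varepsilon\ell/2$ has length $\varepsilon\ell$, so you could take $\theta=1-\varepsilon$ instead of $1-\varepsilon/2$ (harmless, since only $\theta<1$ matters); and the boundary issue you flag never arises, because the paper's porosity condition quantifies over all balls in $\R$ and every interval your iteration produces lies inside $[0,1]$. Your reduction of the second assertion to Lemma~\ref{lem:porous} is exactly what the paper does.
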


\begin{proof}
Let $X \subseteq [0,1]$ be $\varepsilon$-porous.
Note that the closure of $X$ is also $\varepsilon$-porous.
After replacing $X$ by its closure if necessary we suppose $X$ is closed, and so in particular, Lebesgue measurable.
Every interval $(x-t,x+t) \subseteq [0,1]$ contains an interval with length at least $2\varepsilon t$ that is disjoint from $X$.
Hence
\[
 \calH^1( (x - t, x +t)  \cap X) \leq (1 - \varepsilon) 2t.
\]
Thus
\[
\frac{\calH^1( (x - t, x +t)  \cap X)}{2t} \leq 1 - \varepsilon
\]
for all $x,t > 0$ such that $(x-t,x+t) \subseteq [0,1]$.
Suppose $X$ has positive Lebesgue measure.
By the Lebesgue density theorem we have
\[
\lim_{t \to 0} \frac{ \calH^1( (x - t ,x + t) \cap X)}{2t} = 1
\]
for $\calH^1$-almost all $x \in X$.
This is a contradiction.
\end{proof}

\section{The strongly connected case}

In this section we prove Theorem A in the special case when the $r$-regular function under consideration is accepted by a strongly connected determinstic B\"uchi automaton. Indeed, we establish the following strengthening of Theorem A in this case.

\begin{thm}\label{prop:affine}
Let $f: [0,1] \to [0,1]$ be an $r$-regular continuous function whose graph is $r$-accepted by a strongly connected deterministic B\"uchi automaton $\calA$. Then $f$ is $\Q$-affine.
\end{thm}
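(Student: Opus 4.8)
First, I would apply Fact~\ref{fact:csv}(2) to conclude that $f$ is $\lambda$-Lipschitz for some $\lambda \geq 0$; combined with the hypotheses, this puts us exactly in the setting where Hutchinson's curve argument (cited in the introduction) can be adapted. Since $\calA$ is strongly connected and deterministic, after passing to its closure $\bar{\calA}$ — which by Fact~\ref{closed} still $r$-accepts $\overline{\Gamma(f)} = \Gamma(f)$ (the graph is already closed, being compact) — we may invoke Fact~\ref{fact:thmclr} to identify $\Gamma(f)$ with $\bigcup_{q \in I} K_q$, where ${(K_q)}_{q \in Q}$ is the unique attractor of the GDIFS $\calG_{\calA}$ of ratio $r$. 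Determinism will be used to control how the pieces $K_q$ overlap: for a deterministic automaton the projections of the sub-pieces $S_\delta(K_q)$ onto the first coordinate tile $\pi(K_p)$ with disjoint interiors, so the attractor decomposition respects the graph structure without folding.

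The heart of the argument is a measure-comparison using the one-dimensional Hausdorff measure $\calH^1$. I would first show each $K_q$ is connected: the graph of a continuous function is connected, and the sub-pieces inherit this because $f$ restricted to any $r^{-n}$-subinterval is again continuous with connected graph. Then, for each $q$, Lemma~\ref{lem:hc} applied to $X = K_q \subseteq \Gamma(f)$ (after the obvious affine normalization identifying $K_q$ with the graph of a rescaled restriction of $f$) gives $\calH^1(K_q) \geq \Diam(K_q)$. On the other hand, the similarities $S_\delta$ have ratio $1/r$, so the attractor relation $K_p = \bigcup_{q}\bigcup_{\delta \in E_{p,q}} S_\delta(K_q)$ yields $\calH^1(K_p) \leq \tfrac{1}{r}\sum_{q}\#E_{p,q}\cdot \calH^1(K_q)$; iterating via Fact~\ref{fact:attr0} and using that $\pi(K_p)$ is an interval whose length is controlled, one gets a reverse inequality forcing $\calH^1(K_q) = \Diam(K_q)$ for every $q$. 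Equality in Lemma~\ref{lem:hc} means the connected set $K_q$ is a straight line segment — a Lipschitz graph over an interval with $\calH^1$ equal to its diameter must be affine. In particular $\Gamma(f) = \bigcup_{q \in I} K_q$ is a finite union of line segments, so $f$ is piecewise affine; continuity plus the self-similar structure (the slopes of the pieces $S_\delta(K_q)$ are those of $K_q$, and strong connectivity links all states) forces a single common slope, hence $f$ is affine. Proposition~\ref{prop:qaffine} then upgrades "affine" to "$\Q$-affine''.

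The main obstacle I anticipate is the bookkeeping in the measure inequality: naively, the bound $\calH^1(K_p) \leq \tfrac{1}{r}\sum \#E_{p,q}\,\calH^1(K_q)$ runs the wrong way, and one needs determinism to guarantee that the first-coordinate projections of the pieces $S_\delta(K_q)$ inside $K_p$ have pairwise disjoint interiors and cover $\pi(K_p)$ — only then does additivity of $\calH^1$ (equivalently Lebesgue measure, since $\calH^1 = \mathrm{Leb}$ on $[0,1]$, transported through the projection which decreases $\calH^1$) turn the iterated attractor identity into the sharp equation that forces equality everywhere. Making precise that "a connected subset of a Lipschitz graph with $\calH^1 = \Diam$ is a segment'' is the clean geometric core and is essentially Hutchinson's remark, but the reduction to it — verifying connectivity of each $K_q$, the disjointness of projections from determinism, and propagating "affine on each piece'' to "globally affine'' using strong connectivity — is where the real work lies.
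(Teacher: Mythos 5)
Your overall architecture (Lipschitz via Fact~\ref{fact:csv}, attractor identification via Fact~\ref{fact:thmclr}, Hausdorff-measure comparison in the spirit of Hutchinson, then Proposition~\ref{prop:qaffine}) matches the paper, but there is a genuine gap at the step you lean on most: the claim that determinism makes the first-coordinate projections of the pieces $\xi_\sigma(K_q)$ tile $\pi(K_p)$ with disjoint interiors. Determinism of $\calA$ is determinism over the product alphabet $\Sigma_r^2$; it does not make the automaton deterministic in the first coordinate, and in general two distinct length-$n$ labels $\sigma\neq\sigma'$ can have identical first components while their second components differ. Geometrically this is the case where two pieces sit in the \emph{same} vertical column in vertically adjacent cells, sharing a horizontal edge $\left[\frac{a_1}{r^n},\frac{a_1+1}{r^n}\right]\times\{\frac{a_2+1}{r^n}\}$; this happens whenever the graph of $f$ crosses a horizontal grid line inside a column, which is the generic situation. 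A priori the two pieces could overlap along a whole horizontal segment (where $f$ is locally constant at a grid value), and ruling this out is exactly the content of the paper's Lemma~\ref{disjoint}: one shows the intersection is $r$-recognizable, invokes Lemma~\ref{lem:porous} and Proposition~\ref{prop:porous} (nowhere dense recognizable sets are porous, hence null), and in the remaining case uses strong connectedness to route a path from $q$ back to $p$ and produce points of $\Gamma(f)$ over the interval $J$ with second coordinate strictly below $\frac{a_2+1}{r^n}$, contradicting $f\equiv\frac{a_2+1}{r^n}$ on $J$. None of this machinery appears in your proposal, and without it the ``measure-zero overlap'' needed to turn the attractor decomposition into the exact identity $\calH^1(K_p)=\sum_{q}\frac{|P_{p,q}(n)|}{r^n}\calH^1(K_q)$ is unjustified.

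Two further steps you treat as obvious also require argument. First, the identification of each $K_q$ (for $q$ not the initial state) with a rescaled restriction of $f$ over a full subinterval presupposes $\pi(K_q)=[0,1]$; this is Corollary~\ref{cor:interiorallstates}, which rests on Lemma~\ref{lem:stint1} and Corollary~\ref{cor:lemint} and genuinely uses determinism and strong connectedness (a priori $\xi_\sigma(K_q)$ is only a subset of $\Gamma(f)$ over its cell, so connectedness and the lower bound $\calH^1(K_q)\geq\Diam(K_q)$ from Lemma~\ref{lem:hc} do not come for free). Second, your closing comparison ``a reverse inequality forcing $\calH^1(K_q)=\Diam(K_q)$'' needs a quantitative input: the paper contrasts the covering bound $\calH^1_{r^{-n}}(K_p)\leq\sum_q\frac{|P_{p,q}(n)|}{r^n}\Diam(K_q)$ with the exact identity above, and the contradiction only closes because $\limsup_n|P_{p,p}(n)|/r^n>0$ (Lemma~\ref{enoughpaths}, itself using that the projections of the pieces must cover $[0,1]$, so $|\bigcup_qP_{p,q}(n)|\geq r^n$). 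With that, the paper concludes $\calH^1(K_p)=\Diam(K_p)$ directly, so $\Gamma(f)$ itself is a segment; your alternative ending (each $K_q$ a segment, then a common-slope argument via strong connectivity) is not needed and is itself only sketched.
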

\noindent This extends a classical result of Hutchinson~\cite[Remark 3.4]{Hutch} for functions whose graph is the attractor of an iterative function system. The Hutchinson's result almost directly implies Theorem~\ref{prop:affine} when the automaton $\calA$ only has one state.

\subsection{Two lemmas}
We collect two lemmas about strongly connected B\"uchi automata before proving Theorem~\ref{prop:affine}.
We let $\calA = (Q,\Sigma,\Delta,I,F)$ be a B\"uchi automaton.
Recall that for $p,q\in Q$, the set $P_{p,q}(n)\subseteq \Sigma^n$ is the set of all labels of paths of length $n$ from $p$ to $q$.

\begin{lem}\label{enoughpaths}
Suppose $\calA$ is strongly connected. Let $p \in Q$ be such that
\begin{equation*}\label{eq:enoughpaths}
\left|\bigcup_{q \in Q} P_{p,q}(n)\right| \geq r^n \quad \text{for all } n \in \N.
\end{equation*}
Then
\[
\limsup\limits_{n\to \infty}\frac{|P_{p,q}(n)|}{r^n} >0 \quad \text{for all } q \in Q.
\]
\end{lem}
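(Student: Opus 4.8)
The plan is to combine a crude pigeonhole count with strong connectivity. Write $N=|Q|$ and abbreviate $a_q(n):=|P_{p,q}(n)|$ for $q\in Q$. The first step is to notice that the hypothesis forces, for every $n$, some state to carry a positive proportion of all length-$n$ path labels from $p$: since $\bigcup_{q\in Q}P_{p,q}(n)$ is a union of $N$ sets,
\[
r^n\;\le\;\Bigl|\bigcup_{q\in Q}P_{p,q}(n)\Bigr|\;\le\;\sum_{q\in Q}a_q(n)\;\le\;N\max_{q\in Q}a_q(n),
\]
so there is a state $q_n$ with $a_{q_n}(n)\ge r^n/N$. The assignment $n\mapsto q_n$ takes values in the finite set $Q$, so some fixed state $q^\ast$ arises as $q_n$ for all $n$ in an infinite set $S\subseteq\N$; thus $a_{q^\ast}(n)\ge r^n/N$ for every $n\in S$.

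Next I would fix an arbitrary target state $q\in Q$ and use strong connectivity to move the mass from $q^\ast$ to $q$. There is a path from $q^\ast$ to $q$; let $v$ be its label and $\ell=|v|$, both independent of $n$. Prepending any length-$n$ path from $p$ to $q^\ast$ to this fixed path produces a length-$(n+\ell)$ path from $p$ to $q$, and on labels this is the injection $u\mapsto uv$. Hence $a_q(n+\ell)\ge a_{q^\ast}(n)\ge r^n/N$ for all $n\in S$, so
\[
\frac{a_q(n+\ell)}{r^{\,n+\ell}}\;\ge\;\frac{r^n/N}{r^{\,n+\ell}}\;=\;\frac{1}{N\,r^{\ell}}\qquad\text{for all }n\in S .
\]
Since $S$ is infinite, $\{\,n+\ell:n\in S\,\}$ is an infinite subset of $\N$, so $\limsup_{m\to\infty}a_q(m)/r^m\ge 1/(N r^{\ell})>0$. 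As $q\in Q$ was arbitrary, this is exactly the conclusion.

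I do not expect a genuine obstacle here; the argument is short, and the only points needing a little care are that the hypothesis concerns $\bigl|\bigcup_q P_{p,q}(n)\bigr|$ — distinct labels, possibly realized by paths ending at different states — which is handled by the inequality $\bigl|\bigcup_q P_{p,q}(n)\bigr|\le\sum_q a_q(n)$, and that the shift $\ell$ must be uniform in $n$, which is why one first extracts a single recurrent state $q^\ast$ before invoking strong connectivity. (One could instead connect $q_n$ to $q$ directly for each $n$, getting an $n$-dependent shift $\ell_n$; this still works because $\ell_n\le\max_{s\in Q}(\text{shortest-path length from }s\text{ to }q)$ is bounded, but the recurrent-state version avoids even that remark.)
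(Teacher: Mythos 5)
Your proof is correct and follows essentially the same route as the paper: a pigeonhole argument producing a single state that carries at least a $1/|Q|$ fraction of the length-$n$ labels for infinitely many $n$, followed by strong connectivity and the injection $u\mapsto uv$ along a fixed connecting path to transfer the positive $\limsup$ to an arbitrary state $q$. The only difference is that you spell out the pigeonhole step that the paper states in one line, so there is nothing to fix.
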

\begin{proof}
Since $Q$ is finite, the pigeonhole principle yields a state $p_0\in Q$ such that
\[
\limsup\limits_{n\to \infty}\frac{|P_{p,p_0}(n)|}{r^n} >0.
\]
Let $q\in Q$. Since $\calA$ is strongly connected, there is a path of length $k$ from $p_0$ to $q$. Thus  for all $n\in \N$
\[
|P_{p,q}(n+k)|\geq |P_{p,p_0}(n)|.
\] Therefore
\[
\limsup\limits_{n\to \infty}\frac{|P_{p,q}(n)|}{r^n} = \limsup\limits_{n\to \infty}\frac{|P_{p,q}(n+k)|}{r^{n+k}} \geq \frac{1}{r^k}\limsup\limits_{n\to \infty}\frac{|P_{p,p_0}(n)|}{r^n} >0.\qedhere
\]
\end{proof}

\begin{lem}\label{lem:stint1} Suppose $\calA$ is strongly connected, deterministic and $\Sigma = \Sigma_r$. If the interior of $V_r(\calA)$ is non-empty, then $V_r(\calA)=[0,1]$.
\end{lem}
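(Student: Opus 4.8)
For $q\in Q$ let $V_q\subseteq[0,1]$ be the set of all $v_r(w)$ with $w\in\Sigma_r^\omega$ admitting a run from $q$ that hits $F$ infinitely often; since $\calA$ is deterministic, $I$ is a singleton $\{q_0\}$ and $V_{q_0}=V_r(\calA)$, and for a finite word $u$ with a run from $q$ we write $q\cdot u$ for the (unique) state that run ends in. The plan has three steps: (i) zoom into an interval contained in $V_r(\calA)$ to produce a state $p$ with $V_p\supseteq(0,1)$ and to show $\calA$ is complete; (ii) use Lemma~\ref{enoughpaths} to propagate $\calH^1(V_q)=1$ from $p$ to every $q$; (iii) conclude $V_r(\calA)=[0,1]$.

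For step (i): non-empty interior of $V_r(\calA)$ gives $F\neq\emptyset$, and together with strong connectivity makes $\calA$ trim; it also yields a closed interval $[m/r^N,(m+1)/r^N]\subseteq V_r(\calA)$. Let $\tau\in\Sigma_r^N$ be the word with $v_r(\tau 0^\omega)=m/r^N$. Every point of $(m/r^N,(m+1)/r^N)$ has all its base-$r$ expansions beginning with $\tau$; choosing a uniquely expandable such point, its (accepted) expansion begins with $\tau$, so $p:=q_0\cdot\tau$ is defined. Deleting the prefix $\tau$ from accepted expansions of points of $(m/r^N,(m+1)/r^N)$ then shows $V_p\supseteq(0,1)$, so $\calH^1(V_p)=1$. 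Moreover every $w\in\Sigma_r^\omega$ that is eventually neither $0^\omega$ nor $(r-1)^\omega$ is the unique expansion of a point of $(0,1)$ and is therefore accepted from $p$; as every finite word is a prefix of such a word, every finite word is readable from $p$, whence by strong connectivity every finite word is readable from every state, i.e.\ $\calA$ is complete. (One can also reach completeness via metric geometry: by Facts~\ref{closed} and~\ref{fact:thmclr} the closure of $V_r(\calA)$ is the attractor component $K_{q_0}$ of $\calG_\calA$, and a component of a strongly connected GDIFS that contains an interval forces every $K_q=[0,1]$ and $\calA$ to be complete.)

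For steps (ii)--(iii): fix $q\in Q$. Unwinding the attractor property $n$ steps (Fact~\ref{fact:attr0}), using completeness, and using that the $r^{-n}$-intervals $[k/r^n,(k+1)/r^n]$ tile $[0,1]$ while $V_q$ contains a scaled copy of $V_{q\cdot u}$ in the interval indexed by each word $u$ of length $n$, one obtains
\[
\calH^1(V_q)\ \ge\ \sum_{q'\in Q}\frac{|P_{q,q'}(n)|}{r^n}\,\calH^1(V_{q'}),\qquad \sum_{q'\in Q}\frac{|P_{q,q'}(n)|}{r^n}=1 .
\]
Choosing $q$ to minimise $\calH^1(V_q)$ forces $\calH^1(V_{q'})=\calH^1(V_q)$ for all $q'$ with $P_{q,q'}(n)\neq\emptyset$; Lemma~\ref{enoughpaths} (applicable since $\calA$ is complete) guarantees that for each $q'$ this happens for infinitely many $n$, so all $V_{q'}$ have the same measure, which by $\calH^1(V_p)=1$ equals $1$. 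Hence $V_r(\calA)=V_{q_0}$ has full Lebesgue measure, so $[0,1]\setminus\overline{V_r(\calA)}$ is a relatively open null set, hence empty, and $\overline{V_r(\calA)}=[0,1]$. A direct inspection of the eventually-constant expansions—using once more that $\calA$ is complete, deterministic and strongly connected, so that (as in step (i)) every state accepts every non-eventually-constant word—places the remaining ($r$-adic) points in $V_r(\calA)$ as well, finishing the proof.

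I expect step (ii) — upgrading ``$V_p$ full for one $p$'' to ``$V_q$ full for every $q$'' — to be the main obstacle, and this is precisely where Lemma~\ref{enoughpaths} is used: it rules out that some target state $q'$ is reached only by a vanishing fraction of length-$n$ words, which would let $q'$ carry weight $0$ in the averages above.
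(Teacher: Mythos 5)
Your steps (i) and (ii) are correct as far as they go, and they take a genuinely different route from the paper: a measure-theoretic/counting argument in the spirit of the proof of Theorem~\ref{prop:affine}, rather than the paper's short contradiction argument. What they actually establish (via your prefix-shifting trick) is that every infinite word which is not eventually $0^\omega$ or $(r-1)^\omega$ is accepted from every state, hence $V_r(\calA)$ contains every non-$r$-adic point of $(0,1)$ and has full measure; incidentally, Lemma~\ref{enoughpaths} is not needed for this, since for the minimising state the equality argument works at every single length $n$, and strong connectedness already provides, for each $q'$, some $n$ with $P_{q,q'}(n)\neq\emptyset$. The genuine gap is your last sentence. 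The points that remain are exactly those all of whose base-$r$ expansions have an eventually constant tail (the $r$-adic rationals and the endpoints), and no ``direct inspection'' can place them in $V_r(\calA)$ from what you have proved, because with a genuine B\"uchi condition the claim fails: take $r=2$, states $A,B$, transitions $(A,0,A)$, $(A,1,B)$, $(B,0,A)$, $(B,1,B)$, $I=\{A\}$, $F=\{B\}$. This automaton is deterministic and strongly connected and accepts precisely the words containing infinitely many $1$'s, so $V_2(\calA)=(0,1]$: the interior is non-empty, yet $0\notin V_2(\calA)$ since its unique expansion $0^\omega$ is rejected. So the eventually-constant case is precisely where the acceptance condition matters, and it is the one case your argument does not address.

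The way out is the extra structure present where the lemma is actually applied (Corollary~\ref{cor:lemint}): there $\calA$ is in addition weak, and a weak, strongly connected automaton with non-empty language has all states final, so that acceptance reduces to the existence of an infinite run. The paper's proof uses exactly this: from a hypothetical $z\notin V_r(\calA)$ it extracts a state $p_0$ from which neither of the at most two expansions $w_1,w_2$ of $z$ admits a run (this is the step that needs all states final), then uses strong connectedness to find a prefix $u$ whose unique run ends at $p_0$ and whose cylinder lies inside the interval $J\subseteq V_r(\calA)$, and concludes by determinism that $v_r(uw_1)\in J$ has no accepted expansion, a contradiction. Note that once finality of all states is granted, your step (i) alone already finishes the proof: you show every finite word is readable from every state, so every infinite word has a run, hence is accepted, and $V_r(\calA)=[0,1]$ outright, making the Hausdorff-measure bookkeeping of step (ii) unnecessary. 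Without that hypothesis, your intermediate results (full measure, density, all non-$r$-adic points of $(0,1)$ present) are the most one can get, as the two-state example shows.
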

\begin{proof}
Set $X := V_r(\calA)$.
Let $J\subseteq [0,1]$ be an open interval contained in $X$.
Towards a contradiction, we suppose that $[0,1] \setminus X$ is non-empty. Fix $z\in [0,1]\setminus X$.
Let $w_1,w_2\in \Sigma_r^{\omega}$ be such that
\[
\{ w \in \Sigma_r^{\omega} \ : \ v_r(w) = z \} = \{w_1, w_2\}.
\]
The main idea is as follows: By strong connectedness of $\calA$ we can preceed either $w_1$ or $w_2$ with any prefix that is consistent with being in $X$, and still land outside of $X$. This will contradict $J\subseteq X$. We now make this argument precise.

\medskip
Since $z\notin X$, there is a state $p_0\in Q$ such that there is no run of either $w_1$ or $w_2$ from $p_0$. Since $\calA$ is strongly connected, there are $t\in \N$, $u\in \Sigma_r^t$ and a run $s_0s_1\dots s_t$ of $u$ such that $s_t=p_0$ and
\begin{equation}\label{lem:stint}
v_r(\{ uw \ : \ w \in \Sigma_r^{\omega}\}) \subseteq J.
\end{equation}
Since $\calA$ is deterministic, our assumption on $w_1$ and $w_2$ implies that the set $\{ w \in \Sigma_r \ : \ uw \in L(\calA) \}$ contains neither $w_1$ nor $w_2$. Set $y:=v_r(uw_1)$. Since $v_r(uw_1)=v_r(uw_2)$, we have that
\[
\{ w \in \Sigma_r^{\omega} \ : \ v_r(w) = y \} = \{uw_1, uw_2\}.
\]
Thus $y \notin X$. Together with~\eqref{lem:stint}, this contradicts $J\subseteq X$.
\end{proof}

\noindent Recall that $\pi : \R^2 \to \R$ is the coordinate project onto the first coordinate.

\begin{cor}\label{cor:lemint} Suppose $\calA$ is strongly connected, weak, deterministic and $\Sigma=\Sigma_r^2$. If the interior of $\pi(V_r(\calA))$ is non-empty, then $\pi(V_r(\calA))=[0,1]$.
\end{cor}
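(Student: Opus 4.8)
The plan is to reduce to Lemma~\ref{lem:stint1} by passing from the two-letter automaton $\calA$ over $\Sigma_r^2$ to a one-letter automaton over $\Sigma_r$ recognizing the projection $\pi(V_r(\calA))$, and arranging that the new automaton is again strongly connected and deterministic so that Lemma~\ref{lem:stint1} applies verbatim. Concretely, let $\calB$ be the automaton over $\Sigma_r$ obtained from $\calA$ by replacing each transition label $(a,b)\in\Sigma_r^2$ with its first coordinate $a\in\Sigma_r$; keep the same states, initial states, and accept states. Then a word $u\in\Sigma_r^\omega$ is accepted by $\calB$ if and only if there is $v\in\Sigma_r^\omega$ with $u\times v\in L(\calA)$, so $V_r(\calB) = \pi(V_r(\calA))$. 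Moreover $\calB$ has exactly the same underlying graph as $\calA$ (only the labels changed), hence $\calB$ is strongly connected because $\calA$ is, and $\calB$ is weak because $\calA$ is.

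The obstacle is that $\calB$ need not be \emph{deterministic}: two transitions $(p,(a,b_1),q_1)$ and $(p,(a,b_2),q_2)$ of $\calA$ with the same first letter $a$ but $q_1\ne q_2$ collapse to two $a$-transitions out of $p$ in $\calB$. To fix this I would first determinize $\calB$. Since $\calB$ is weak, by the fact cited at the end of the B\"uchi automata subsection (Boigelot, Jodogne, and Wolper~\cite[Section 5.1]{BJW}) there is a \emph{weak deterministic} automaton $\calB'$ with $L(\calB') = L(\calB)$, hence $V_r(\calB') = \pi(V_r(\calA))$. The remaining issue is that $\calB'$ need not be strongly connected. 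However, Lemma~\ref{lem:stint1} is applied to it only through the behavior relevant to $V_r(\calB')$, so I would pass to a strongly connected piece: replace $\calB'$ by a trim version, and then observe that the hypothesis ``interior of $\pi(V_r(\calA))$ is non-empty'' means some $r^{-n}$-cylinder is contained in $V_r(\calB')$, which forces a strongly connected sink $C$ of $\calB'$ whose associated set (in the sense of Fact~\ref{fact:thmclr}, i.e. $\bigcup_{q\in C}K_q$ for the GDIFS) already has non-empty interior; restricting the initial states of the closure to a single state of that sink yields a strongly connected deterministic closed — hence weak — automaton $\calC$ with non-empty-interior image contained in $[0,1]$.

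At that point Lemma~\ref{lem:stint1} gives $V_r(\calC) = [0,1]$, and since $V_r(\calC)\subseteq V_r(\bar\calB') = \overline{\pi(V_r(\calA))}\subseteq [0,1]$ we conclude $\overline{\pi(V_r(\calA))} = [0,1]$; a porosity/closedness argument (or a direct argument that $\pi(V_r(\calA))$ is itself $r$-recognizable and its closure is $r$-accepted by the closure, Fact~\ref{closed}) then upgrades this to $\pi(V_r(\calA)) = [0,1]$ — indeed if $\pi(V_r(\calA))$ were a proper subset it would be nowhere dense in the relevant sense contradicting density, or more cleanly one repeats the prefix-prepending argument of Lemma~\ref{lem:stint1} directly on $\calB'$. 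I expect the bookkeeping around ``pass to the right strongly connected component of the determinized automaton while preserving the non-empty interior of the image'' to be the main technical point; the determinization step itself is free given weakness, and the final contradiction is exactly the one already run in the proof of Lemma~\ref{lem:stint1}.
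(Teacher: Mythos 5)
Your first step matches the paper: relabel each transition by the first coordinate of its label to get $\mathcal{B}$ over $\Sigma_r$ with $V_r(\mathcal{B})=\pi(V_r(\calA))$, still strongly connected and weak. The genuine gap is in how you restore determinism together with strong connectedness. After the BJW determinization you pass to a sink $C$ of $\mathcal{B}'$ and apply Lemma~\ref{lem:stint1} to $\mathcal{C}$, the (closure of the) restriction with initial state some $q\in C$. But $L(\mathcal{C})$ consists of words read \emph{from $q$}, not from the initial state of $\mathcal{B}'$, so the containment $V_r(\mathcal{C})\subseteq V_r(\bar{\mathcal{B}}')$ that your final paragraph rests on is false in general; the correct relation is $v_r(u)+r^{-|u|}V_r(\mathcal{C})\subseteq V_r(\mathcal{B}')$ for an access word $u$ leading from the initial state to $q$. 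Hence even once Lemma~\ref{lem:stint1} gives $V_r(\mathcal{C})=[0,1]$, you have only shown that $\pi(V_r(\calA))$ contains the $r$-adic interval $v_r(u)+r^{-|u|}[0,1]$ --- essentially re-deriving the hypothesis that the interior is non-empty, not the conclusion $\pi(V_r(\calA))=[0,1]$. Passing from ``contains an interval'' to ``equals $[0,1]$'' is precisely where one needs an automaton that accepts $\pi(V_r(\calA))$ \emph{from its initial state} and is strongly connected; your construction sacrifices the initial state in order to regain strong connectedness, so it can only see a local piece. (A secondary issue: the claim that the interval forces a \emph{sink} with non-empty-interior image is asserted without argument; a Baire-category decomposition over access words yields only some strongly connected component consisting of final states, not necessarily a sink. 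This is repairable, unlike the point above.)

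Your fallback --- ``repeat the prefix-prepending argument of Lemma~\ref{lem:stint1} directly on $\mathcal{B}'$'' --- does not close the gap either: that argument uses strong connectedness to route, from inside the interval $J$, back to the state from which the two expansions of the missing point have no run, and uses determinism to know the run of the routing prefix is unique; $\mathcal{B}'$ has determinism but not strong connectedness, while $\mathcal{B}$ has strong connectedness but not determinism. The paper resolves this tension at the determinization step itself: since $\calA$ is weak, strongly connected and $L(\calA)\neq\emptyset$, \emph{all} states of $\calA$, hence of the projected automaton, are final; acceptance is then trivial, so the ordinary powerset construction preserves the language, and the paper obtains in this way a strongly connected deterministic automaton recognizing exactly $U=\{u_1:\exists u_2,\ u_1\times u_2\in L(\calA)\}$, to which Lemma~\ref{lem:stint1} applies verbatim. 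The missing idea in your write-up is thus a determinization that simultaneously keeps the exact language $\pi(V_r(\calA))$ from the initial state and strong connectedness.
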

\begin{proof}
Assume that the interior of $\pi(V_r(\calA))$ is non-empty.
Since $\calA$ is weak and strongly connected and $L(\calA)$ is non-empty, all states in $\calA$ are final.
Let
\[
U:= \{ u_1 \in \Sigma_r^{\omega} \ : \ \exists u_2 \in \Sigma_r^{\omega} \ u_1\times u_2 \in L(\calA)\}.
\]
Note that $v_r(U) = \pi(V_r(\calA))$. It is easy to see that there is a strongly connected B\"uchi automaton $\calA'$ over $\Sigma_r$ such that $L(\calA')=U$. Since all states of $\calA$ are final, we can choose $\calA'$ such that all states of $\calA'$ are final as well. Because of this, we can use the usual powerset construction to obtain a strongly connected \emph{deterministic} B\"uchi automaton $\calA''$ over $\Sigma_r$ such that $L(\calA')=L(\calA'')$. The corollary now follows from Lemma~\ref{lem:stint1}.
\end{proof}

\begin{cor}\label{cor:interiorallstates} Suppose $\calA$ is strongly connected, weak, deterministic and $\Sigma=\Sigma_r^2$. Let $p,p' \in Q$ and let ${(K_q)}_{q \in Q}$ be the attractor of the GDFIS $\calG_{\calA}$. Then $\pi(K_p) = [0,1]$ if and only if $\pi(K_{p'})=[0,1]$.
\end{cor}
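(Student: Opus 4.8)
The plan is to leverage the strong connectedness of $\calA$ together with the attractor relation from Fact~\ref{fact:attr0} to propagate the property ``$\pi(K_q) = [0,1]$'' around all of $Q$. First I would observe that $\pi$ commutes with each similarity $S_\delta$ in the following sense: if $\delta = (u,\sigma,v)$ with $\sigma = (\sigma_1,\sigma_2)$, then $\pi(S_\delta(A)) = \frac{\pi(A) + \sigma_1}{r}$ for any $A \subseteq [0,1]^2$. More generally, for a path $(\delta_1,\dots,\delta_n)$ from $p$ to $q$, the set $\pi\big((S_{\delta_1}\circ\cdots\circ S_{\delta_n})(K_q)\big)$ is an affine image $\frac{\pi(K_q) + m}{r^n}$ of $\pi(K_q)$ for a suitable integer $m$, hence is an interval of length $\ell(\pi(K_q))/r^n$ whenever $\pi(K_q)$ is an interval.

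Next, suppose $\pi(K_{p'}) = [0,1]$. By Fact~\ref{fact:attr0} applied at $p'$ with $n = 1$, we get $[0,1] = \pi(K_{p'}) = \bigcup_{v}\bigcup_{\delta \in E_{p',v}} \pi(S_\delta(K_v))$, a finite union of sets each of which is an affine image of some $\pi(K_v)$, shrunk by factor $r$. Since $[0,1]$ has nonempty interior, some $\pi(K_v)$ must have nonempty interior; iterating (or, more cleanly, arguing directly) one sees that the interior of $\pi(V_r(\calA)) = \bigcup_{q \in I} \pi(K_q)$ — or rather the interior of $\bigcup_{q} \pi(K_q)$, which contains $\pi(K_{p'}) = [0,1]$ — is non-empty. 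Here I would need to be slightly careful: Corollary~\ref{cor:lemint} speaks of $\pi(V_r(\calA))$, which uses the initial states $I$, whereas we want a statement about $K_p$ for arbitrary $p$. The clean route is: since $\calA$ is strongly connected, every state is both accessible and co-accessible, so after replacing $I$ by $\{p'\}$ (which changes neither the GDFIS $\calG_{\calA}$ nor its attractor, as noted in the remark following Fact~\ref{fact:thmclr}, and preserves strong connectedness, weakness, and determinism) we have $\pi(V_r(\calA)) \supseteq \pi(K_{p'}) = [0,1]$, so Corollary~\ref{cor:lemint} gives $\pi(V_r(\calA)) = [0,1]$; but this does not yet pin down $\pi(K_p)$ for a fixed $p$.

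So the real work is the propagation step, which I expect to be the main obstacle. Fix any $p \in Q$. By strong connectedness there is a path of length $n$ from $p$ to $p'$, so by Fact~\ref{fact:attr0} $K_p \supseteq (S_{\delta_1}\circ\cdots\circ S_{\delta_n})(K_{p'})$, whence $\pi(K_p) \supseteq \frac{\pi(K_{p'}) + m}{r^n} = \big[\frac{m}{r^n}, \frac{m+1}{r^n}\big]$, an interval of positive length. Thus $\pi(K_p)$ has non-empty interior. Now I would like to conclude $\pi(K_p) = [0,1]$. For this, re-run the argument with $I := \{p\}$: this keeps $\calA$ strongly connected, weak, deterministic over $\Sigma_r^2$, does not change the attractor, and now $V_r(\calA) = K_p$ (since $I = \{p\}$), so $\pi(V_r(\calA)) = \pi(K_p)$ has non-empty interior, and Corollary~\ref{cor:lemint} yields $\pi(K_p) = [0,1]$. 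By symmetry (swapping the roles of $p$ and $p'$) this gives the equivalence. The delicate points to check are that changing $I$ genuinely preserves all the hypotheses needed to invoke Corollary~\ref{cor:lemint} — in particular that $\calA$ with a single initial state is still weak (trivially, since weakness only constrains $F$ relative to strongly connected components) and that its language is non-empty (it is, since all states are co-accessible and $\calA$ is closed/weak so some accepting run exists) — and that the GDFIS, and hence the attractor family $(K_q)_{q\in Q}$, is literally unchanged, which is exactly the remark recorded after Fact~\ref{fact:thmclr}.
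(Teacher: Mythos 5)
Your argument is correct and is essentially the paper's own proof: identify $K_p=V_r(\calA_p)$ and $K_{p'}=V_r(\calA_{p'})$ via Fact~\ref{fact:thmclr} and the independence of $\calG_{\calA}$ from the initial states, reduce to comparing interiors via Corollary~\ref{cor:lemint}, and propagate non-empty interior of $\pi(K_{p'})$ to $\pi(K_p)$ using a path from $p$ to $p'$ together with Fact~\ref{fact:attr0}. The extra hypotheses you flag (weakness, non-empty language, determinism after changing the initial state) are handled in the paper just as tacitly, so there is no substantive difference.
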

\begin{proof}
Set $\calA_{p}:=(Q,\Sigma_r^2,\Delta,\{p\},F)$ and set $\calA_{p'}:=(Q,\Sigma_r^2,\Delta,\{p'\},F)$. Since the GDFIS $\calG_{\calA}$ does not dependent on the initial states of $\calA$, we obtain from Fact~\ref{fact:thmclr} that
\[
V_r(\calA_p) = K_p \hbox{ and } V_r(\calA_{p'}) = K_{p'}.
\]
By Corollary~\ref{cor:lemint} it is left to show that the interior of $\pi(K_p)$ is non-empty if and only if the interior of $\pi(K_{p'})$ is non-empty.
Suppose that $\pi(K_{p'})$ has non-empty interior. Since $\calA$ is strongly connected, there is $n\in \N$ such that there is a path $(\delta_1,\dots,\delta_n)$ of length $n$ from $p$ to $p'$. Thus by Fact~\ref{fact:attr0}
\begin{equation}\label{eq:cor45}
(S_{\delta_1} \circ \dots \circ S_{\delta_n})(K_{p'}) \subseteq K_p.
\end{equation}
It is easy to see that since the interior $\pi(K_{p'})$ is non-empty, so is the interior of $\pi\big((S_{\delta_1} \circ \dots \circ S_{\delta_n})(K_{p'})\big)$. By~\eqref{eq:cor45} the interior of $\pi(K_p)$ is non-empty.
\end{proof}

\subsection{Proof of Theorem~\ref{prop:affine}}
Throughout this subsection, fix an $r$-regular continuous $f: [0,1]\to[0,1]$ whose graph is $r$-accepted by a strongly connected deterministic B\"uchi automaton $\calA=(Q,\Sigma_r,\Delta,\{p\},F)$.  As $\calA$ is strongly connected, it is trim.
As $\Gamma(f)$ is closed, we may assume that $\calA$ is closed.
Let  $\calG_{\calA} = (V,E,X,S)$ be the GDIFS associated to $\calA$. Let ${(K_q)}_{q\in Q}$ be the attractor of the GDIFS $\calG_{\calA}$. By Fact~\ref{fact:csv}, $\Gamma(f) = K_p$.

\medskip
Recall that for $q,q' \in Q$, we denote by $W_{q,q'}(n)$ be the set of paths of length $n$ from $q$ to $q'$. Since ${(K_q)}_{q\in Q}$ is the attractor of $\calG_{\calA}$, we have  by Fact~\ref{fact:attr0} that for each $n\in \N$
\begin{equation}\label{eq:attr0}
\Gamma(f) = K_{p} = \bigcup_{q \in Q}\bigcup_{(\delta_1,\ldots,\delta_n) \in W_{p,q}(n)} (S_{\delta_1} \circ \ldots\circ S_{\delta_n})(K_q).
\end{equation}
\noindent For $q,q' \in Q$,
each path $(\delta_1,\ldots,\delta_n)$ of length $n$ from $q$ to $q'$, is uniquely determined from its label $\sigma = \sigma_1\ldots\sigma_n \in P_{q,q'}(n)$, since $\calA$ is deterministic. We let $\xi_\sigma:X_{q'} \to X_q$ be the composition $\xi_\sigma:=S_{\delta_1} \circ \ldots\circ S_{\delta_n}$.
Then
\[
\xi_\sigma(x) =\frac{x+\sum_{i=1}^{n} \sigma_{i} r^{n-i}}{r^n}
\]
With this new notation, equation~\eqref{eq:attr0} becomes
\begin{equation}\label{eq:attr}
\Gamma(f) = K_{p} = \bigcup_{q \in Q}\bigcup_{\sigma \in P_{{p},q}(n)} \xi_\sigma(K_q).
\end{equation}

\begin{lem}\label{disjoint}
Let $q,q' \in Q$, and let $\sigma\in P_{p,q}(n)$ and $\sigma' \in P_{p,q'}(n)$ be such that $\sigma\neq \sigma'$. Then
\[
\calH^1\big(\xi_\sigma(K_q)\cap \xi_{\sigma'}(K_{q'})\big) = 0.
\]
\end{lem}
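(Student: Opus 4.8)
The plan is to show that the overlap $\xi_\sigma(K_q)\cap \xi_{\sigma'}(K_{q'})$ is a connected subset of $\Gamma(f)$ that is too small to have positive $\calH^1$-measure, using Lemma~\ref{lem:hc}. First I would observe that since $\sigma\neq\sigma'$ are both labels of paths of length $n$ from $p$, the images $\xi_\sigma([0,1]^2)$ and $\xi_{\sigma'}([0,1]^2)$ are the two dyadic-type squares $r^{-n}([0,1]^2 + c_\sigma)$ and $r^{-n}([0,1]^2 + c_{\sigma'})$, where $c_\sigma=\sum_{i=1}^n \sigma_i r^{n-i}\in\Z^2$ and similarly for $\sigma'$. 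These two squares are distinct and have disjoint interiors; hence their intersection is contained in a common edge or corner, i.e. a segment of length $r^{-n}$ lying on a line $\{x_1 = m/r^n\}$ or $\{x_2 = m/r^n\}$. Consequently $\xi_\sigma(K_q)\cap\xi_{\sigma'}(K_{q'})$ is contained in this segment.

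Next I would split into the two cases. If the common boundary segment is vertical, $\{x_1 = m/r^n\}$, then its intersection with $\Gamma(f)$ is a single point, namely $(m/r^n, f(m/r^n))$ (using that $f$ is a function, so its graph meets each vertical line at most once), which has $\calH^1$-measure zero, and we are done. The substantive case is when the common segment is horizontal, $\{x_2 = m/r^n\}$; here the intersection $Y := \xi_\sigma(K_q)\cap\xi_{\sigma'}(K_{q'})$ is a subset of $\Gamma(f)$ lying in a horizontal segment of length $r^{-n}$, and a priori it could be a fat subset of that segment (this happens precisely where $f$ is locally constant on an interval). The key point is that $Y$ need not be connected, so Lemma~\ref{lem:hc} does not apply directly — but its $\pi$-image $\pi(Y)$ is a subset of an interval of length $r^{-n}$ on which $f$ is constant (equal to $m/r^n$). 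I would then invoke the structure available: $\pi(Y)$ is the set of $x$ in that horizontal fiber realized by both path-labelings, hence it is $r$-recognizable (being obtained from $r$-recognizable sets $K_q, K_{q'}$ by the affine maps $\xi_\sigma,\xi_{\sigma'}$ and projection), and it has empty interior — for if it contained an interval $I$, then by $r$-self-similarity and the attractor structure $\Gamma(f)$ would contain the horizontal segment $I\times\{m/r^n\}$, making $f$ constant on $I$; one then pushes this forward through the strongly connected automaton (as in Lemma~\ref{lem:stint1} and Corollaries~\ref{cor:lemint}, \ref{cor:interiorallstates}) to force $f$ to be constant on all of $[0,1]$, contradicting continuity together with the fact that $\Gamma(f)$ meets enough cells — or more simply, a locally constant $\Q$-affine piece is fine, but a set of positive measure here would. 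Actually the cleaner route: $\pi(Y)$ is $r$-recognizable and nowhere dense (no subinterval, by the self-similarity propagation argument), hence by Proposition~\ref{prop:porous} it is Lebesgue null; since $f$ is constant on $\pi(Y)$, the graph piece $Y = \pi(Y)\times\{m/r^n\}$ has $\calH^1(Y) = \calH^1(\pi(Y)) = 0$.

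The main obstacle I anticipate is the horizontal case, specifically justifying that $\pi(Y)$ has empty interior. The honest version of that argument needs to use that $\calA$ is strongly connected and deterministic: if $\pi(Y)$ contained an interval, the self-similar / GDIFS structure would force $\Gamma(f)$ to contain a horizontal segment in the cell $\xi_\sigma([0,1]^2)$, and strong connectedness (via the corollaries already established, e.g. that $\pi(K_p)=[0,1]$ propagates across all states) would spread horizontal segments throughout $\Gamma(f)$, ultimately contradicting that $f$ is a genuine (single-valued, continuous, non-degenerate) function on $[0,1]$ — a Lipschitz function cannot be locally constant on a dense open set and still be the full attractor unless it is globally constant, which is excluded since $\Gamma(f)=K_p$ projects onto $[0,1]$ and $f$ is continuous but $\Gamma(f)$ is not a single horizontal segment. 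I would phrase this last step carefully, probably deriving the contradiction from $\pi(K_p)=[0,1]$ (Fact~\ref{fact:csv} gives $\Gamma(f)=K_p$) together with the fact that a constant function's graph is not a self-similar set of the required form unless it already equals the whole construction, so any interval inside $\pi(Y)$ forces $f$ constant on $[0,1]$ and then $Y$ is at most the whole segment, still of measure $r^{-n}>0$ — at which point I would instead conclude that this case simply does not occur because $f$ would have to be constant, contradicting that distinct labels $\sigma,\sigma'$ give distinct non-trivial cells both meeting $\Gamma(f)$. Given the delicacy, the safe fallback is: $\pi(Y)$ is $r$-recognizable; if it has nonempty interior, Lemma~\ref{lem:stint1}/Corollary~\ref{cor:lemint} applied to the sub-automaton reading the fiber forces $\pi(Y)$, hence $f$, to behave as constant on $[0,1]$, contradicting continuity of the non-constant $f$; hence $\pi(Y)$ is nowhere dense, Proposition~\ref{prop:porous} makes it Lebesgue null, and $\calH^1(Y)=0$.
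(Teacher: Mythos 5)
Your reduction to the cell geometry, the corner and vertical-edge cases, and the observation that the only substantive case is a horizontal shared edge all match the paper, as does the mechanism ``recognizable $+$ nowhere dense $\Rightarrow$ porous $\Rightarrow$ null'' (the paper runs it contrapositively: positive measure plus recognizability and closedness yields an interval $J$ with $J\times\{(a_2+1)/r^n\}$ inside the overlap, so that $f\equiv (a_2+1)/r^n$ on $J$). The gap is in the one step that carries the real content: ruling out such a segment. Your proposed contradictions do not work. First, ``$f$ constant on an interval propagates to $f$ constant on all of $[0,1]$'' is not justified by Lemma~\ref{lem:stint1} or Corollary~\ref{cor:lemint}, which say that $V_r(\calA)$ (resp.\ its projection) having interior forces it to be all of $[0,1]$; they say nothing about constancy spreading, and you cannot appeal to the final conclusion that $f$ is affine, since that is what is being proved. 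Second, even granting global constancy, there is no contradiction with continuity: a constant function is continuous, and nothing in the hypotheses of the lemma excludes it. Third, your fallback ``$f$ constant contradicts that the two distinct cells both meet $\Gamma(f)$'' is also false: if $f\equiv m/r^n$ lies exactly on a grid line, its graph is that horizontal line and meets both vertically adjacent cells along their shared edge. So precisely the delicate configuration you need to exclude is left unexcluded.

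The paper's proof handles this with a re-entry argument that your proposal is missing. The constant value on $J$ is not arbitrary: it is the \emph{top} boundary $(a_2+1)/r^n$ of the lower cell $\xi_\sigma([0,1]^2)$ (using $a_2<b_2$). By strong connectedness one picks a return path $\nu\in P_{q,p}(\ell)$ whose label steers $\xi_{\sigma\nu}([0,1]^2)$ over $J$, and then applies $\sigma$ once more; since $\sigma\nu\sigma\in P_{p,q}(2n+\ell)$, the attractor property gives $\xi_{\sigma\nu\sigma}(K_q)\subseteq K_p=\Gamma(f)$. An explicit digit estimate (using $a_2+1\leq b_2<r^n$ and $c_2<r^\ell$) shows every point $y$ of this image has $y_1\in J$ but $y_2$ \emph{strictly} below $(a_2+1)/r^n$, contradicting $f|_J\equiv(a_2+1)/r^n$. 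This is where strong connectedness and determinism are actually used, and it is what rules out the scenario (e.g.\ a constant function sitting on a grid line) that your argument cannot; without some version of this step the horizontal case of your proof is incomplete.
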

\begin{proof}
The main difficulty is to show that the overlap of the boundaries of the two sets has zero measure. Set $a := (a_1,a_2) := \sum_{i=1}^{n} \sigma_{i} r^{n-i}$ and $b:=(b_1,b_2):=\sum_{i=1}^{n} \sigma'_{i} r^{n-i}$.
Thus
\[
\xi_\sigma(x) =\frac{x+a}{r^n} \ \hbox{ and } \ \xi_{\sigma'}(x) =\frac{x+b}{r^n}.
\]
Observe that $\max\{a_1,a_2,b_1,b_2\}<r^{n}$.
We get that
\[
\xi_\sigma(K_q) \subseteq \xi_\sigma\left({[0,1]}^2\right) = \left[\frac{a_1}{r^n},\frac{a_1+1}{r^n}\right]\times \left[\frac{a_{2}}{r^n},\frac{a_{2}+1}{r^n}\right]
\]
and
\[
\xi_{\sigma'}(K_{q'}) \subseteq \xi_{\sigma'}\left({[0,1]}^2\right) = \left[\frac{b_1}{r^n},\frac{b_1+1}{r^n}\right]\times \left[\frac{b_{2}}{r^n},\frac{b_{2}+1}{r^n}\right].
\]
Since $\xi_\sigma\neq \xi_{\sigma'}$, there is some $i \in \{1,2\}$ such that $a_i\neq b_i$. Observe that when both $a_1\neq b_1$ and $a_2\neq b_2$, the statement of the Lemma follows immediately.
We therefore can reduce to the case when there is $i \in \{1,2\}$ such that $a_i= b_i$.

\medskip Suppose that $a_2 = b_2$. Without loss of generality we can assume $a_1<b_1$. Then
\[
\xi_\sigma(K_q) \cap \xi_{\sigma'}(K_{q'}) \subseteq \left\{\frac{a_1+1}{r^n}\right\}\times \left[\frac{a_2}{r^n},\frac{a_2+1}{r^n}\right].
\]
By~\eqref{eq:attr}, we have $\xi_\sigma(K_q) \cap \xi_{\sigma'}(K_{q'}) \subseteq \Gamma(f)$. This directly implies that $\xi_\sigma(K_q) \cap \xi_{\sigma'}(K_{q'})$ must be a singleton as $f$ is a function.

\medskip
Suppose that $a_1=b_1$. Without loss of generality we can assume $a_2 < b_2$. Then
\[
\xi_\sigma(K_q) \cap \xi_{\sigma'}(K_q) \subseteq \left[\frac{a_1}{r^n},\frac{a_1+1}{r^n}\right] \times \left\{\frac{a_2+1}{r^n}\right\}.
\]
Towards a contradiction, suppose that  $\calH^1\left(\xi_\sigma(K_q)\cap \xi_{\sigma'}(K_{q'})\right) > 0$. It is easy to check that $\xi_\sigma(K_q)\cap \xi_{\sigma'}(K_{q'})$ is $r$-recognizable.
Thus Lemma~\ref{lem:porous} and Proposition~\ref{prop:porous} yield a non-empty open interval $J\subseteq \left[\frac{a_1}{r^n},\frac{a_1+1}{r^n}\right]$ such that
\[
J \times  \left\{\frac{a_2+1}{r^n}\right\} \subseteq \xi_\sigma(K_q) \cap \xi_{\sigma'}(K_{q'}).
\]
Since  $\xi_\sigma(K_q) \cap \xi_{\sigma'}(K_{q'}) \subseteq \Gamma(f)$ by~\eqref{eq:attr}, this shows $f$ is constant and equal to $\frac{a_2+1}{r^n}$ on the interval $J$.
Since $\calA$ is strongly connected, there is $\ell \in \N_{>0}$ and $\nu=\nu_1\dots\nu_{\ell} \in P_{q,p}(\ell)$ such that
\begin{equation}\label{eq:lemmaJ}
\xi_{\sigma\nu}({[0,1]}^2) \subseteq J \times \left[\frac{a_2}{r^n},\frac{a_2+1}{r^n}\right].
\end{equation}
Set $c:=(c_1,c_2):= \sum_{i=1}^{\ell} \nu_{i} r^{\ell-i}$. Note that $c_1<r^{\ell}$ and $c_2 < r^{\ell}$. Since $a_2<b_2 <r^{n}$, we have
\[
a_2 + c_2r^n< b_2 + c_2 r^n < r^{n+\ell}.
\]
Let $x=(x_1,x_2) \in {[0,1]}^2$.
Then
\begin{align*}
\xi_{\sigma\nu\sigma}(x) 
&=\frac{x + a r^{n+\ell} + c r^n + a}{r^{2n+\ell}}.
\end{align*}
Set $y=(y_1,y_2):= \xi_{\sigma\nu\sigma}(x)$. By~\eqref{eq:lemmaJ} we have $y_1\in J$. Furthermore,
\[
y_2 = \frac{x_2 + a_2 r^{n+\ell} + c_2 r^n + a_2}{r^{2n+\ell}}
\leq \frac{1 + a_2 r^{n+\ell} + c_2 r^n + a_2}{r^{2n+\ell}}
=\frac{a_2}{r^n} +
\frac{1}{r^{n}}\frac{c_2 r^{n}+ a_2+1}{r^{n+\ell}} < \frac{a_2+1}{r^n}.
\]
Now suppose $x \in K_q$. Then $y \in \Gamma(f)$. However, since $y_1 \in J$, we get $y_2 =\frac{a_2+1}{r^n}$. This contradicts $y_2 < \frac{a_2+1}{r^n}$.
\end{proof}

\begin{proof}[Proof of Theorem~\ref{prop:affine}]

By~\eqref{eq:attr}, $K_p=\Gamma(f)$. Thus $K_p$ is compact and connected. By Lemma~\ref{lem:hc} we have $\calH^1(K_p) \geq \Diam(K_p)$. As in the proof of~\cite[Remark 3.4]{Hutch}, we observe that in order to show that $K_p$ is a line segment, it is enough to establish $\calH^{1}(K_p) = \Diam(K_p)$. Thus we will now show that $\calH^1(K_p) \leq \Diam(K_p)$.


\medskip
Let $q \in Q$, and let $n\in \N$ be such that there is $\sigma \in P_{{p},q}(n)$.  Observe that
$\calH^1(\xi_\sigma(K_q))= r^{-n}  \calH^1(K_q)$. By~\eqref{eq:attr} we have that $\xi_{\sigma}(K_q)\subseteq K_p = \Gamma(f)$. Since $\pi(K_p)=[0,1]$, Corollary~\ref{cor:interiorallstates} gives that $\pi(K_q)=[0,1]$. Thus $\pi(\xi_{\sigma}(K_q))$ is connected. Since $\xi_{\sigma}(K_q)\subseteq \Gamma(f)$,
we deduce that $\xi_{\sigma}(K_q)$ is connected as well. Therefore we obtain $\calH^1(\xi_\sigma(K_q))\geq\Diam(\xi_\sigma(K_q))$ by Lemma~\ref{lem:hc}. Thus
\begin{equation}\label{eq:hutlemma}
\calH^1(K_q)= r^{n}  \calH^1(\xi_\sigma(K_q))\geq r^n\Diam(\xi_\sigma(K_q)) = \Diam(K_q).
\end{equation}
Since any overlap in the union in~\eqref{eq:attr} has zero 1-dimensional Hausdorff measure by Lemma~\ref{disjoint}, we can compute the 1-dimensional Hausdorff measure of $K_p$ using~\eqref{eq:attr} as follows:
\begin{equation}\label{eq:hausdorffmeasure}
\calH^1(K_{p}) = \sum_{q \in Q}\sum_{\sigma \in P_{p,q}(n)} \frac{1}{r^n}\calH^1(K_q) = \sum_{q\in Q}\frac{|P_{p,q}(n)|}{r^n}\calH^1(K_q).
\end{equation}


\noindent Suppose $\calH^1(K_{p}) > \Diam(K_{p})$ towards a contradiction. Set $\varepsilon := \calH^1(K_{p}) -  \Diam(K_{p})>0$. For $n\in \N$ and $\sigma \in P_{p,q}(n)$, we note that
\[
\Diam(\xi_{\sigma}(K_q))=\frac{1}{r^n} \Diam(K_q).
\]
Thus by~\eqref{eq:attr}
\[
\calH^1_{r^{-n}}(K_p) \leq \sum_{q\in Q}\frac{|P_{p,q}(n)|}{r^n} \Diam(K_q).
\]
By the definition of Hausdorff measure
\[
\calH^1(K_{p}) \leq \limsup\limits_{m \to \infty} \sum_{q\in Q}\frac{|P_{p,q}(m)|}{r^m}\Diam(K_q).
\]
However, for every $n\in \N$, we obtain using~\eqref{eq:hausdorffmeasure} and~\eqref{eq:hutlemma}
\begin{align*}
\calH^1(K_{p})&= \frac{|P_{p,p}(n)|}{r^n}\calH^1(K_{p})+\sum_{q\in Q\setminus \{p\}}\frac{|P_{p,q}(n)|}{r^n}\calH^1(K_q)\\
&\geq\frac{|P_{p,p}(n)|}{r^n}\varepsilon+\sum_{q\in Q}\frac{|P_{p,q}(n)|}{r^n} \Diam(K_q).
\end{align*}
To reach a contradiction, it is left to prove that $\limsup\limits_{n \to \infty}\frac{|P_{p,p}(n)|}{r^n}$ is positive.
Since $f$ is a function with domain $[0,1]$, we know that $\pi(K_p)=[0,1]$. By~\eqref{eq:attr} this means that $[0,1]$ is covered by $\bigcup_{q \in Q}\bigcup_{\sigma \in P_{{p},q}(n)} \pi(\xi_\sigma(K_q))$. Since $\Diam(\pi(\xi_{\sigma}(K_q)))\leq r^{-n}$ for each $\sigma\in P_{p,q}(n)$, we have for each $n\in \N$
\[
\left|\bigcup_{q \in Q} P_{p,q}(n)\right| \geq r^n.
\]
Thus Lemma~\ref{enoughpaths} shows $\limsup\limits_{n \to \infty}\frac{|P_{p,p}(n)|}{r^n}$ is positive.
\end{proof}

\section{Proof of Theorem A}

In this section we will give the proof of Theorem A. Indeed, we will establish the following strengthening of Theorem A.

\begin{thm}\label{nowhere}
Let $f: [0,1] \to [0,1]$ be a continuous $r$--regular function. Then there are pairwise disjoint open $r$-recognizable subsets $U_1,\ldots,U_m$ of $[0,1]$ and rational numbers $\alpha_1,\ldots,\alpha_m$ such that
\begin{enumerate}
\item $[0,1]\setminus \bigcup_{i = 1}^{m} U_i$ is Lebesgue null, and
\item the restriction of $f$ to a connected component of $U_i$ is affine with slope $\alpha_i$.
\end{enumerate}
\end{thm}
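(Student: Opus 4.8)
The plan is to bootstrap from Theorem~\ref{prop:affine} (the strongly connected case) by decomposing a recognizing automaton along its strongly connected components. Since $f$ is continuous, $\Gamma(f)$ is compact, hence closed; so, trimming a recognizer of $\Gamma(f)$, closing it via Fact~\ref{closed}, determinizing the resulting weak automaton, and trimming and closing once more, I would fix a trim, closed (hence weak), deterministic B\"uchi automaton $\calA = (Q, \Sigma_r^2, \Delta, \{p_0\}, Q)$ with $V_r(\calA) = \Gamma(f)$. For each state $q_0$ whose strongly connected component $C$ is non-trivial (contains a cycle), let $\calA_{q_0}^{\circ}$ be the automaton with state set $C$, transition relation the transitions of $\calA$ internal to $C$, initial state $q_0$, and all states accepting. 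This automaton is strongly connected, deterministic, closed and trim, so Fact~\ref{fact:thmclr} produces a non-empty compact $K_{q_0}^{\circ}$ with $V_r(\calA_{q_0}^{\circ}) = K_{q_0}^{\circ}$.

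The two structural facts I would establish are: (i) if $u$ labels a path in $\calA$ from $p_0$ to $q_0$, then $uv \in L(\calA)$ for every $v \in L(\calA_{q_0}^{\circ})$ — the run of $uv$ is the run of $u$ followed by a run confined to $C$, all of whose states are accepting since $\calA$ is closed — so $\xi_u(K_{q_0}^{\circ}) \subseteq \Gamma(f)$; since $\xi_u$ scales both coordinates by the same factor, $K_{q_0}^{\circ}$ is an affine copy of a subset of the function graph $\Gamma(f)$, hence is itself the graph of a function, and, being compact, the graph of a continuous function $g_{q_0} \colon \pi(K_{q_0}^{\circ}) \to [0,1]$; and (ii) every point of $\Gamma(f)$ has a representation accepted by $\calA$ (as $\Gamma(f) = V_r(\calA)$) whose run is eventually confined to a single strongly connected component $C$, because the condensation of $\calA$ is acyclic, so that point lies in $\xi_u(K_{q_0}^{\circ})$ for some path $u$ from $p_0$ to a state $q_0 \in C$. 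Hence $\Gamma(f)$ is the countable union of all the sets $\xi_u(K_{q_0}^{\circ})$.

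I would then split the states: call $q_0$ \emph{good} if $\pi(K_{q_0}^{\circ}) = [0,1]$. For good $q_0$ the map $g_{q_0} \colon [0,1] \to [0,1]$ is continuous and its graph is $r$-accepted by the strongly connected deterministic automaton $\calA_{q_0}^{\circ}$, so Theorem~\ref{prop:affine} makes $g_{q_0}$ $\Q$-affine of some slope $\alpha(q_0) \in \Q$; then for any path $u$ of length $n$ from $p_0$ to $q_0$ the set $\pi(\xi_u(K_{q_0}^{\circ}))$ is an entire $r^{-n}$-interval and $\xi_u(K_{q_0}^{\circ}) \subseteq \Gamma(f)$ is the graph over it of a $\Q$-affine map of slope $\alpha(q_0)$, which forces $f$ to agree with that map on that whole $r^{-n}$-interval. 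For non-good $q_0$, $\pi(K_{q_0}^{\circ})$ is a proper closed subset of $[0,1]$, hence has empty interior by Corollary~\ref{cor:lemint} applied to $\calA_{q_0}^{\circ}$, hence is nowhere dense and therefore Lebesgue null by Proposition~\ref{prop:porous}; consequently each $\pi(\xi_u(K_{q_0}^{\circ}))$ is closed, nowhere dense and null. Combining with the decomposition of $\Gamma(f)$ from step~(ii): outside the set $N$ — the countable union of the null projections coming from non-good states together with the countable set of endpoints $m/r^{n}$ — every $x \in [0,1]$ lies in the interior of an $r^{-n}$-interval on which $f$ is $\Q$-affine with one of the finitely many slopes occurring as some $\alpha(q_0)$ with $q_0$ good; enumerate these slopes as $\alpha_1, \dots, \alpha_m$.

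Finally I would take $U_i := \{ x \in (0,1) : f \text{ is affine of slope } \alpha_i \text{ on some neighborhood of } x \}$. Each $U_i$ is open; it is definable in $\mathscr{T}_r$ from the $\mathscr{T}_r$-definable set $\Gamma(f)$ and the rational number $\alpha_i$, hence $r$-recognizable by Fact~\ref{BRW}; the $U_i$ are pairwise disjoint because a point cannot carry two distinct slopes; and on each connected component $J$ of $U_i$ the function $y \mapsto f(y) - \alpha_i y$ is locally constant, so $f|_J$ is affine of slope $\alpha_i$. Since $[0,1] \setminus \bigcup_i U_i \subseteq N$ and $N$ is Lebesgue null, this proves Theorem~\ref{nowhere} (and hence Theorem A, taking $C := [0,1] \setminus \bigcup_i U_i$, which is closed as the complement of an open set and nowhere dense since a closed set contained in a meager set has empty interior). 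The main obstacle is step~(i): setting up the sub-automata $\calA_{q_0}^{\circ}$ so that Fact~\ref{fact:thmclr} applies and recognizing that their attractor pieces are function graphs because affine copies of them embed in $\Gamma(f)$ — this is exactly what lets Theorem~\ref{prop:affine} be invoked — together with the null-set bookkeeping that packages the good pieces into the finitely many recognizable sets $U_i$.
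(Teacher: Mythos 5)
Your overall route---decomposing $\Gamma(f)$ over all paths into non-trivial strongly connected components and splitting states into good/bad according to whether $\pi(K_{q_0}^{\circ})=[0,1]$---differs from the paper, which first prunes the recognizing automaton to a \emph{full} one (every sink contains a state whose accepted set projects onto $[0,1]$) and then proves density of the affine locus by an explicit run-extension argument into a good sink. Most of your steps are sound: the sub-automata $\calA_{q_0}^{\circ}$ are trim, closed, deterministic and strongly connected, so Fact~\ref{fact:thmclr} applies; $\xi_u(K_{q_0}^{\circ})\subseteq\Gamma(f)$; every accepting run is eventually confined to one component; and the good pieces are $\Q$-affine by Theorem~\ref{prop:affine} (here you should also note why $g_{q_0}$ is $r$-\emph{regular}, e.g.\ because for good $q_0$ one has $\xi_u(\Gamma(g_{q_0}))=\Gamma(f)\cap\xi_u\big({[0,1]}^2\big)$, since the theorem's hypothesis asks for regularity of the function and not merely acceptance of its graph).

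The genuine gap is the nullity of the bad pieces. You assert that for non-good $q_0$ the set $\pi(K_{q_0}^{\circ})$ is ``nowhere dense and therefore Lebesgue null by Proposition~\ref{prop:porous}.'' Nowhere dense subsets of $[0,1]$ need not be null (fat Cantor sets), and Proposition~\ref{prop:porous} gives nullity only for porous sets, porosity being supplied by Lemma~\ref{lem:porous} under the hypothesis that the set is \emph{$r$-recognizable} and nowhere dense. Your sets $K_{q_0}^{\circ}$ and their projections are so far only known to be $r$-\emph{accepted} by the sub-automata, and the paper deliberately avoids the (true) implication ``acceptable implies recognizable''; so as written the covering set $N$ is not known to be null, and your inclusion $[0,1]\setminus\bigcup_i U_i\subseteq N$ proves nothing yet. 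Either establish recognizability of these accepted sets (or invoke the acceptable-implies-recognizable fact with a proof or reference), or reroute within the paper's toolkit: since $[0,1]=\bigcup_u\pi\big(\xi_u(K^{\circ}_{q_0(u)})\big)$ is a countable union of compact sets and the bad pieces are nowhere dense by Corollary~\ref{cor:lemint}, Baire's theorem shows every open interval contains an open subinterval lying inside a good piece, hence on which $f$ is affine with slope in your finite set $S$; thus $\bigcup_i U_i$ is dense, its complement is closed, nowhere dense and $r$-recognizable (by your Fact~\ref{BRW} argument), and only then does Proposition~\ref{prop:porous} yield that it is Lebesgue null.
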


\noindent We first observe that we can reduce to the case of functions $r$-accepted by certain well-behaved automata.

\begin{prop}\label{Th57}
Let $f:[0,1]\to [0,1]$ be an $r$-regular continuous function.
Then its graph $\Gamma(f)$ is $r$-accepted by a trim, deterministic, closed B\"uchi automaton.
\end{prop}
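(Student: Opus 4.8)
The plan is to start with an arbitrary Büchi automaton recognizing $\Gamma(f)$ and massage it into the desired shape in three moves: first make it closed (so that we switch from "recognizes" to "accepts"), then make it weak and determinize, then trim it. The starting point is an automaton $\calA_0$ that $r$-recognizes $\Gamma(f)$. Since $\Gamma(f)$ is closed in ${[0,1]}^2$ (being the graph of a continuous function on a compact set), we would like to apply Fact~\ref{closed}: after trimming $\calA_0$ we may pass to its closure $\bar{\calA_0}$, which is closed and $r$-accepts $\overline{\Gamma(f)} = \Gamma(f)$. So already we have a trim, closed automaton that $r$-accepts $\Gamma(f)$; the remaining work is to make it deterministic while preserving closedness (hence the accepted set) and trimness.

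The key observation is that a closed automaton is weak (stated in the preliminaries: "every closed automaton is weak"), and the excerpt records that every weak automaton has an equivalent \emph{weak deterministic} automaton $\calA'$ with $L(\calA) = L(\calA')$ (Boigelot--Jodogne--Wolper). So from $\bar{\calA_0}$ we obtain a weak deterministic automaton $\calA_1$ with $L(\calA_1) = L(\bar{\calA_0})$, hence $V_r(\calA_1) = \Gamma(f)$. Now I would trim $\calA_1$ to get $\calA_2$: removing non-accessible and non-co-accessible states does not change the accepted language, and it preserves determinism. The one remaining point is to check that $\calA_2$ is still \emph{closed}. This is where a little care is needed: trimming a weak automaton keeps it weak, but "weak" is not literally "closed." However, a trim weak automaton that accepts a closed set can be replaced by its closure without changing the accepted set (again Fact~\ref{closed}), and the closure of a deterministic automaton is still deterministic (we only relabel accept states) and still trim (accessibility and co-accessibility, where now every state is accepting, are unaffected — co-accessibility becomes automatic). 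So the final automaton is $\bar{\calA_2}$: trim (all states co-accessible since all are final, and accessibility was preserved), deterministic, and closed, and it $r$-accepts $\overline{\Gamma(f)} = \Gamma(f)$.

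The main obstacle — really the only subtle point — is making sure the three operations (closure, determinization, trimming) can be composed without one undoing the effect of another: determinization is the step that is not available for arbitrary Büchi automata, so it must be performed while the automaton is weak, and then one must re-close and re-trim afterward, verifying that re-closing does not destroy determinism and that re-trimming does not destroy closedness of the already-closed-then-trimmed automaton. The bookkeeping above resolves this by ending with $\bar{\calA_2}$, whose trimness is automatic precisely because all its states are accepting. I expect the write-up to be short, essentially a chain of applications of facts already assembled in Section~2.
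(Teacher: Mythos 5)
Your proposal is correct and follows essentially the same route as the paper: pass to a trim automaton accepting the (closed) graph, apply Fact~\ref{closed} to make it closed, use that closed implies weak to invoke the Boigelot--Jodogne--Wolper determinization, and then re-close (and re-trim) via Fact~\ref{closed} at the end. The paper's write-up is just a terser version of the same chain, leaving implicit the bookkeeping you spell out about closure preserving determinism and trimness.
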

\begin{proof}
As $f$ is continuous, its graph $\Gamma(f)$ is closed in ${[0,1]}^2$.
Thus, there is a trim B\"uchi automaton $\calA$ $r$-accepting $\Gamma(f)$ which, by Fact~\ref{closed}, we may assume to be closed. A closed automaton is weak. Therefore there is a deterministic trim B\"uchi automaton $\calA'$ such that $L(\calA)=L(\calA')$. Again, we can take $\calA'$ to be closed by Fact~\ref{closed}.
\end{proof}

\noindent Throughout this section, let $\pi : \R^2 \to \R$ be the coordinate projection onto the first coordinate.

\subsection{Reduction to full automata}
Given a B\"uchi automaton $\calA =(Q,\Sigma,\Delta,I,F)$ and subsets $Q_1,Q_2$ of $Q$ with $Q_2\subseteq Q_1$, we let $\calA_{Q_1,Q_2}$ be the automaton
\[
(Q_1,\Sigma,\Delta \cap (Q_1 \times \Sigma \times Q_1),Q_2,F\cap Q_1).
\]


\begin{defi} Let $\calA$ be a B\"uchi automata. We say $\calA$ is \textbf{full} if  for every sink $Q'$ of $\calA$ there is $q \in Q'$ such that
\[
\pi(V_r(\calA_{Q',\{q\}})) = [0,1].
\]
\end{defi}

\begin{lem}\label{cor:connectedcomp} Let $f: [0,1] \to [0,1]$ be an  $r$-regular function. Then there is a closed full deterministic B\"uchi automaton $\calA$ that $r$-accepts $\Gamma(f)$.
\end{lem}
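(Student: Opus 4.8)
The plan is to start from the automaton provided by Proposition~\ref{Th57}: a trim, deterministic, closed B\"uchi automaton $\calA$ that $r$-accepts $\Gamma(f)$. This automaton is closed, hence weak, and it remains deterministic throughout the construction below, so the only property we need to engineer is fullness. The key observation is that fullness is a condition on the sinks of $\calA$: for each sink $Q'$ (a strongly connected component with no outgoing edges in the condensation), the automaton $\calA_{Q',Q'}$ restricted to $Q'$ is strongly connected, weak (since it is a subautomaton of a closed one) and deterministic, so Corollary~\ref{cor:interiorallstates} applies to it. Therefore, for a fixed sink $Q'$, either $\pi(K_q) = [0,1]$ for \emph{every} $q \in Q'$, or $\pi(K_q)$ has empty interior for every $q \in Q'$, where $(K_q)_{q \in Q}$ is the attractor of $\calG_{\calA}$ (which by Fact~\ref{fact:thmclr} and the state-independence of $\calG_{\calA}$ satisfies $V_r(\calA_{Q',\{q\}}) = K_q$). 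A sink thus comes in two flavors: the "good" ones, where $\pi(V_r(\calA_{Q',\{q\}})) = [0,1]$ for all $q \in Q'$ (so the fullness requirement is automatically met), and the "bad" ones, where $\pi(K_q)$ has empty interior for all $q \in Q'$.

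The next step is to argue that bad sinks can be eliminated without changing the accepted set $\Gamma(f)$. If $Q'$ is a bad sink, then each $K_q$ with $q \in Q'$ projects to a nowhere dense subset of $[0,1]$; being $r$-recognizable (it equals $V_r(\calA_{Q',\{q\}})$), it is in fact Lebesgue null by Proposition~\ref{prop:porous}, and has Hausdorff dimension $< 1$. Since $\Gamma(f)$ is the graph of a Lipschitz function (Fact~\ref{fact:csv}), the portion of $\Gamma(f)$ contributed by paths that are eventually trapped in $Q'$ lies over a nowhere dense, null set. The plan is to show this portion is redundant: because $f$ is continuous with domain all of $[0,1]$, and because removing the states of $Q'$ from $\calA$ leaves an automaton $\calA''$ that still $r$-accepts a closed set (apply Fact~\ref{closed} after trimming) whose projection is dense in $[0,1]$, one concludes $V_r(\calA'') = \Gamma(f)$ — the removed piece is recovered as a limit of the remaining pieces, using that $\Gamma(f)$ is closed and that $\calA''$ remains closed. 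Iterating this removal over all bad sinks (the automaton shrinks each time, so the process terminates) yields an automaton all of whose sinks are good, hence full. One then re-applies Proposition~\ref{Th57}'s cleanup (trimming, and Fact~\ref{closed} to restore closedness) to the result; determinism is preserved by trimming and by the subautomaton operation, so we land on a closed, full, deterministic, trim B\"uchi automaton $r$-accepting $\Gamma(f)$, as desired.

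I expect the main obstacle to be the redundancy argument: verifying rigorously that deleting a bad sink does not shrink $V_r(\calA)$. The subtlety is that a word whose run is eventually trapped in $Q'$ encodes a point $(x, f(x))$ with $x$ in a null set, and we must produce a \emph{different} accepted word for that same point whose run avoids $Q'$. The clean way is topological rather than combinatorial: let $\calA''$ be the trim part of $\calA$ with the states of $Q'$ deleted. Since $\calA''$ is still closed (or can be closed via Fact~\ref{closed}), $V_r(\calA'')$ is closed; it is contained in $\Gamma(f)$; and its projection contains $[0,1]$ minus the projections of the bad sinks, which is dense since those projections are nowhere dense. A closed subset of the graph of a continuous function whose projection is dense in $[0,1]$ must, by continuity of $f$, equal the whole graph. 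This handles all bad sinks simultaneously if one deletes them all at once, avoiding induction bookkeeping. The remaining routine checks — that $\calA_{Q',Q'}$ is strongly connected and weak, that $\xi_\sigma(K_q)$-type sets are $r$-recognizable so that Proposition~\ref{prop:porous} applies, and that determinism and trimness survive each operation — are straightforward and can be dispatched briefly.
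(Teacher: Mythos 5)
Your overall route is the same as the paper's: take the automaton of Proposition~\ref{Th57}, classify sinks as good or bad via Corollaries~\ref{cor:lemint} and~\ref{cor:interiorallstates}, delete the bad sinks, and argue the accepted set does not shrink because the deleted contribution is topologically negligible inside the graph of a continuous function. The genuine gap is in your justification of the density of $\pi(V_r(\calA''))$. You assert that this projection ``contains $[0,1]$ minus the projections of the bad sinks, which is dense since those projections are nowhere dense.'' That containment is false in general: a point $x$ can fail to lie in $\pi(V_r(\calA''))$ because every accepted word encoding $(x,f(x))$ has a run that enters a bad sink $Q'$ only after reading some nontrivial prefix of length $n$; such an $x$ lies in a \emph{scaled translate} $v_r(w_1|_n)+r^{-n}\,\pi\bigl(V_r(\calA_{Q',\{q\}})\bigr)$, not in $\pi(V_r(\calA_{Q',\{q\}}))$ itself. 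The obstruction set is therefore a countable union, over all finite prefixes leading into bad sinks, of scaled translates of nowhere dense closed sets, not the fixed finite union your sentence uses. The conclusion can still be reached, but it needs an additional idea: either invoke the Baire category theorem for this countable union of nowhere dense sets, or do what the paper does, namely suppose some open interval $J$ misses $\pi(V_r(\calA'))$, pick inside it an $r$-adic cylinder $J'$ reached by a single prefix of length $n$, and show that $\Gamma(f|_{J'})$ is covered by the \emph{finitely} many sets $v_{(r,r)}(w_1\times u)+r^{-n}V_r(\calA_{Q',Q'})$, $u\in\Sigma_r^n$, contradicting nowhere density of $\pi(V_r(\calA_{Q',Q'}))$. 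As written, your ``clean topological'' step skips exactly the work that constitutes the heart of the paper's proof of this lemma.

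A secondary point: deleting all currently bad sinks simultaneously does not by itself yield fullness, because after deletion (and trimming) new sinks appear---strongly connected components whose only outgoing edges led into the deleted states---and these may again be bad. So the iteration you mention in your second paragraph (or the paper's induction on the number of strongly connected components) is genuinely required; the closing suggestion that a one-shot deletion ``avoids induction bookkeeping'' does not work for establishing fullness of the final automaton. The remaining checks you defer are indeed routine (sub-automata of a closed deterministic automaton are closed, weak and deterministic; restriction to a sink is strongly connected; $q_0$ cannot lie in a bad sink since $\pi(\Gamma(f))=[0,1]$; closedness and trimness give that each $V_r(\calA_{Q',\{q\}})$ is closed, so empty interior of its projection does give nowhere density), and with the Baire-category repair and the iteration your plan becomes a correct, mildly different packaging of the paper's argument.
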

\begin{proof}
Let $\calA = (Q,\Sigma_r^2,\Delta,\{q_0\},F)$ be a closed deterministic B\"uchi automaton that $r$-accepts $\Gamma(f)$. We prove that there is a closed full deterministic B\"uchi automaton $\calA'$ such that $V_r(\calA')=\Gamma(f)$. We do so by induction on the number of strongly connected components of $\calA$.

\medskip
First suppose that $\calA$ is strongly connected. Then $Q$ is the only sink of $\calA$ and $\calA_{Q,\{q_0\}}=\calA$. Thus
\[
\Gamma(f) = V_r(\calA) = V_r(\calA_{Q,\{q_0\}}).
\]
Thus $\pi(V_r(\calA_{Q,\{q_0\}})) = [0,1]$ by Corollary~\ref{cor:lemint}.

\medskip
Now suppose $\calA$ has $n$ strongly connected components, where $n>1$. Let $Q_{0}$ be the set of states in the strongly connected component containing $q_0$. Set $X_0:=V_r(\calA_{Q_0,\{q_0\}})$.

\medskip
First suppose the interior of $\pi(X_0)$ is non-empty. Since $\calA_{Q_0,\{q_0\}}$ is strongly connected and closed, we obtain $\pi(X_0)=[0,1]$ by Corollary~\ref{cor:lemint}. Since $X_0 \subseteq \Gamma(f)$, we have $X_0=\Gamma(f)$. We declare $\calA':= \calA_{Q_0,\{q_0\}}$.

\medskip
We have reduced to the case that the interior of $\pi(X_0)$ is empty. In this case, $Q_0$ cannot be a sink. Suppose there is a sink $Q'\subseteq Q$ such that
\[
\pi(V_r(\calA_{Q',\{q\}}))\neq [0,1]
\]
for all $q\in Q'$. Since $\calA_{Q',\{q\}}$ is closed, we have that $\pi(V_r(\calA_{Q',\{q\}}))$ is closed and thus nowhere dense for all $q\in Q'$.
Observe that
\[
L(\calA_{Q',Q'}) = \bigcup_{q\in Q'} L(\calA_{Q',\{q\}}),
\]
so $\pi(V_r(\calA_{Q',Q'}))$ is nowhere dense. Set
\[
\calA':=\calA_{Q\setminus Q',\{q_0\}}.
\]
Since $A'$ has $n-1$ strongly connected components,  it is left to show that
\[
V_r(\calA) = V_r(\calA').
\]
Since the set on the right is included in the set on the left and both sets are closed, it suffices to show that $V_r(\calA')$ is dense in $V_r(\calA)$.

\medskip
Since $V_r(\calA)=\Gamma(f)$, it is enough to show that
\[
J \cap \pi(V_r(\calA')) \neq \emptyset
\]
for every open interval $J\subseteq [0,1]$. Suppose that this fails for some $J$. Then for every $w \in L(\calA)$ with $\pi(v_{(r,r)}(w))\in J$, the acceptance run of $w$  is eventually in $Q'$.
Pick $n\in \N$, $q \in Q'$ and $w_1\times w_2\in P_{q_0,q}(n)$ such that
\[
[v_r(w_1), v_r(w_1) + r^{-n}] \subseteq J.
\]
Set $J':= \big[v_r(w_1), v_r(w_1) + r^{-n}\big]$.

\medskip We end the proof by showing that
\[
\Gamma(f|_{J'}) \subseteq \bigcup_{u \in \Sigma_r^n} \big(v_{(r,r)}(w_1\times u) + r^{-n}V_r(\calA_{Q',Q'})\big).
\]
As $\pi(V_r(\calA_{Q',Q'}))$ is nowhere dense, this containment is a contradiction. Let $(x,y) \in \Gamma(f|_{J'})$ and $u_1 \times u_2 \in L(\calA)$ be such that $v_{(r,r)}(u_1\times u_2)=(x,y)$.  Let $s_0s_1s_2\ldots \in Q^{\omega}$ be an acceptance run of $u_1 \times u_2$. Recall that for $\sigma \in \Sigma_r^{\omega}$ we write $\sigma|_n$ for $\sigma_{1}\dots \sigma_{n}$. Note that $u_1|_n = w_1$, so we have that $v_r(z_1) \in J'$ for any word $z_1 \times z_2 \in L(\calA)$ with $z_1|_n \times z_2|_n = u_1|_n \times u_2|_n$. As $J' \subseteq J$, this means that any acceptance run of any such word $z_1\times z_2$ is eventually in $Q'$. Because $\calA$ is closed, this shows that $s_n$ is necessarily in $Q'$. Thus
\[
(x,y) \in v_{(r,r)}(w_1\times u_2|_n) + r^{-n}V_r(\calA_{Q',Q'}).\qedhere
\]
\end{proof}

\subsection{Proof of Theorem~\ref{nowhere}} Let $f: [0,1] \to [0,1]$ be a continuous $r$-regular function. By Proposition~\ref{Th57} and Lemma~\ref{cor:connectedcomp} there is a closed full deterministic B\"uchi automaton $\calA=(Q,\Sigma_r^2,\Delta,\{q_0\},F)$ that $r$-recognizes $\Gamma(f)$. Fix this automaton $\calA$ for the rest of this section.

\begin{lem}\label{lem:scaffine}
Let $Q'$ be a sink of  $\calA$. Then there is $q\in Q'$  and a $\Q$-affine function $g: [0,1] \to [0,1]$ such that $\Gamma(g) = V_r(\calA_{Q',\{q\}})$.
\end{lem}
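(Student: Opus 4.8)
The plan is to use the hypothesis that $\calA$ is \emph{full} together with the structural results already established for strongly connected weak deterministic automata. First I would observe that $\calA_{Q',Q'}$ is a strongly connected, weak (since $\calA$ is closed, hence weak), deterministic Büchi automaton, since $Q'$ is a strongly connected component of $\calA$ and restricting $\calA$ to $Q'$ preserves determinism and the closedness/weakness of the accept set. By fullness, there is some $q\in Q'$ with $\pi(V_r(\calA_{Q',\{q\}})) = [0,1]$. Writing $K := V_r(\calA_{Q',\{q\}})$, Fact~\ref{fact:thmclr} identifies $K$ with the attractor component $K_q$ of the GDIFS $\calG_{\calA_{Q',Q'}}$, and Corollary~\ref{cor:interiorallstates} then gives $\pi(K_{q'}) = [0,1]$ for \emph{every} $q'\in Q'$.

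Next I would show that $K$ is the graph of a function $g:[0,1]\to[0,1]$. Since $\pi(K) = [0,1]$, it suffices to check $K$ contains no two points with the same first coordinate. But $K = V_r(\calA_{Q',\{q\}}) \subseteq V_r(\calA) = \Gamma(f)$: indeed $\calA_{Q',\{q\}}$ is a subautomaton of $\calA$ with the same transition structure restricted to $Q'$ and accept states contained in those of $\calA$, so any word accepted by $\calA_{Q',\{q\}}$ is accepted by $\calA$ (here one uses that $q$ is accessible in $\calA$ — actually one must be slightly careful, since changing the initial state could break this; but $V_r(\calA_{Q',\{q\}}) \subseteq V_r(\calA_{Q',Q'})$, and a word accepted by $\calA_{Q',Q'}$, having all its states in a single sink $Q'$ of $\calA$, is a suffix of an accepted word of $\calA$ — so $V_r(\calA_{Q',\{q\}})$, being a scaled translate away from $\Gamma(f)$ in the sense of the cone construction, forces $K \subseteq \overline{\Gamma(f)} = \Gamma(f)$). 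Granting $K\subseteq\Gamma(f)$, since $f$ is a function $K$ projects injectively, so $K = \Gamma(g)$ for a function $g$ with $g = f|$ in the appropriate sense; and $g$ is continuous because $\Gamma(g) = K$ is compact and $g$ is a function on all of $[0,1]$.

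Now I would apply Theorem~\ref{prop:affine}. The automaton $\calA_{Q',\{q\}}$ is strongly connected and deterministic, and after passing to its trim part (it is already trim since $Q'$ is strongly connected and we may also close it, which does not change the accepted language since every state already lies on a cycle) it $r$-accepts $\Gamma(g)$ with $g$ continuous and $r$-regular. Theorem~\ref{prop:affine} then yields that $g$ is $\Q$-affine, which is exactly the conclusion. The main obstacle I anticipate is the careful bookkeeping in the second paragraph: verifying rigorously that $V_r(\calA_{Q',\{q\}}) \subseteq \Gamma(f)$ despite the change of initial states. The clean way around this is to work instead with $\calA_{Q',Q'}$, note $Q'$ is a sink reachable from $q_0$ in $\calA$, so there is a path $(\delta_1,\dots,\delta_n)$ from $q_0$ into $Q'$, whence by Fact~\ref{fact:attr0} the scaled copy $(S_{\delta_1}\circ\cdots\circ S_{\delta_n})(K_{q'}) \subseteq K_{q_0} = \Gamma(f)$ for $q'\in Q'$; since similarities are injective and affine, $K_{q'}$ itself is then the graph of a continuous function (scaled copy of a subset of $\Gamma(f)$), and we apply Theorem~\ref{prop:affine} to conclude $K_{q'}$ — and hence each $V_r(\calA_{Q',\{q'\}}) = K_{q'}$ — is a $\Q$-affine graph.
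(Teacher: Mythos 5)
Your final ``clean way'' is correct and is essentially the paper's own proof, just phrased in attractor language: the paper takes the label $\sigma$ of a path of length $n$ from $q_0$ to the state $q$ supplied by fullness, notes that $\{\sigma w : w \in L(\calA_{Q',\{q\}})\} \subseteq L(\calA)$ (using that $\calA$ is closed), hence $v_{(r,r)}(\sigma) + r^{-n}V_r(\calA_{Q',\{q\}}) \subseteq \Gamma(f)$, and then uses $\pi(V_r(\calA_{Q',\{q\}})) = [0,1]$ to see this scaled translate equals $\Gamma(f|_J)$, so that $V_r(\calA_{Q',\{q\}})$ is the graph of a continuous function, which is $\Q$-affine by Theorem~\ref{prop:affine}; your route via Fact~\ref{fact:attr0} and injectivity of the similarity $S_{\delta_1}\circ\cdots\circ S_{\delta_n}$ is the same computation. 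However, the claim in your middle paragraph that $V_r(\calA_{Q',\{q\}}) \subseteq V_r(\calA) = \Gamma(f)$ is genuinely false, and the ``suffix'' justification does not repair it: a word accepted from inside the sink corresponds to a point of a rescaled, translated copy of $V_r(\calA_{Q',\{q\}})$ sitting inside $\Gamma(f)$, not to a point of $\Gamma(f)$ itself. Concretely, for the automaton of Figure~\ref{figure:distancecantor2} (which is closed, deterministic and full), the sink $\{q_3\}$ gives $V_3(\calA_{\{q_3\},\{q_3\}}) = \{(x,1-x) : x \in [0,1]\}$, which is not contained in the graph of the Cantor distance function; so that paragraph should simply be deleted rather than hedged. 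Two small points to make explicit in the surviving argument: first, when you combine Fact~\ref{fact:thmclr} applied to $\calA_{Q',\{q\}}$ with Fact~\ref{fact:attr0} applied to $\calA$, you are identifying the restriction to $Q'$ of the attractor of $\calG_{\calA}$ with the attractor of $\calG_{\calA_{Q',Q'}}$; this is valid because $Q'$ is a sink (no transitions leave $Q'$) together with uniqueness of attractors, but it deserves a sentence. Second, Corollary~\ref{cor:interiorallstates} should be invoked for $\calA_{Q',\{q\}}$ (singleton initial state, hence deterministic in the paper's sense) rather than $\calA_{Q',Q'}$; this is harmless since the attractor is independent of the initial states, and in fact for this lemma you only need $\pi(K_q)=[0,1]$ for the single $q$ furnished by fullness.
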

\begin{proof}
Since $\calA$ is full, there is a state $q\in Q'$ such that
\[
\pi(V_r(\calA_{Q',\{q\}})) = [0,1].
\]
For convenience, set $\calA':= \calA_{Q',\{q\}}$.
Consider a path
\[
((s_0,\sigma_1,s_1),\dots,(s_{n-1},\sigma_n,s_n))
\]
on $\calA$ such that $s_0=q_0$ and $s_n=q$. Set $\sigma:=\sigma_1\dots \sigma_n$. Then
\[
\{ \sigma w \ : \ w \in L(\calA') \} \subseteq L(\calA).
\]
Thus
\[
v_{(r,r)}(\sigma) + r^{-n}V_r(\calA')= v_{(r,r)}\{ \sigma w \ : \ w \in L(\calA')\}   \subseteq \Gamma(f).
\]
Since $\pi(V_r(\calA'))=[0,1]$, we have that $\pi\big( v_{(r,r)}(\sigma) + r^{-n}V_r(\calA')\big)$ is a closed interval $J \subseteq [0,1]$. Therefore
\[
v_{(r,r)}(\sigma) + r^{-n}V_r(\calA') = \Gamma(f|_J)
\]
and so $V_r(\calA')$ is the graph of a continuous function. Since $\calA'$ is strongly connected, this function has to be affine by Theorem~\ref{prop:affine}.
\end{proof}


\begin{prop}\label{prop:localaffine}
Let $Q'\subseteq Q$ be a sink of $\calA$ and $q\in Q'$ be such that $\pi(V_r(\calA_{Q',\{q\}}))=[0,1]$.
Then there is a $\Q$-affine function $g: [0,1] \to [0,1]$ such that
\[
f(v_r(w_1)+r^{-n}x) = v_r(w_2) + r^{-n}g(x)
\]
 for all $n\in \N$, $w_1 \times w_2\in P_{q_0,q}(n)$, and $0 \leq x \leq 1$.

\end{prop}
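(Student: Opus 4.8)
The plan is to take $g$ to be the function whose graph equals $V_r(\calA_{Q',\{q\}})$, to note that $g$ is $\Q$-affine by Theorem~\ref{prop:affine}, and then to read off the displayed identity by concatenating labels of paths from $q_0$ to $q$ with accepting runs living inside the sink. The first step — that $V_r(\calA_{Q',\{q\}})$ is the graph of a $\Q$-affine function — is essentially Lemma~\ref{lem:scaffine}, applied to the particular $q$ given in the hypothesis rather than to the state handed to us by fullness, so I would reproduce that argument. Fix any path in $\calA$ from $q_0$ to $q$, say of length $k$ and with label $u_1\times u_2$. Since $Q'$ is a sink and $\calA$ is closed, every state of $Q'$ lies on a nonempty cycle (otherwise $\calA_{Q',\{q\}}$ would accept no infinite word, contradicting $\pi(V_r(\calA_{Q',\{q\}}))=[0,1]$), hence $Q'\subseteq F$; so concatenating this path with any accepting run of $\calA_{Q',\{q\}}$ produces an accepting run of $\calA$, and therefore $v_{(r,r)}(u_1\times u_2)+r^{-k}V_r(\calA_{Q',\{q\}})\subseteq\Gamma(f)$. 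The first-coordinate projection of the left-hand side is the nondegenerate interval $[v_r(u_1),v_r(u_1)+r^{-k}]$, so since $f$ is a function this containment is an equality with $\Gamma(f|_{[v_r(u_1),v_r(u_1)+r^{-k}]})$. Rescaling, $V_r(\calA_{Q',\{q\}})$ is the graph of the continuous function $x\mapsto r^{k}\bigl(f(v_r(u_1)+r^{-k}x)-v_r(u_2)\bigr)$ on $[0,1]$; as $\calA_{Q',\{q\}}$ is a strongly connected deterministic B\"uchi automaton $r$-accepting this graph, Theorem~\ref{prop:affine} forces the function to be $\Q$-affine. Call it $g$, and note $g:[0,1]\to[0,1]$ because $V_r(\calA_{Q',\{q\}})\subseteq[0,1]^2$.

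Next I would fix $n\in\N$ and $w_1\times w_2\in P_{q_0,q}(n)$, witnessed by a path of length $n$ from $q_0$ to $q$. Exactly as above, for every $\tau\in L(\calA_{Q',\{q\}})$ the word $(w_1\times w_2)\tau$ lies in $L(\calA)$, so $v_{(r,r)}(w_1\times w_2)+r^{-n}V_r(\calA_{Q',\{q\}})\subseteq\Gamma(f)$. Writing $v_{(r,r)}(w_1\times w_2)=(v_r(w_1),v_r(w_2))$ and $V_r(\calA_{Q',\{q\}})=\Gamma(g)$, and using that $v_r(w_1)+r^{-n}x\in[0,1]$ for every $x\in[0,1]$, this containment is precisely the assertion $f(v_r(w_1)+r^{-n}x)=v_r(w_2)+r^{-n}g(x)$ for all $0\le x\le 1$. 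Observe that the same $g$ works for every $n$ and every $w_1\times w_2\in P_{q_0,q}(n)$, since $g$ depends only on $q$.

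I do not expect a serious obstacle: the argument is a routine concatenation-of-runs computation of the type already carried out in Lemma~\ref{lem:scaffine}. The points that need care are purely bookkeeping — verifying $Q'\subseteq F$ so that the glued runs are genuinely accepting, and verifying that the affine copies of $V_r(\calA_{Q',\{q\}})$ actually sit inside $[0,1]^2$ and project onto honest subintervals of $[0,1]$ so that the ``subset of a function graph'' reasoning is meaningful — while the one substantive input is Theorem~\ref{prop:affine}, which supplies the upgrade from continuity in the strongly connected case to $\Q$-affineness.
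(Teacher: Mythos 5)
Your proposal is correct and follows essentially the same route as the paper: the paper obtains the $\Q$-affine $g$ with $\Gamma(g)=V_r(\calA_{Q',\{q\}})$ by citing Lemma~\ref{lem:scaffine} and then performs exactly your run-concatenation computation $f(v_r(w_1u_1))=v_r(w_2u_2)=v_r(w_2)+r^{-n}g(x)$. The only difference is that you re-derive the lemma's argument for the particular $q$ in the hypothesis rather than the $q$ supplied by fullness (a point the paper glosses over when citing Lemma~\ref{lem:scaffine}), which is a harmless and slightly more careful bookkeeping choice, not a different method.
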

\begin{proof}
By Lemma~\ref{lem:scaffine} there is a $\Q$-affine function $g: [0,1] \to [0,1]$ such that
\[
\Gamma(g) = V_r(\calA_{Q',\{q\}}).
\]
Let $w_1 \times w_2\in P_{q_0,q}(n)$. Let $x \in [0,1]$ and take $u_1 \times u_2 \in L(\calA_{Q',\{q\}})$ be such that $v_r(u_1)=x$.
 Then $(w_1\times w_2)(u_1\times u_2)\in L(\calA)$. Observe that $v_r(u_2)=g(x)$. Thus
\[
f(v_r(w_1)+r^{-n}x) = f(v_{r}(w_1u_1)) = v_r(w_2u_2) =v_r(w_2) + r^{-n} v_r(u_2) = v_r(w_2) + r^{-n}g(x).\qedhere
\]
\end{proof}

\begin{proof}[Proof of Theorem~\ref{nowhere}]
Let $Z$ be the set of all pairs $(Q',q)$ such that $Q'$ is a sink of $\calA$, $q\in Q'$ and $\pi(V_r(\calA_{Q',\{q\}}))=[0,1]$. For each pair $(Q',q)\in Z$, we obtain a $\Q$-affine function $g: [0,1] \to [0,1]$ that satisfies the conclusion of Proposition~\ref{prop:affine}.
Since $Z$ is finite, there is a finite set $S\subseteq \Q$ such that each of these $\Q$-affine functions has a slope in $S$. For each $\alpha\in S$, let $Y_\alpha$ be the set of all $x\in [0,1]$ such that there is $\varepsilon >0$ for which the restriction of $f$ to $[x - \varepsilon,x + \varepsilon]$ is an affine function with slope $\alpha$. Let $U_\alpha$ be the interior of $Y_\alpha$. Using Fact~\ref{BRW} it is easy to check that both $Y_\alpha$ and $U_\alpha$ are $r$-recognizable. Set $Y:=\bigcup_{\alpha\in S} U_\alpha$. By Proposition~\ref{prop:porous} it is left to show that 
$Y$ is dense in $[0,1]$.



\medskip
Let $J\subseteq [0,1]$ be an open interval. It is enough to find an open subinterval $J' \subseteq J$ such that the restriction of $f$ is affine on $J'$ with slope in $S$. Let $x\in J$ and $n \in \N$ be such that there are $w_1,\dots,w_n\in \Sigma_r$ satisfying $x=v_r(w_1\dots w_n0\dots)$ and $[x-r^{-(n-1)},x+r^{-(n-1)}]\subseteq J$. Observe that $x$ has two $r$-ary representations:
\[
x = \sum_{i=1}^{n} w_i r^{-i} = \sum_{i=1}^{n-1} w_i r^{-i} + (w_{n}-1) r^{-n} + \sum_{i=n+1} (r-1) r^{-i}.
\]
Let $w_{\ell}:=w_1\dots w_n0^{\omega}$ and $w_u:= w_1\dots (w_n-1){(r-1)}^\omega$. Thus there is $u\in \Sigma_r^\omega$ such that $w_\ell \times u \in L(\calA)$ or $w_u \times u \in L(\calA)$.
We treat the case when $w_\ell \times u \in L(\calA)$.
The case when $w_u \times u \in L(\calA)$ can be handled similarly.

\medskip
Let $s_0s_1\dots$ be the acceptance run of $w_\ell \times u$ on $\calA$. Then there is $w'=(w'_1\times w'_2) \in {(\Sigma_r^2)}^m$ such that there is a run of $w'$ from $s_{n-1}$ to a state $q$ in a sink $Q'$ with
$\pi(V_r(\calA_{Q',\{q\}}))=[0,1]$. Thus
\[
(w_1\times u_1)\dots (w_{n-1}\times u_{n-1})w' \in P_{q_0,q}(n-1+m).
\]
Set $z_1:=w_1\ldots w_{n-1}w_1'$ and set $z_2 := u_1\ldots u_{n-1}w_2'$. Note that
\[
|x-v_r(z_1)| \leq r^{-(n-1)}.
\]
Thus $v_r(z_1) \in J$.
By our choice of $S$, there is an affine function $g: [0,1] \to [0,1]$ with slope in $S$ such that
\[
f\big(v_r(z_1)+r^{-(n-1+m)}x\big) = v_r(z_2) + r^{-(n-1+m)}g(x)
\]
for all $0 \leq x \leq 1$.
Thus there is a nonempty open subinterval $J'$ of $J$ such that the restriction $f|_{J'}$ is affine with slope in $S$.
\end{proof}

\noindent As corollary we obtain the following sum form for continuous $r$-regular functions.

\begin{cor}\label{prop:integral}
Let $f : [0,1] \to [0,1]$ be continuous $r$-regular.
Then there are pairwise disjoint open $r$-recognizable subsets $U_1,\ldots,U_m$ of $[0,1]$ and rational numbers $\alpha_0,\alpha_1,\ldots,\alpha_m$ such that $[0,1] \setminus \bigcup_{i = 1}^{m} U_i$ is Lebesgue null and
\[
f(t) = \alpha_0 +  \sum_{i = 1}^{m} \alpha_i \calH^1( \{ x \in U_i : 0 \leq x \leq t \} ) \quad \text{for all} \quad 0 \leq t \leq 1.
\]
\end{cor}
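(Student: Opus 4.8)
The plan is to deduce Corollary~\ref{prop:integral} directly from Theorem~\ref{nowhere} together with the observation that a continuous $r$-regular function is Lipschitz (Fact~\ref{fact:csv}), hence absolutely continuous. First I would invoke Theorem~\ref{nowhere} to obtain pairwise disjoint open $r$-recognizable sets $U_1,\dots,U_m$ and rationals $\alpha_1,\dots,\alpha_m$ such that $[0,1]\setminus\bigcup_{i=1}^m U_i$ is Lebesgue null and the restriction of $f$ to each connected component of $U_i$ is affine with slope $\alpha_i$. Since $f$ is continuous and $r$-regular, Fact~\ref{fact:csv}(2) tells us $f$ is Lipschitz, so in particular $f$ is absolutely continuous on $[0,1]$ and the fundamental theorem of calculus applies: $f(t) = f(0) + \int_0^t f'(x)\,dx$ for all $t$, where $f'$ exists Lebesgue-almost everywhere.

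The key step is then to identify $f'$ almost everywhere. For each $i$, every point of $U_i$ lies in the interior of some interval on which $f$ is affine with slope $\alpha_i$, so $f'(x) = \alpha_i$ for all $x \in U_i$. Since the $U_i$ are pairwise disjoint and their complement is Lebesgue null, we get $f'(x) = \sum_{i=1}^m \alpha_i \mathbf{1}_{U_i}(x)$ for almost every $x \in [0,1]$ (here I would write this using the defined notation, e.g.\ as a pointwise identity valid off the null set). Plugging this into the integral formula and setting $\alpha_0 := f(0)$ yields
\[
f(t) = \alpha_0 + \int_0^t \sum_{i=1}^m \alpha_i \mathbf{1}_{U_i}(x)\,dx = \alpha_0 + \sum_{i=1}^m \alpha_i\, \calH^1(\{x \in U_i : 0 \le x \le t\}),
\]
using that $\calH^1$ agrees with Lebesgue measure on $[0,1]$ (as recalled in Section~\ref{section:hausdorff}) and that $\int_0^t \mathbf{1}_{U_i} = \calH^1(U_i \cap [0,t])$. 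It remains only to note $\alpha_0 = f(0) \in \mathbb{Q}$: this follows from Lemma~\ref{lem:affine0} since $0 \in \mathbb{Q}\cap[0,1]$, so all the constants $\alpha_0,\alpha_1,\dots,\alpha_m$ are rational as claimed.

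I do not expect a serious obstacle here, since all the heavy lifting is done by Theorem~\ref{nowhere}; the one point requiring a little care is justifying the use of the fundamental theorem of calculus, which is why I would explicitly record that Lipschitz functions on a compact interval are absolutely continuous, and that the almost-everywhere derivative is the locally constant function $\sum_i \alpha_i \mathbf{1}_{U_i}$ because each $U_i$ is open and $f$ is genuinely affine near each of its points. A minor subtlety is that the boundary set $[0,1]\setminus\bigcup U_i$ contributes nothing to the integral precisely because it is Lebesgue null, which is exactly conclusion (1) of Theorem~\ref{nowhere}; this is where that part of the theorem gets used.
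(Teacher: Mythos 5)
Your proof is correct and follows essentially the same route as the paper: invoke Theorem~\ref{nowhere} for the decomposition, use Lipschitzness (Fact~\ref{fact:csv}) to get almost-everywhere differentiability and the Lebesgue fundamental theorem of calculus, identify $f'=\alpha_i$ on $U_i$, and take $\alpha_0=f(0)\in\Q$ via Lemma~\ref{lem:affine0}. The only cosmetic difference is that the paper cites Rademacher's theorem for a.e.\ differentiability where you cite absolute continuity of Lipschitz functions; both are standard and equivalent here.
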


\begin{proof}
By Fact~\ref{fact:csv}, we know that $f$ is Lipschitz.
Rademacher's theorem states that every Lipschitz function $g: U\subseteq \R^{\ell} \to \R^n$, where $U$ is open, is differentiable outside a set of Lebesgue measure 0. Thus from Rademacher's theorem we can conclude that $f'(t)$ exists for $\calH^1$-almost every $0 \leq t \leq 1$. By the fundamental theorem of Lebesgue integration
\[
f(t) = f(0) + \int_{0}^{t} f'(x) dx \quad \text{for all} \quad 0 \leq t \leq 1.
\]
Recall that $f(0)$ is rational by Lemma~\ref{lem:affine0}.
We declare $\alpha_0 = f(0)$.
By Theorem~\ref{nowhere} there are pairwise disjoint $r$-regular open subsets $U_1,\ldots,U_m$ and $\alpha_1,\ldots,\alpha_m\in \Q$ such that $W := \bigcup_{i = 1}^{m} U_i$ is dense in $[0,1]$ and $f'(x) = \alpha_i$ for all $x \in U_i$. Thus  $[0,1] \setminus W$ is Lebesgue null by Proposition~\ref{prop:porous}. Therefore, we have
\[
\int_{0}^{t} f'(x) dx = \int_{[0,t] \cap W} f'(x) dx =  \sum_{i = 1}^{m} \alpha_i \calH^1( \{ x \in U_i : 0 \leq x \leq t \})
\]
for all $0 \leq t \leq 1$.
\end{proof}

\section{Differentiability}
We are now ready to prove Theorem B as a corollary to Theorem~\ref{nowhere}. We start by proving the promised characterization of differentiable $r$-regular functions. In the proof we will use Darboux's classical theorem that for every differentiable function $g: I \to \R$ where $I$ is a closed interval, the derivative $g'$ has the intermediate value property.

\begin{thm}\label{diff}
A differentiable $r$-regular function $f:[0,1]\rightarrow\mathbb{R}$ is $\mathbb{Q}$-affine.
\end{thm}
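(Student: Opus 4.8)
The plan is to combine Theorem~\ref{nowhere} with Darboux's theorem to collapse the finitely many slopes down to one. First I would invoke Theorem~\ref{nowhere}: since $f$ is continuous and $r$-regular, there are pairwise disjoint open $r$-recognizable sets $U_1,\dots,U_m$ and rationals $\alpha_1,\dots,\alpha_m$ with $[0,1]\setminus\bigcup_i U_i$ Lebesgue null, such that $f$ restricted to any connected component of $U_i$ is affine with slope $\alpha_i$. In particular, at every point $x$ lying in some $U_i$ the function $f$ is differentiable with $f'(x)=\alpha_i$, so the derivative $f'$ (which exists everywhere by hypothesis) takes each of the values $\alpha_1,\dots,\alpha_m$ on a dense open set, and these are the only values $f'$ can take on $\bigcup_i U_i$.

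The key step is to rule out $m\geq 2$. Suppose $f'$ took two distinct values $\alpha_i\neq\alpha_j$ among the $\alpha_k$. Each of $U_i$ and $U_j$ is open and nonempty, so pick points $a\in U_i$, $b\in U_j$ with, say, $a<b$. Then $f'(a)=\alpha_i$ and $f'(b)=\alpha_j$, and by Darboux's theorem $f'$ attains every value between $\alpha_i$ and $\alpha_j$ on the interval $[a,b]$ — in particular it attains some irrational value $\gamma$ strictly between them at some point $c\in[a,b]$. But $f'(c)$ must equal $\alpha_k$ for whichever $U_k$ contains $c$, \emph{unless} $c\in[0,1]\setminus\bigcup_k U_k$. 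So I need to push $\gamma$ (or rather, a whole interval of intermediate values) onto a point of $\bigcup_k U_k$: since the complement of $\bigcup_k U_k$ is Lebesgue null hence has empty interior, and $f'$ restricted to $[a,b]$ is Darboux, the set $\{f'(t):t\in[a,b]\}$ contains the nondegenerate interval with endpoints $\alpha_i,\alpha_j$; meanwhile on the dense open set $\bigcup_k U_k\cap[a,b]$ the derivative only takes values in the finite set $\{\alpha_1,\dots,\alpha_m\}$. Concretely, choose an irrational $\gamma$ strictly between $\alpha_i$ and $\alpha_j$; by Darboux there is $c\in[a,b]$ with $f'(c)=\gamma$, and since $\bigcup_k U_k$ is dense there are points of it arbitrarily close to $c$, but more directly: the preimage $(f')^{-1}(\gamma)\cap[a,b]$ is nonempty, and at every point of the dense open set $\bigcup_k U_k$ the value of $f'$ is rational, so $(f')^{-1}(\gamma)\subseteq[0,1]\setminus\bigcup_k U_k$; yet I can also find an interval's worth of such irrational values. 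The cleanest finish: the image $f'([a,b])$ is an interval by Darboux; it is infinite; but $f'$ agrees with a locally constant function (taking finitely many rational values) on a dense open subset of $[a,b]$, and on the complementary null set $f'$ can contribute at most countably... — that's false in general, so instead argue: the set where $f'$ is irrational is contained in the nowhere dense null set $[0,1]\setminus\bigcup U_k$, hence has empty interior, yet Darboux forces $f'$ to hit an open interval of irrationals between $\alpha_i$ and $\alpha_j$, and $(f')^{-1}$ of that open interval is nonempty and contained in the empty-interior set — that still isn't a contradiction by itself. I expect this is exactly where care is needed.

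The right way to close the gap is pointwise rather than via images. Suppose $\alpha_i\neq\alpha_j$ and again take $a\in U_i$, $b\in U_j$ with $a<b$; let $\alpha_i<\gamma<\alpha_j$ with $\gamma$ irrational (swap roles if $\alpha_i>\alpha_j$). By Darboux there is $c\in(a,b)$ with $f'(c)=\gamma$. Now $c\notin\bigcup_k U_k$ since $f'$ is rational on that set. But consider instead the point $c$: since $f$ is $r$-regular, Lemma~\ref{lem:affine0} applies to $f$, and more relevantly, I can look at a small interval and localize — actually the slick route is: define $h(t)=f(t)-\gamma t$; then $h$ is differentiable with $h'(c)=0$, and $h$ is still "regular enough" to apply Theorem~\ref{nowhere}? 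Not directly, since $\gamma$ is irrational so $h$ need not be $r$-regular. So I abandon that. The genuinely correct argument, which I would write out: $f'(c)=\gamma$ with $\gamma$ irrational means $c$ lies in the Lebesgue-null nowhere-dense set $N:=[0,1]\setminus\bigcup_k U_k$. Since $N$ is nowhere dense, any neighborhood of $c$ meets some $U_k$; pick a sequence $c_\ell\to c$ with $c_\ell\in U_{k(\ell)}$, so $f'(c_\ell)=\alpha_{k(\ell)}\in\{\alpha_1,\dots,\alpha_m\}$. By continuity of $f'$ at $c$ (which follows from $f$ differentiable... no, $f'$ need not be continuous). Hence I must use Darboux more cleverly: the interval $f'([a,b])$ contains $(\min(\alpha_i,\alpha_j),\max(\alpha_i,\alpha_j))$, an uncountable set of irrationals; each such irrational value is attained at some point of $N$; but $N$ is null, and one shows $f'|_N$ — being the a.e.-derivative restricted to a null set — can be arbitrary, so this still needs the structural input. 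The actual finish in the paper surely uses: pick $\gamma\in\mathbb Q$ strictly between $\alpha_i$ and $\alpha_j$ with $\gamma\neq\alpha_k$ for all $k$; then $g(t):=f(t)-\gamma t$ is $r$-regular (as $\gamma$ rational, $\gamma t$ is $\mathbb Q$-affine hence $r$-regular, and $r$-regular graphs are closed under this), continuous, and by Darboux $g'$ vanishes somewhere in $(a,b)$; apply Theorem~\ref{nowhere} to $g$ to get slopes $\alpha_k-\gamma$, none of which is $0$, contradicting that $g'=0$ at some point together with the density/Darboux structure forcing $0$ into the slope set. The main obstacle is precisely assembling this last contradiction correctly — ensuring that a single zero of $g'$, combined with the fact that $g'$ equals a fixed nonzero rational near that zero on a dense open set and Darboux's intermediate value property, is genuinely incompatible; I believe the resolution is that $g'$ being Darboux and equal to values bounded away from $0$ on a dense open set forces $g'$ itself to be bounded away from $0$ everywhere (a Darboux function that is locally constant on a dense open set and misses a neighborhood of $0$ there cannot hit $0$), which is the clean lemma-level fact to isolate.
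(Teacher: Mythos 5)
Your reduction via Theorem~\ref{nowhere} is fine, but the step you yourself flag as the crux is where the proof breaks: the ``clean lemma-level fact'' you propose --- that a Darboux function which is locally constant with values bounded away from $0$ on a dense open set cannot hit $0$ --- is false. Take $N\subseteq[0,1]$ the middle-thirds Cantor set and $U=[0,1]\setminus N$; set $h\equiv 1$ on $U$ and, for $x\in N$ with ternary digits $2a_1 2a_2\dots$ (gap endpoints assigned the value $1$), set $h(x)=\limsup_n \frac{1}{n}\sum_{i\le n}a_i$. Every value in $[0,1]$ is attained on every portion of $N$, and from this one checks the intermediate value property for all pairs of points; so $h$ is Darboux, equals $1$ on a dense open set whose complement is null and nowhere dense, yet $h$ attains $0$. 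In other words, the Darboux property by itself can never rule out that $f'$ (or $g'$) takes a value outside $S$ at points of the exceptional null set, which is exactly the obstruction you kept circling. What is true --- and would close your argument --- is a strictly stronger property of \emph{derivatives}, the Denjoy--Clarkson theorem: if $g$ is everywhere differentiable, then $(g')^{-1}(J)$ is either empty or of positive Lebesgue measure for every open interval $J$. With that, choose $J$ an open interval between two distinct realized slopes and disjoint from the finite slope set $S$; Darboux makes $(f')^{-1}(J)$ nonempty, Theorem~\ref{nowhere} puts it inside the null set $[0,1]\setminus\bigcup_i U_i$, and Denjoy--Clarkson gives the contradiction. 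That patched argument would be a genuinely different (and shorter) route than the paper's, but as written your proof has a real gap, and the missing ingredient is a theorem about derivatives, not an elementary consequence of the intermediate value property.

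For comparison, the paper does not argue through a general property of derivatives at all: it returns to the automaton. Assuming $f$ is not affine, Darboux supplies a point $x$ with $f'(x)\notin S$; the acceptance run of a representation of $(x,f(x))$ visits some non-sink state $q$ infinitely often, and appending a fixed path from $q$ to a suitable sink produces, at infinitely many scales $r^{-\beta_i}$, affine copies (with one fixed slope $\alpha\in S$, via Proposition~\ref{prop:localaffine}) sitting at controlled positions near $x$. Evaluating difference quotients of $f$ at $x$ against points of these copies and passing to a convergent subsequence forces $f'(x)=\alpha$, a contradiction. So the structural self-similarity is what substitutes for the Denjoy--Clarkson-type input your sketch is missing.
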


\begin{proof}
Let $f:[0,1]\rightarrow [0,1]$ be a differentiable $r$-regular function, and suppose $\Gamma(f)$ is $r$-accepted by a closed full deterministic B\"uchi automaton $\calA$.
By Theorem~\ref{nowhere} there exist a finite set $S$ of rational numbers such that if the restriction of $f$ to a subinterval $J$ of $[0,1]$ is affine, then the slope of $f$ on $J$ is in $S$.  It follows from the proof of Theorem~\ref{nowhere} that for each sink $Q'$ of $\calA$ and $q\in Q'$ with $\pi(V_r(\calA_{Q',\{q\}}))=[0,1]$, there is an affine function $g: [0,1] \to [0,1]$ with slope in $S$ that satisfies the conclusion of Proposition~\ref{prop:localaffine}.

\medskip
We suppose towards a contradiction that $f$ is not affine. Then $f'$ is non-constant.
An application of Darboux's theorem shows $f'$ takes infinitely many values.
Fix $0 \leq x \leq 1$ such that $f'(x)\notin S$. Let $w_1 \times w_2 \in {(\Sigma_r^2)}^{\omega}$ be such that $v_{(r,r)}(w_1\times w_2) = (x,f(x))$. 
By our choice of $S$ and since $\calA$ is full, there is a final state $q$ in $\mathcal{A}$ such that $q$ is not in a sink of $\calA$, and the acceptance run $s_0s_1\dots$ of $w_1\times w_2$ in $\mathcal{A}$ visits $q$ infinitely many times. Fix this $q$. Let $\beta_i\in\mathbb{N}$ be the index of the element in the acceptance run of $(x,f(x))$ at which $q$ is visited for the $i^{th}$ time; i.e., $\beta_i$ is the $i$-th element of the set
\[
\{ k \ : \ s_k = q\}.
\]
Since $q$ is visited infinitely many times in the acceptance run of $(x,f(x))$, ${\{\beta_i\}}_{i\in\mathbb{N}}$ is cofinal in $\mathbb{N}$.
We write
\[
 x|_{\beta_i} = \sum_{j=1}^{\beta_i} w_{1,j} r^{-j} \quad \text{and} \quad f(x)|_{\beta_i} = \sum_{j=1}^{\beta_i} w_{2,j} r^{-j}.
 \]
Now let $Q'$ be a sink of $\calA$ that can be reached from $q$ and let $q'$ be a state in $Q'$ such that  $\pi(V_r(\calA_{Q',\{q'\}}))=[0,1]$. Let $g: [0,1] \to [0,1]$ be the affine function associated with $Q'$ and $q'$, and let $\alpha\in S$ be such that $g(y) = g(0) + \alpha y$ for all $y\in [0,1]$. Take $u,v \in {(\Sigma_r^2)}^k$ with $u \times v \in P_{q,q'}(k)$. Set
\[
c := \sum_{i=1}^k u_i r^{-i} \quad \text{and} \quad d:=\sum_{i=1}^k v_i r^{-i}.
\]
To compute $f'(x)$ we declare
\[
H(\beta_i,y):= \frac{f(x)-\big(f(x)|_{\beta_i} + r^{-\beta_i} d +r^{-\beta_i-k}(g(0) + \alpha y)\big)} {x-(x|_{\beta_i} + r^{-\beta_i} c +r^{-\beta_i-k} y)},
\]
where $y\in \mathbb{R}$. By our choice of $u,v$ and $\alpha$, we have that for every $y\in [0,1]$
\[
f(x|_{\beta_i} + r^{-\beta_i} c +r^{-\beta_i-k} y) = f(x)|_{\beta_i} + r^{-\beta_i} d +r^{-\beta_i-k}(g(0) + \alpha y).
\]
To further simplify notation we define
\begin{align*}
 h_1(\beta_i) :&= f(x)- \big(f(x)|_{\beta_i} + r^{-\beta_i} d +r^{-\beta_i-k} g(0)\big),\\
 h_2(\beta_i) :&= x- (x|_{\beta_i} + r^{-\beta_i} c).
\end{align*}
Thus
\[
H(\beta_i,y) =\frac{h_1(\beta_i)-r^{-\beta_i-k}\alpha y}{h_2(\beta_i)-r^{-\beta_i-k}y}=\frac{r^{\beta_i} h_1(\beta_i)-r^{-k}\alpha y}{r^{\beta_i} h_2(\beta_i)-r^{-k}y}.
\]
Note here we have
\[
-r^{-\beta_i+1}\leq  h_1(\beta_i) \leq r^{-\beta_i+1}\quad \text{and} \quad
-r^{-\beta_i+1}\leq  h_2(\beta_i) \leq r^{-\beta_i+1}.
\]
Hence $|r^{\beta_i} h_1(\beta_i)|$ and  $|r^{\beta_i} h_2(\beta_i)|$ are both bounded above by $r$.
Applying compactness we pick a subsequence ${\{ \beta_{i_{\ell}} \}}_{\ell \geq 0}$ of ${\{\beta_i \}}_{i \geq 0}$ such that the sequences $\big(r^{\beta_{i_\ell}} h_1(\beta_{i_{\ell}})\big)$ and $\big( r^{\beta_{i_\ell}} h_2(\beta_{i_{\ell}})\big)$ have limits in $\R$, which we denote by $h_1$ and $h_2$, respectively.
As $f$ is differentiable, we have
\[
f'(x) = \lim_{\ell\rightarrow \infty}H(\beta_{i_\ell},y) =  \frac{h_1-r^{-k}\alpha y}{h_2-r^{-k}y}.
\]
Thus the equation
\[
f'(x)h_2 - h_1 = r^{-k}y(f'(x) - \alpha)
\]
holds for infinitely many $y \in \mathbb{R}$. This can only happen if $\alpha = f'(x)$, which contradicts our choice of $x$.
\end{proof}

\noindent In order to prove that checking the differentiability of regular functions is in $\operatorname{PSPACE}$, we now recall an elementary fact from real analysis.

\begin{lem}\label{lem:average}
Let $f : [0,1] \to \mathbb{R}$ be continuous.
Then $f$ is affine if and only if
\[
 f \left( \frac{ x + y }{2} \right) = \frac{ f(x) + f(y) }{2} \quad \text{for all } 0 \leq x,y \leq 1.
\]
\end{lem}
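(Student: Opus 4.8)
The statement is a classical characterization of affine functions via the midpoint (Jensen) convexity condition; the plan is to prove the equivalence in the standard way. The forward direction is immediate: if $f(x) = ax+b$, then $f\big(\tfrac{x+y}{2}\big) = a\tfrac{x+y}{2} + b = \tfrac{(ax+b)+(ay+b)}{2} = \tfrac{f(x)+f(y)}{2}$, so no continuity is needed there. The content is in the converse, so I would assume $f$ is continuous and satisfies the midpoint identity for all $0 \leq x,y \leq 1$, and aim to conclude $f$ is affine.

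For the converse, first I would reduce to the case where $f$ vanishes at the endpoints. Set $g(t) := f(t) - \big(f(0) + (f(1)-f(0))t\big)$; this $g$ is continuous, satisfies the same midpoint identity (since the subtracted part is affine and hence satisfies it, and the identity is linear in $f$), and has $g(0) = g(1) = 0$. It suffices to show $g \equiv 0$. Next I would establish by induction on $n$ that $g$ vanishes on all dyadic rationals $k/2^n$ in $[0,1]$: the base case $n \leq 0$ is $g(0)=g(1)=0$, and for the inductive step, a dyadic rational of level $n+1$ is the midpoint of two consecutive dyadic rationals of level $n$ (or is already of level $\leq n$), so the midpoint identity forces $g(k/2^{n+1}) = \tfrac{1}{2}(g(\cdot) + g(\cdot)) = 0$. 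Then, since the dyadic rationals are dense in $[0,1]$ and $g$ is continuous, $g \equiv 0$, so $f$ is affine on $[0,1]$.

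I do not expect any serious obstacle here; the only points requiring mild care are checking that the subtracted linear function genuinely satisfies the midpoint identity (so that the reduction is legitimate) and handling the bookkeeping in the dyadic induction (making sure every dyadic rational of level $n+1$ is either already covered at level $n$ or is the midpoint of two level-$n$ points lying in $[0,1]$). Continuity is used only at the very last step to pass from the dense set of dyadics to all of $[0,1]$, which is exactly why the hypothesis is needed — without it, pathological additive functions provide counterexamples.
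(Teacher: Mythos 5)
Your proof is correct: the forward direction is immediate, and your converse argument (subtract the affine interpolant through the endpoints, show the difference vanishes at all dyadic rationals by induction on the level, then conclude by density and continuity) is the standard and complete way to establish this fact. The paper itself gives no proof of this lemma --- it merely recalls it as an elementary fact from real analysis --- so your write-up simply supplies the classical argument the paper implicitly relies on, and there is nothing to object to in it.
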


\begin{cor} Given an B\"uchi automaton $\calA$ that recognizes an $r$-regular function $f:[0,1]\to [0,1]$, the problem of checking whether $f$ is differentiable, is in $\operatorname{PSPACE}$.
\end{cor}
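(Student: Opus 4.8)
The plan is to reduce differentiability to two properties of $f$, each testable from $\calA$ within $\operatorname{PSPACE}$. Every differentiable function is continuous, and by Theorem~\ref{diff} a differentiable $r$-regular function is $\Q$-affine, hence affine, hence (by one direction of Lemma~\ref{lem:average}) satisfies the midpoint equation $f(\tfrac{x+y}{2}) = \tfrac{f(x)+f(y)}{2}$ for all $x,y\in[0,1]$. Conversely, a continuous function satisfying the midpoint equation is affine by Lemma~\ref{lem:average}, and an affine function is differentiable. So
\[
 f \text{ is differentiable} \iff f \text{ is continuous and } f \text{ satisfies the midpoint equation}.
\]
It therefore suffices to check each of these two conditions in $\operatorname{PSPACE}$.

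For continuity, note that since $[0,1]$ is compact, $f$ is continuous if and only if $\Gamma(f)$ is closed in ${[0,1]}^2$, equivalently $\overline{\Gamma(f)} \subseteq \Gamma(f)$ (the reverse inclusion being automatic). After trimming $\calA$ in polynomial time, Fact~\ref{closed} shows that $\overline{\Gamma(f)}$ is $r$-accepted by the closure $\bar{\calA}$, which has the same state set as $\calA$. Since $\calA$ \emph{recognizes} $\Gamma(f)$ — so that $v_{(r,r)}(w) \in \Gamma(f) \iff w \in L(\calA)$ for every word $w$ — the inclusion $\overline{\Gamma(f)} \subseteq \Gamma(f)$ is equivalent to the language inclusion $L(\bar{\calA}) \subseteq L(\calA)$. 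Inclusion of B\"uchi languages is decidable in $\operatorname{PSPACE}$, which settles this step; alternatively one may simply invoke the decidability of continuity of regular functions from~\cite{CSV}.

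For the midpoint equation, let $B \subseteq {[0,1]}^2$ be the set of pairs witnessing its failure,
\[
 B = \bigl\{ (x,y) : \exists a,b,c,m,n \ \bigl( 2m = x+y,\ (x,a),(y,b),(m,c) \in \Gamma(f),\ 2n = a+b,\ c \neq n \bigr) \bigr\}.
\]
The ternary relations $2m=x+y$ and $2n=a+b$ and the relation $c\neq n$ are $r$-recognizable by Fact~\ref{BRW}, hence recognized by explicit automata of constant size (independent of $\calA$), while $\Gamma(f)$ is recognized by $\calA$. Taking the product of suitable cylindrifications of these automata over a common product alphabet recognizes the conjunction, and projecting the result onto the $(x,y)$-coordinates yields a nondeterministic B\"uchi automaton $\mathcal{B}$ recognizing $B$. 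As only a bounded number of automata are multiplied (three copies of $\calA$ and constantly many fixed automata), $|\mathcal{B}|$ is polynomial in $|\calA|$: B\"uchi intersection incurs only constant overhead beyond the product of state sets, and projection changes no states. Finally, $f$ satisfies the midpoint equation if and only if $B=\emptyset$, i.e.\ $L(\mathcal{B})=\emptyset$, and emptiness of B\"uchi automata is decidable in polynomial time, a fortiori in $\operatorname{PSPACE}$. Running the continuity test and the emptiness test in succession gives the claimed $\operatorname{PSPACE}$ bound.

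The main obstacle is to confirm that every automaton manipulation stays within $\operatorname{PSPACE}$. The midpoint test uses only products with a bounded number of fixed automata together with projections, all polynomial; the only genuinely costly ingredient is the continuity test, whose reduction to B\"uchi language inclusion keeps it in $\operatorname{PSPACE}$ but — that problem being $\operatorname{PSPACE}$-complete — is also the reason no sharper bound is claimed. The remaining points to check are routine: that trimming $\calA$ and forming $\bar{\calA}$ are polynomial; that the hypothesis "recognizes" (rather than merely "accepts") is exactly what licenses passing from set inclusion to language inclusion and keeps the product automaton from spuriously identifying distinct base-$r$ representations; and that the constant-size automata for the affine relations and for $c\neq n$ are produced directly, so that no large complementation is needed in the second step.
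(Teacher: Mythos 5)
Your proof is correct, but it follows a genuinely different route from the paper's. The paper makes a single reduction: using~\cite[Theorems 4 \& 5]{CSV} it builds, in polynomial time, automata recognizing $f_1(x,y)=f\left(\frac{x+y}{2}\right)$ and $f_2(x,y)=\frac{f(x)+f(y)}{2}$, intersects them, projects out the value coordinate, and checks whether the resulting automaton $\calA_0$ accepts all of ${(\Sigma_r^2)}^{\omega}$, invoking $\operatorname{PSPACE}$-completeness of B\"uchi universality~\cite{SVW}; differentiability is thus equated with the midpoint identity holding everywhere, with no separate continuity test. You instead split the criterion into two checks: continuity, reduced via the closed-graph characterization and Fact~\ref{closed} to the B\"uchi language inclusion $L(\bar{\calA})\subseteq L(\calA)$ (here your observation that the ``recognizes'' hypothesis is what converts set inclusion into language inclusion is exactly right), and the midpoint identity, reduced to \emph{emptiness} of an automaton for the failure set $B$. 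The latter is a nice twist: because $f$ is total and $\calA$ recognizes its graph, the condition $f(m)\neq n$ is expressed existentially through the graph automaton together with fixed-size automata for the affine relations and for inequality, so no complementation is needed and that part runs in polynomial time; in your decomposition the only $\operatorname{PSPACE}$-hard ingredient is the continuity test, a slightly finer complexity accounting than the paper's one-shot universality test. Your version also makes explicit the continuity hypothesis that Lemma~\ref{lem:average} requires (the paper's argument leaves it implicit, relying on the fact that a bounded midpoint-affine function is affine), at the cost of the extra inclusion check. Both arguments are sound and yield the same $\operatorname{PSPACE}$ bound.
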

\begin{proof}
Let $f_1 : {[0,1]}^2 \to [0,1]$ be the function that maps $(x,y)$ to $f \left( \frac{ x + y }{2} \right)$, and let $f_2 : {[0,1]}^2 \to [0,1]$ be the function that maps $(x,y)$ to $\frac{ f(x) + f(y) }{2}$. It is easy to see that $f_1$ and $f_2$ are both $r$-regular. By applying~\cite[Theorem 4 \& 5]{CSV} we can construct in polynomial time automata $\calA_1$ and $\calA_2$ that $r$-recognize $f_1$ and $f_2$. Now let $\calA_0$ be a B\"uchi automaton that recognizes
\[
\{ w_1 \times w_2 \in {(\Sigma_r^2)}^{\omega} \ : \ \exists w_3 \in \Sigma_r^{\omega} \ w_1 \times w_2 \times w_3 \in L(\calA_1) \cap L(\calA_2)\}.
\]
Again, it is not hard to see that such an automaton can be constructed in polynomial time from $\calA_1$ and $\calA_2$.
Recall that the universality problem for B\"uchi automata is $\operatorname{PSPACE}$-complete by Sistla, Vardi and Wolper~\cite{SVW}. Checking differentiability of $f$ is equivalent to checking whether $\calA_0$ accepts ${(\Sigma_r^2)}^{\omega}$, and hence in $\operatorname{PSPACE}$.
\end{proof}

\section{An application to model theory}\label{section:application}
 A study of continuous functions definable in expansions of the ordered additive group $(\mathbb{R},<,+,0)$ of real numbers was initiated in~\cite{HW-continuous}. By Fact~\ref{BRW} a continuous $r$-regular function is just a function definable (without parameters) in the expansion $\mathscr{T}_r$. Thus this enterprise is a vast generalization of the study of continuous regular functions. In this section, we will briefly recall earlier results from this inquiry and explain how the results in this paper give partial answers to some wide-open questions in this area.

\medskip
We first fix some notation. Throughout, $\mathscr{R}$ is an expansion of $(\mathbb{R},<,+,0)$, and ``definable'' without modification means ``$\mathscr{R}$-definable, possibly with parameters from $\mathbb{R}$''. A \textbf{dense $\omega$-order} is a definable subset of $\mathbb{R}$ that is dense in some nonempty open interval and admits a definable ordering with order type $\omega$.
We say that $\mathscr{R}$ is \textbf{type A} if it does not admit a dense $\omega$-order, \textbf{type C} if it defines all bounded Borel subsets of all $\mathbb{R}^k$, and \textbf{type B} if it is neither type A nor type C. It is easy to see that an expansion cannot be both type A and type C, and thus each expansion $\mathscr{R}$ of $(\mathbb{R},<,+,0)$ has to be either type A, type B or type C.

\medskip
Type A is a generalization of o-minimality and singles out structures whose definable sets are geometric-topologically tame (see~\cite{Lou} for an account of o-minimal expansions). Expansions of type C have undecidable theories and satisfy no logical tameness whatsoever. Type B expansions live in between these two extremes. Indeed, it is an easy exercise to check that $\mathscr{T}_r$ is type B. The set of all $x$ in $[0,1]$ with a finite base-$r$ expansion is definable in $\mathscr{T}_r$ and admits a definable order with order type $\omega$ (see~\cite[Section 3]{BH-Cantor}).
We regard $\mathscr{T}_r$ as the prototype of a type B expansion.  As seen in~\cite{HW-continuous} many interesting properties of $\mathscr{T}_r$ generalize to arbitrary type B expansions. Therefore the study of definable sets and definable function in type B expansions is a suitable generalization of the study of $r$-recognizable sets and $r$-
regular functions.

\medskip
Question~\ref{ques:reals1} is one of the main motivating questions in this enterprise. We say $\mathscr{R}$ is of \textbf{field-type} if there is an interval $I$, definable functions $\oplus,\otimes : I^2 \to I$, and $0_I, 1_I \in I$ such that $(I,<,\oplus,\otimes,0_I,1_I)$ is isomorphic to the ordered field $(\mathbb{R},<,+,\cdot,0,1)$ of real numbers.

\begin{qu}\label{ques:reals1}
Let $f: [0,1] \to \mathbb{R}$ be continuous and definable. If $f$ is nowhere locally affine, must $\mathscr{R}$ be of field-type?
\end{qu}

\noindent This question essentially asks whether one can recover an ordered field from a nowhere locally affine continuous function over the real additive group.
Observe that any type C expansion is trivially of field-type. Furthermore, by~\cite[Theorem B]{HW-continuous} a type A expansions that defines a nowhere locally affine function is also of field-type. Since type B expansions are not of field-type by~\cite[Fact 1.2]{HW-continuous}, Question~\ref{ques:reals1} is equivalent to the following question about type B structures.

\begin{qu}\label{ques:reals2}
Let $f: [0,1] \to \mathbb{R}$ be continuous and definable. If $\mathscr{R}$ is type B, must $f$ be locally affine on a dense open subset of $[0,1]$?
\end{qu}

%



\noindent Question~\ref{ques:reals1} (and hence Question~\ref{ques:reals2}) is one of the most important questions on definability in first-order expansions of $(\R,<,+,0)$. We give a positive answer to an interesting special case of Question~\ref{ques:reals2}.

\begin{thm}\label{thm:reals4}
Every $\mathscr{T}_r$-definable continuous function  $f : [0,1] \to \mathbb{R}$ is locally affine on a dense open subset of $[0,1]$.
\end{thm}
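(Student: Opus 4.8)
The plan is to reduce Theorem~\ref{thm:reals4} directly to Theorem~\ref{nowhere}. First I would recall that by Fact~\ref{BRW} a set $X \subseteq [0,1]^n$ is definable without parameters in $\mathscr{T}_r$ if and only if it is $r$-recognizable, so a continuous $\mathscr{T}_r$-definable function with parameters needs a little more care than the parameter-free case. If the function $f$ is definable without parameters, then its graph $\Gamma(f)$ is $r$-recognizable, $f$ is $r$-regular, and Theorem~\ref{nowhere} applies verbatim: there are pairwise disjoint open $r$-recognizable sets $U_1,\dots,U_m$ whose union is co-null (hence dense, by Proposition~\ref{prop:porous}), and $f$ restricted to each connected component of each $U_i$ is affine. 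Thus $f$ is locally affine on the dense open set $\bigcup_i U_i$, which is exactly the conclusion.

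The remaining point is to handle parameters. Here I would use the standard fact that $\mathscr{T}_r$ admits definable Skolem functions, or more simply that a continuous function on $[0,1]$ definable with real parameters $\bar a$ lies in a definable family $\{f_{\bar t}\}$ of continuous functions indexed by parameters ranging over a definable set; since being locally affine on a dense open subset of $[0,1]$ is itself a property expressible in the structure, and the set of parameters $\bar t$ for which $f_{\bar t}$ is a (total, continuous) function from $[0,1]$ to $\R$ is definable, it suffices to prove the statement for one choice of parameters in each definable family. One then invokes the model-completeness/decidability apparatus: $\mathscr{T}_r$ has a decidable theory and every consistent formula is witnessed, and in particular has a witness with rational (indeed finite-base-$r$) coordinates, so we may assume the parameters are finite-base-$r$ rationals. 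But a function definable with rational parameters in $\mathscr{T}_r$ is again definable without parameters (rationals are $\emptyset$-definable in $\mathscr{T}_r$), so we are back in the parameter-free case and Theorem~\ref{nowhere} finishes the argument.

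The main obstacle I anticipate is precisely this parameter-elimination step: one must be careful that ``locally affine on a dense open subset of $[0,1]$'' transfers correctly along the reduction, and that the family $\{f_{\bar t}\}$ is set up so that the parameter set over which $f_{\bar t}$ remains a continuous $[0,1] \to \R$ function is nonempty and definable, allowing a rational witness to be extracted. An alternative, cleaner route that sidesteps families altogether is to note that if $f$ is $\mathscr{T}_r$-definable with parameters, then by cell decomposition / the structure theory of $\mathscr{T}_r$ one can partition $[0,1]$ into finitely many definable pieces on each of which $f$ agrees with a parameter-free definable continuous function up to an affine change of coordinates; on each such piece Theorem~\ref{nowhere} applies, and local affineness on a dense open subset is preserved under affine coordinate changes and under taking finite unions of the pieces. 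Either way, once the reduction to the $r$-regular (parameter-free) case is in place, the theorem is immediate from Theorem~\ref{nowhere} together with Proposition~\ref{prop:porous}.
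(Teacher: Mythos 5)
Your reduction of the parameter-free case to Theorem~\ref{nowhere} is fine (the complement of $\bigcup_i U_i$ is null, hence the union is dense open, and $f$ is affine on each component). The genuine gap is in the parameter-elimination step, and it is exactly where you yourself anticipate trouble: the claim that ``it suffices to prove the statement for one choice of parameters in each definable family,'' and hence that ``we may assume the parameters are finite-base-$r$ rationals,'' has the quantifier backwards. Producing a rational parameter $\bar q$ for which $f_{\bar q}$ is a continuous function and verifying the conclusion for $f_{\bar q}$ says nothing about the original function $f_{\bar a}$ with (possibly irrational) parameter $\bar a$: the members of the family are different functions. What is needed is a witness extracted from the set of \emph{counterexample} parameters, under a contradiction hypothesis. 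This is how the paper argues: suppose $f$ is nowhere locally affine (after restricting to a rational subinterval where local affineness fails everywhere, one may assume this); using Lemma~\ref{lem:average} to express failure of affineness of a continuous function in first order, the set $X$ of parameters $d$ such that $\varphi(x,y;d)$ defines the graph of a continuous, nowhere locally affine function $[0,1]\to[0,1]$ is $\mathscr{T}_r$-definable \emph{without} parameters and nonempty; then Lemma~\ref{lem:zero-buchi} (every nonempty parameter-free definable set contains a parameter-free definable, in fact rational, point --- a consequence of Fact~\ref{fact:periodic} via Fact~\ref{BRW}) yields a parameter-free definable $c'\in X$, hence a parameter-free definable continuous nowhere locally affine function, contradicting the parameter-free case. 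Your proposal contains the right ingredients (definable family, first-order expressibility, rational witnesses) but never assembles them in this contrapositive form, and as written the step fails.

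Two smaller points. The appeal to ``definable Skolem functions'' for $\mathscr{T}_r$ is not something the paper establishes or cites; only the weak choice principle of Lemma~\ref{lem:zero-buchi} is available, and it suffices. More seriously, your proposed ``cleaner route'' via cell decomposition is unsupported: $\mathscr{T}_r$ is a type B structure defining sets such as the Cantor set, and no cell-decomposition or structure theorem letting you write a parametrically definable continuous function piecewise as affine reparametrizations of parameter-free definable ones is known or used here; that route should be dropped rather than repaired.
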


\noindent Observe that Theorem~\ref{thm:reals4} follows easily from Theorem A and Fact~\ref{BRW} when $f$ is definable without parameters. We just need to explain how to handle the case when $f$ is defined using additional parameters. Conversely, by Fact~\ref{BRW} we can deduce from Theorem~\ref{thm:reals4} that any continuous function $f: [0,1] \to [0,1]$ whose graph is $r$-recognized by an B\"uchi automaton with advice (i.e., by an automaton that can access a fixed infinite advice string) is locally affine away from a nowhere dense subset of $[0,1]$.

\medskip
Theorem~\ref{thm:reals4} requires the following lemma. Model-theorists might recognize it as a weak form of definable choice.

\begin{lem}\label{lem:zero-buchi}
Every nonempty subset of $\mathbb{R}^k$ that is $\mathscr{T}_r$-definable without parameters contains an element that is definable without parameters.
\end{lem}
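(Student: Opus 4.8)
The plan is to reduce, in a parameter-free way, to the case of a nonempty $\mathscr{T}_r$-definable subset of $[0,1]^k$, where Fact~\ref{BRW} turns the problem into a statement about B\"uchi automata and Fact~\ref{fact:periodic} then produces a rational point.

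First I would reduce to a bounded set. Let $X\subseteq\R^k$ be nonempty and definable without parameters. I would use the classical fact that $\N$ is definable without parameters in $\mathscr{T}_r$ (a consequence of the correspondence of $\mathscr{T}_r$ with B\"uchi arithmetic; see~\cite{BRW}). The set of $n\in\N$ with $X\cap[-n,n]^k\neq\emptyset$ is then definable without parameters, and it is nonempty since $X$ is; as the usual order on $\N$ is a well-order, this set has a least element $n_0$, which is therefore definable without parameters. Inside the cube $[-n_0,n_0]^k$ I then pick, again by minimality, integers $m_1,\dots,m_k$ (each definable without parameters) with $-n_0\le m_i\le n_0$ such that $Z:=X\cap\prod_{i=1}^k[m_i,m_i+1]$ is nonempty. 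Translating by the integer point $(m_1,\dots,m_k)$ --- an operation expressible without parameters, since each $m_i$ is --- gives a nonempty set $W:=Z-(m_1,\dots,m_k)\subseteq[0,1]^k$ definable without parameters, and any point of $W$ definable without parameters yields, upon adding $(m_1,\dots,m_k)$ back, a point of $X$ definable without parameters. (I deliberately avoid rescaling $W$ into $[0,1]^k$, since multiplying a real by the variable integer $2n_0$ is not obviously expressible in $\mathscr{T}_r$, whereas translation by a definable integer is just addition.) So it suffices to treat $X\subseteq[0,1]^k$.

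Next I would peel off coordinates, by induction on $k$. Assume $X\subseteq[0,1]^k$ is nonempty and definable without parameters. The projection $\pi_1(X)\subseteq[0,1]$ of $X$ onto the first coordinate is nonempty and definable without parameters, hence $r$-recognizable by Fact~\ref{BRW}, say recognized by a B\"uchi automaton $\calA$ over $\Sigma_r$. Since $L(\calA)\neq\emptyset$, Fact~\ref{fact:periodic} provides a periodic word $w$ accepted by $\calA$, and $q:=v_r(w)\in\pi_1(X)$ is a rational number. As $q$ is a fixed rational, $\{q\}$ is definable without parameters (already in $(\R,<,+,0)$ together with the element $1$, which is definable in $\mathscr{T}_r$ as the least positive integer). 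If $k=1$, then $q$ is the desired element. If $k\ge 2$, the fiber $X_q:=\{(a_2,\dots,a_k)\in[0,1]^{k-1}:(q,a_2,\dots,a_k)\in X\}$ is nonempty and, substituting the formula defining $q$, definable without parameters; by the inductive hypothesis it contains a point $(b_2,\dots,b_k)$ definable without parameters, and then $(q,b_2,\dots,b_k)\in X$ is definable without parameters.

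The substantive point --- and the only place requiring care --- is the parameter-free reduction to a bounded subset of $[0,1]^k$: each truncation, choice of least index, and translation must be performed by a formula without parameters, which is exactly why definability of $\N$ in $\mathscr{T}_r$ (and well-foundedness of its order, used to select least elements canonically) is essential, and why one truncates and translates rather than rescales. Once the set lives in $[0,1]^k$, the rest is a routine application of Fact~\ref{BRW}, Fact~\ref{fact:periodic}, and the observation --- already exploited in the proof of Lemma~\ref{lem:affine0} --- that a periodic base-$r$ expansion names a rational number.
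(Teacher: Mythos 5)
Your argument is correct in outline, but it takes a genuinely different and more laborious route than the paper, and it leans on one auxiliary fact the paper never establishes. The paper's proof is purely external: since any real tuple can be moved into ${[0,1]}^k$ by adding a suitable \emph{fixed} rational $q$, one observes that $(X+q)\cap{[0,1]}^k$ is again definable without parameters (fixed rationals are parameter-free definable in $\mathscr{T}_r$), extracts a rational point $p$ from it via Fact~\ref{BRW} and Fact~\ref{fact:periodic}, and returns $p-q\in X$. No internal, definable selection of a bounding box is needed. Your reduction instead internalizes boundedness, and for that you must know that $\N$ (and $\Z$) is definable in $\mathscr{T}_r$ without parameters, so that least elements can be chosen definably. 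That fact is true and standard in the automata--logic literature, but it is not among the facts stated in the paper, and with the paper's particular $V_r$ (which only reports fractional digits) it requires a short separate argument or citation; as written, this is the one step of your proof that is not self-contained, whereas the paper's trick of fixing $q$ metatheoretically sidesteps it entirely. A second, harmless difference: your coordinate-by-coordinate induction (project, extract a rational $q$, pass to the fiber over $q$) is unnecessary, since applying Fact~\ref{fact:periodic} to an automaton over the product alphabet $\Sigma_r^k$ already yields a periodic word whose $k$ coordinate words are all periodic, hence a point of $X\cap{[0,1]}^k$ all of whose coordinates are rational in one stroke. Your induction is valid, and it does illustrate the ``definable choice'' flavour of the statement, but it buys nothing here; if you keep your structure, do supply a justification (or reference) for the parameter-free definability of $\N$ in $\mathscr{T}_r$, or else adopt the paper's rational-translate device to avoid needing it.
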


\begin{proof}
Recall Fact~\ref{fact:periodic}: a B\"uchi automaton that accepts an infinite word accepts an infinite periodic word.
It follows from Fact~\ref{BRW} that any subset of ${[0,1]}^n$ that is definable without parameters in $\mathscr{T}_r$ has an element with rational coordinates.
Let $X \subseteq \mathbb{R}^n$ be definable without parameters in $\mathscr{T}_r$.
Fix $q \in \mathbb{Q}^n$ such that $X +q$ intersects ${[0,1]}^n$.
Then $(X + q) \cap {[0,1]}^n$ is $\mathscr{T}_r$-definable without parameters and thus contains a point $p$ with rational coordinates.
So $p - q \in X$ has rational coordinates and is thus definable without parameters.
\end{proof}

\begin{proof}[Proof of Theorem~\ref{thm:reals4}]
Let $f : [0,1] \to \R$ be continuous and $\mathscr{T}_r$-definable. After rescaling by a rational factor if necessary, we may assume that $f$ takes values in $[0,1]$.
If $f$ is definable without parameters, then $f$ is somewhere locally affine. Towards a contradiction, we now suppose $f$ is nowhere locally affine. To reach a contradiction, we will construct a parameter-free $\mathscr{T}_r$-definable function $[0,1] \to [0,1]$ that is continuous and nowhere locally affine.

\medskip
Suppose $\varphi(x,y;z)$ is a formula and $c \in \mathbb{R}^n$ is such that $\mathscr{T}_r \models \varphi(a,b;c)$ if and only if $0 \leq a,b \leq 1$ and $f(a) = b$.
Let $X$ be the set of $d \in \mathbb{R}^n$ such that
\begin{itemize}
\item $ \{ (a,b) \in {[0,1]}^2 : \mathscr{T}_r \models \varphi(a,b;d)\} $ is the graph of a continuous function $[0,1] \to [0,1]$, and
\item for all non-empty open subintervals $J$ of $[0,1]$ there are $x,y \in J$ and $0 \leq x',y' \leq 1$ such that
\[
\mathscr{T}_r \models \varphi(x,y;d), \quad \mathscr{T}_r \models \varphi(x',y';d), \quad \text{and} \quad \mathscr{T}_r \nvDash \varphi \left( \frac{ x + y }{2} , \frac{ x' + y '}{2} ;d \right).
\]
\end{itemize}
Lemma~\ref{lem:average} implies $X$ is the set of $d$ such that $ \{ (a,b) \in \mathbb{R}^2 : \mathscr{T}_r \models \varphi(a,b;d)\} $ is the graph of a continuous nowhere locally affine function $[0,1] \to [0,1]$.
Note that $X$ is $\mathscr{T}_r$-definable without parameters.
By Lemma~\ref{lem:zero-buchi} there is a parameter-free definable $c' \in X$.
Then $ \{ (a,b) \in {[0,1]}^2 : \mathscr{T}_r \models \varphi(a,b;c')\}$ is the graph of a continuous nowhere locally affine function $[0,1] \to [0,1]$ that is definable in $\mathscr{T}_r$ without parameters.
Contradiction.
\end{proof}

\noindent Of course, there are more questions that can be asked about the definable continuous function in type B expansions.
For our purposes a \textbf{space filling curve} is a continuous surjection $[0,1] \to {[0,1]}^2$.

\begin{qu}\label{ques:spacefilling}
Is there a type B expansion that defines a space filling curve?
\end{qu}
\noindent If $\mathscr{R}$ is type A, then there is no continuous definable surjection ${[0,1]}^n \to {[0,1]}^m$ when $m > n$ (see~\cite[Theorem E]{FHW-Compact}). Thus Question~\ref{ques:spacefilling} is equivalent to the following: must $(\mathbb{R},<,+,0,f)$ be type C when $f$ is a space filling curve? We give a negative answer to Question~\ref{ques:spacefilling} in a special case.

\begin{prop}
The structure $\mathscr{T}_r$ does not define a space filling curve.
\end{prop}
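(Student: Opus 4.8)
The plan is to reduce the statement to Theorem~\ref{thm:reals4} (equivalently Theorem A together with Fact~\ref{BRW}) by a dimension-of-image argument. Suppose toward a contradiction that $h = (h_1, h_2) : [0,1] \to {[0,1]}^2$ is a space filling curve that is $\mathscr{T}_r$-definable. Each coordinate function $h_1, h_2 : [0,1] \to [0,1]$ is then continuous and $\mathscr{T}_r$-definable, so by Theorem~\ref{thm:reals4} there is a dense open subset $U \subseteq [0,1]$ on which $h_1$ is locally affine, and likewise a dense open $V$ on which $h_2$ is locally affine. Since $U \cap V$ is again dense open, we may pass to a single nonempty open interval $J \subseteq [0,1]$ on which both $h_1$ and $h_2$ are affine. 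The restriction $h|_J$ is therefore affine, hence its image $h(J)$ is contained in a line segment (or a point) in ${[0,1]}^2$, and in particular has empty interior in $\R^2$.

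The next step is to turn "affine on every interval in a dense open set" into a global statement contradicting surjectivity. The open set $W := U \cap V$ is a countable disjoint union of open intervals $J_n$, and on each $J_n$ the curve $h$ is affine, so each $h(J_n)$ lies in a line segment and has two-dimensional Lebesgue measure zero. Since $h$ is continuous and $[0,1] \setminus W$ is nowhere dense, one might worry that $h([0,1] \setminus W)$ could still be large; but here I would invoke the extra structure: $[0,1] \setminus W$ is $\mathscr{T}_r$-definable and nowhere dense, hence (by Proposition~\ref{prop:porous}, using that it is $r$-recognizable in the parameter-free case, and the advice-automaton remark for the general case) it is Lebesgue null in $[0,1]$. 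A continuous — indeed, H\"older continuous by Fact~\ref{fact:csv} applied coordinatewise, or simply Lipschitz after noting the relevant bases match — image of a Lebesgue null set has measure zero, so $h([0,1] \setminus W)$ is Lebesgue null in $\R^2$. Combining, $h([0,1]) = \bigcup_n h(J_n) \,\cup\, h([0,1]\setminus W)$ is a countable union of Lebesgue null subsets of $\R^2$, hence Lebesgue null, so it cannot be all of ${[0,1]}^2$. This contradicts surjectivity.

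The main obstacle is the measure-theoretic control of $h$ on the exceptional set $[0,1] \setminus W$: a priori a continuous function can map a null set onto a set of positive measure (Cantor's function does essentially this in one dimension), so I must genuinely use regularity — that $h$, being $\mathscr{T}_r$-definable (equivalently, recognized by an automaton, possibly with advice), is H\"older continuous with an exponent governed by the bases, which prevents such measure blow-up when mapping a set of Hausdorff dimension~$< 1$; alternatively, I use that $[0,1] \setminus W$ is not merely null but \emph{porous} (Lemma~\ref{lem:porous}), so its image under a Lipschitz map still has Lebesgue measure zero in $\R^2$ by the Hausdorff-dimension bound. A secondary, purely bookkeeping point is the passage between the parameter-free case and the case with parameters, which is handled exactly as in the proof of Theorem~\ref{thm:reals4} via Lemma~\ref{lem:zero-buchi}. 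Once these are in place the argument is short.
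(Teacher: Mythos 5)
Your argument is correct, but it takes a much more roundabout route than the paper's. The paper's proof is essentially two lines: reduce to a parameter-free definable space filling curve via Lemma~\ref{lem:zero-buchi} (the same argument as in Theorem~\ref{thm:reals4}), note that by Fact~\ref{fact:csv} each coordinate of a parameter-free definable continuous map $[0,1]\to{[0,1]}^2$ is Lipschitz, and conclude at once because a Lipschitz map cannot raise Hausdorff dimension, so its image cannot be the two-dimensional square. Your proof also uses Lipschitzness (via Fact~\ref{fact:csv}) and the same dimension fact, but only applies it to the exceptional set, while handling the dense open set via Theorem~\ref{thm:reals4} plus a countable-union-of-segments computation and Lemma~\ref{lem:porous}/Proposition~\ref{prop:porous}; all of that machinery is superfluous, since the very fact you invoke at the end --- Lipschitz images of sets of Hausdorff dimension at most $1$ are Lebesgue null in $\R^2$ --- applied to all of $[0,1]$ already finishes the proof without any decomposition. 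Two cautions on your bookkeeping: the porosity/nullity of $[0,1]\setminus W$ genuinely requires $r$-recognizability, i.e.\ parameter-freeness (the ``advice-automaton remark'' you cite does not give porosity of advice-recognizable nowhere dense sets), so the reduction via Lemma~\ref{lem:zero-buchi} must be carried out \emph{first} rather than treated as a detachable afterthought; and Fact~\ref{fact:csv} is likewise stated only for regular (hence parameter-free definable) functions, so the Lipschitz step also lives on the parameter-free side. Since you do perform that reduction, the proof goes through; what the paper's approach buys is brevity and independence from Theorem~A, while yours illustrates (at some cost) how the locally affine structure alone already confines the image to a null set.
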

\begin{proof}[Sketch of proof]
An argument similar to that given in the proof above shows that if $\mathscr{T}_r$ defines a space filling curve, then there is a space filling curve that is $\mathscr{T}_r$ definable without parameters. It follows from Fact~\ref{fact:csv} that every function $f : [0,1] \to {[0,1]}^2$ that is $\mathscr{T}_r$ definable without parameters is Lipschitz. Such a function cannot be surjective, as Lipschitz maps cannot raise Hausdorff dimension.
\end{proof}

\bibliographystyle{alpha}
\bibliography{ref}

\end{document}